\newtheorem{theorem}{Theorem}
\newtheorem{lemma}{Lemma}
\newtheorem{definition}{Definition}
\newtheorem{remark}{Remark}
\newtheorem{example}{Example}
\newtheorem{proposition}[lemma]{Proposition}
\newtheorem{fact}[lemma]{Fact}
\begin{document}
\bstctlcite{IEEEexample:BSTcontrol}


\title{A Unified Spatially Coupled Code Design: Threshold, Cycles, and Locality}

\author{Homa~Esfahanizadeh$^{*}$, Eshed~Ram$^{*}$, Yuval~Cassuto,~\IEEEmembership{Senior~Member,~IEEE}, and~Lara~Dolecek,~\IEEEmembership{Senior~Member,~IEEE}\vspace{-2.4cm}
\date{}
\markboth{IEEE Transactions on Communications}{Submitted paper}
\thanks{$*$ H.~Esfahanizadeh and E.~Ram equally contributed to this paper. H.~Esfahanizadeh is with the Electrical Engineering and Computer Science Department, Massachusetts Institute of Technology (MIT), Cambridge, MA 02139 USA (email: {homaesf}@mit.edu). E.~Ram and Y.~Cassuto are with Andrew and Erna Viterbi Department of Electrical and computer Engineering, Technion, Haifa, Israel (e-mail:
{s6eshedr}@technion.ac.il; {ycassuto}@ee.technion.ac.il). L.~Dolecek is with the Department of Electrical and Computer Engineering, University of California, Los Angeles, Los Angeles, CA 90095 USA (e-mail: {dolecek}@ee.ucla.edu).}
}
\maketitle

\begin{abstract} 

Spatially-Coupled (SC)-LDPC codes are known to have outstanding error-correction performance and low decoding latency. Whereas previous works on LDPC and SC-LDPC codes mostly take either an asymptotic or a finite-length design approach, in this paper we present a unified framework for jointly optimizing the codes’ thresholds and cycle counts to address both regimes. The framework is based on efficient traversal and pruning of the code search space, building on the fact that the performance of a protograph-based SC-LDPC code depends on some characteristics of the code’s partitioning matrix, which by itself is much smaller than the code’s full parity-check matrix. We then propose an algorithm that traverses all nonequivalent partitioning matrices, and outputs a list of codes, each offering an attractive point on the trade-off between asymptotic and finite-length performance. We further extend the framework to designing SC-LDPC codes with sub-block locality, which is a recently introduced feature offering fast access to sub-blocks within the code block. Our simulations show that our framework results in SC-LDPC codes that outperform the state-of-the-art constructions, and that it offers the flexibility to choose low-SNR, high-SNR, or in-between SNR region as the primary design target.

\end{abstract}

\section{Introduction}
Spatially-coupled low-density parity-check (SC-LDPC) codes are a family of graph-based codes that have attracted a lot of attention thanks to their capacity-approaching performance and low-latency decoding \cite{FelstromIT1999,GallagerLDPC1962}. SC-LDPC codes are constructed by coupling  a series of disjoint block LDPC codes into a single coupled chain. We use circulant-based (CB) LDPC codes as the underlying LDPC block codes due to their simple implementation \cite{TannerIT2004}. 
SC-LDPC codes are known to have many desirable properties, such as threshold saturation \cite{KudekarRichUrb13} and linear-growth of the size of minimal trapping sets in typical codes from an ensemble \cite{Mitchell11}. 
These properties, respectively, imply good bit-error rate (BER) performance in the \emph{waterfall} and \emph{error floor} regions, using the belief-propagation (BP) decoder. 
Recently, \cite{RamCass18b} introduced SC-LDPC codes with a \emph{sub-block locality} feature, where in addition to the usual decoding, the codes can be decoded \emph{locally} in small sub-blocks for fast read access. 

From the asymptotic perspective, density evolution (DE) techniques have been used to study the decoding threshold of SC-LDPC codes, e.g., \cite{LentmaierIT2010,KudekarRichUrb13}, among others. From the finite-length perspective, methodologies for the evaluation and optimization of the number of problematic combinatorial objects are studied, e.g., \cite{EsfahanizadehTCOM2018,AhmedCycles8,MitchellISIT2017,BeemerISITA2016}.
The asymptotic properties (e.g., decoding threshold) of LDPC codes are the dominant performance determinants in the low-SNR region, and the finite-length properties (e.g., number of short cycles) are the dominant ones in the error-floor (high-SNR) region \cite{6928457,article9273663}. 
This is because in the low-SNR region, the performance is typically dominated by the properties of the code’s tree ensemble, while in the high-SNR region the performance depends on the incidence of problematic combinatorial objects in the code’s graph \cite{BaneBook, stoppingset,TomasoniAS,UrbankeFLAnalysis,KarimiBanihashemi2011,RichardsonErrorFloor}.


{
In this paper, we pursue a principled comprehensive approach that combines various design metrics, e.g., targeted SNR region and decoding latency. In fact, any future efficient code design can be incorporated into our proposed framework. Here, we evaluate each candidate code in terms of its exact threshold and cycle-counts, and extract a small set of attractive codes from a large pool of initial candidates. This approach works best when we can examine all possible codes of a given set of design parameters. However, the resulting space of candidates may blow up quickly, thus requiring efficient traversal and pruning methods. One particularly effective such method is to avoid multiple candidates that are equivalent in terms of performance. The elimination of this multiplicity can reduce the candidate list -- and in turn the complexity of code design -- by orders of magnitude.} 

{To enable the aforementioned comprehensive design approach, we formalize in this paper the notion of \textit{performance-equivalent} SC-LDPC codes. It is well-known that two linear codes are equivalent in terms of most performance figures if one's parity-check matrix can be obtained from the other's by a sequence of row and column permutations. In Section~\ref{Sec:CC_opt} we prove that for SC-LDPC codes, the same property holds for the code's {\em partitioning matrix}, which is much smaller than the full (coupled+lifted) code matrix. This motivates our exact derivation in Section~\ref{Sec:equivalence} of the number of nonequivalent binary matrices with up to 3 rows, which captures all regular SC-LDPC constructions with unit memory and up to $3$ check nodes in the uncoupled protograph (regular unit-memory codes have binary partitioning matrices). This exact count can be easily translated to an efficient traversal of all nonequivalent matrices, thus enabling an efficient code design. Indeed, we detail in Section~\ref{sub:algo:trade-off} a joint threshold+cycle code-design algorithm, which outputs a final list of candidates with the property that each candidate has: 1) the best threshold among all codes with equal or better cycle-counts, and 2) the best cycle-count among all codes with equal or better thresholds.}

{We extend the joint threshold+cycle design approach to SC-LDPC codes with sub-block locality in Section~\ref{Sec:localglobal_des}. Adding the local-code structure to the partitioning matrix may significantly increase the code design search space and complexity. We exploit the sub-block structure to further reduce the code-design complexity: we propose to replace the threshold calculation by an efficiently computable proxy threshold, and prove analytical results on the threshold and cycle-counts of two natural irregular structures of the local code. Finally, we show simulation results of codes from our proposed design algorithms, for both standard codes (no locality) and codes with sub-block locality. Our codes are shown to outperform prior constructions based on cutting-vector and optimal-overlap partitioning.} {The code and data used in the paper are available for public access at \href{https://github.com/hesfahanizadeh/Unified_SC_LDPCL/}{https://github.com/hesfahanizadeh/Unified\_SC\_LDPCL/}}.


\section{Preliminaries}
Throughout this paper, matrices, vectors, and scalars are represented by uppercase bold letters (e.g., $\mathbf{A}$), lowercase italic letters with an overline (e.g., $\overline{a}$), and lowercase italic letters (e.g., $a$), respectively. Sets and functions are represented by calligraphic italic letters (e.g., $\mathcal{A}$) and uppercase italic letters (e.g., $A(\cdot)$), respectively. The matrix transpose operation, the cardinality of a set, and the factorial function are denoted by $(\cdot)^T$, $|\cdot|$, and $(\cdot)!$, respectively. 
The notation $\mathbf{A}=[a_{i,j}]$ refers to a matrix $\mathbf{A}$ where $a_{i,j}$ is the element in row $i$ and column $j$. We denote the all-one and all-zero matrices with size $m\times n$ as $\mathbf{1}_{m\times n}$ and $\mathbf{0}_{m\times n}$, respectively.

\subsection{{Construction of SC-LDPC Codes}}\label{Sub:SC-LDPC}

An LDPC protograph is a small bipartite graph represented by a $\gamma\times\kappa$ bi-adjacency proto-matrix $\mathbf{B}=[b_{i,j}]$ (where $\gamma$ and $\kappa$ are positive integers and $\gamma<\kappa$), i.e., there is an edge between check node (CN) $i$ and variable node (VN) $j$ if and only if $b_{i,j}=1$. In general, $b_{i,j}>1$ (parallel edges) are allowed in the protograph. In this work, without loss of generality\footnote{This is because parallel edges can be avoided by duplicating protograph nodes.}, we focus on $b_{i,j}\in \{0,1\}$.
A sparse parity-check matrix $\mathbf{H}$ (or its corresponding representation as a Tanner graph) is generated from $\mathbf{B}$ by a \textit{lifting} operation with a positive integer $z$ that is called the circulant size. 
The rows (resp. columns) of $\mathbf{H}$ corresponding to row $i\in\{1,\ldots,\gamma\}$ (resp. column $j\in\{1,\ldots,\kappa\}$) of $\mathbf{B}$, are called row group $i$ (resp. column group $j$). 

In this paper, we use \emph{circulant-based} (CB) lifting \cite{TannerIT2004}, where the circulants, each with size $z\times z$, are either zero or an identity matrix shifted by a certain number of units to the left, described by the \textit{circulant power}.
The powers of the circulants are represented by the \textit{power matrix} $\mathbf{C}=[c_{i,j}]$ of size $\gamma\times\kappa$, such that the non-zero elements in row group $i$ and column group $j$ in $\mathbf{H}$ form a single-shift identity matrix raised to the power $c_{i,j}$. In our simulations, the power matrix $\mathbf{C}=[c_{i,j}]$ is defined such that $c_{i,j}=\alpha{\cdot}i{\cdot}j$, for a constant positive integer $\alpha$. 
This choice ensures that no length-$4$ cycle (cycle-$4$ for short) exists when the circulant size is a prime number \cite{LaraAB}. Thus, this paper focuses on length-$6$ cycles (cycles-$6$ for short) as the most problematic cycles.

Let proto-matrix $\mathbf{B}$ be the parity-check matrix of a protograph block code. The matrix of an SC-LDPC protograph \cite{FelstromIT1999} with memory $m$ and coupling length $l$ is constructed from $\mathbf{B}$ by partitioning it into $m+1$ matrices $\mathbf{B}_0$,\dots,$\mathbf{B}_m$ such that $\mathbf{B}=\sum_{k=0}^{m}\mathbf{B}_k$, and stacking $l$ \textit{replicas} of $[\mathbf{B}_0;\mathbf{B}_1;\dots;\mathbf{B}_m]$ (where `;' represents vertical concatenation here) on the diagonal of the coupled proto-matrix $\mathbf{B}_\text{SC}$. For proto-matrix $\mathbf{B}$ of size $\gamma\times\kappa$, the resulting coupled proto-matrix $\mathbf{B}_\text{SC}$ has size $(l+m)\gamma\times l\kappa$.
We represent this partitioning by a matrix $\mathbf{P}=\left[p_{i,j}\right]$, called \textit{partitioning matrix}, where $p_{i,j}\in\{0,1,\dots,m,{\star}\}$. If $p_{i,j}={\star}$, then there is a zero in row $i$ and column $j$ of $\mathbf{B}$. Otherwise, the non-zero element is assigned to $\mathbf{B}_{p_{i,j}}$. This description captures both regular and irregular SC constructions.
In this work, we focus on SC codes with $m=1$, thus the partitioning operation determines which (non-zero) elements are assigned to $\mathbf{B}_0$ and which ones are assigned to $\mathbf{B}_1$ (when referring to lifted graphs, we use $\mathbf{H}_0$ and $\mathbf{H}_1$). 

\subsection{{Extension to Codes with Sub-Block Locality (SC-LDPCL)}}\label{Sub:SC-LDPCL}
One drawback of SC codes is their typically large block size, which increases the complexity and latency of decoding. To mitigate this obstacle, SC-LDPC codes with \emph{sub-block locality} were developed and studied in \cite{EshedTIT}. In these codes, the block is partitioned into smaller sub-blocks, and the code is designed for both sub-block and full-block decoding. The former, called \emph{local decoding}, allows low-latency access, and the latter, called \emph{global decoding}, provides the usual high-reliability of SC-LDPC codes. There is flexibility in how to partition the code block to sub-blocks, and in this paper we define each sub-block to comprise all the VNs in one replica of the coupled SC code (hence the parameter $l$ is also the number of sub-blocks). In local decoding, only CNs that are connected solely to VNs within the sub-block are used, and we call them \emph{local CNs} (LCNs). All other CNs are called \emph{coupling CNs} (CCNs) \cite{EshedTIT}. In another view, rows in $\mathbf{P}$ that have both $0$ and $1$ entries result in CCNs in the coupled matrix; we mark the number of such rows as $\gamma_c$. The LCNs are specified by $\gamma_l\triangleq \gamma-\gamma_c$ rows in $\mathbf{P}$ that do not mix $0$ and $1$ entries, and are designed to induce a non-zero asymptotic local-decoding threshold \cite{EshedTIT}. Without loss of generality, the rows of $\mathbf{P}$ are ordered such that the first $\gamma_c$ rows correspond to CCNs.\vspace{-0.3cm}

\subsection{Asymptotic Analysis of Protographs: The EXIT Method}\label{Sub:EXIT}
The EXtrinsic Information Transfer (EXIT) method \cite{TenBrink04} is a useful tool for analyzing and designing LDPC codes in the asymptotic regime over the AWGN channel with the channel parameter $\sigma$. Let $ J\colon [0,\infty)\to [0,1) $ be a function that represents the mutual information between the channel input and a corresponding message passing in the Tanner graph.
For a VN of degree $d_v$ in the protograph, with incoming EXIT values $\{J_{i}\}_{i=1}^{d_v-1}$, the VN$\rightarrow$CN EXIT value is
\begin{align}\label{Eq:J VN}
J_{out}^{(V)}\!\left (s_{ch},J_{1},\ldots,J_{d_v\!-\!1}\right )\!=\!J\!\!\left (\!\sqrt{\sum_{i=1}^{d_v-1}\!\!\left (J^{-1}(J_{i})\right )^2\!+\!s_{ch}^2}\right )\!,
\end{align}
where $s_{ch}^2=4/\sigma^2$.
For a CN of degree $d_c$ in the protograph with incoming EXIT values $\{J_{j}\}_{j=1}^{d_c-1}$, the CN$\rightarrow$VN EXIT value is approximated by
\begin{align}\label{Eq:J CN}
\begin{split}
J_{out}^{(C)}\!(J_{1},\ldots,J_{d_c-1})\! =\! 1\!-\!J_{out}^{(V)}\!\left (0,1\!-\!J_{1},\ldots,1\!-\!J_{d_c\!-\!1}\right ).
\end{split}
\end{align}
The functions $J_{out}^{(V)}$ and $J_{out}^{(C)}$ are monotonically non decreasing with respect to all their arguments.
In simulations, we use approximations of $J(\cdot)$ and $J^{-1}(\cdot)$  \cite{TenBrink04}. 
By alternately applying \eqref{Eq:J VN} and \eqref{Eq:J CN} for every edge in a protograph and by varying $\sigma$, a threshold value $ \sigma^*$ can be found such that all EXIT values on VNs approach $ 1 $ as the number of iterations increases if and only if $ \sigma<\sigma^* $ \cite{Liva}. We mark the threshold of a protograph $\mathbf{B}$ by $ \sigma^*(\mathbf{B}) $.

\subsection{Short-Cycle Optimization and Overlap Parameters}
\label{Sub:cycles_opt}
Short cycles have a negative impact on the performance of block-LDPC and SC-LDPC codes under BP decoding: 1) they affect the independence of the messages exchanged on the graph, 2) they enforce upper-bounds on the minimum distance, and 3) they form combinatorial objects in the Tanner graphs that fail the iterative decoder in different known ways \cite{SmarandacheIT2012,EsfahanizadehTCOM2018}.
\begin{definition}\label{def_ov_par}
Consider a binary matrix $\mathbf{B}$. A degree-$d$ overlap parameter $t_{\{i_1,\dots,i_d\}}$ is the number of columns in which all rows of $\mathbf{B}$ indexed by ${\{i_1,\dots,i_d\}}$ have $1$s.
\end{definition}

The overlap parameters contain all the information we need to find the number of cycles in the graph represented by the matrix. We are particularly interested in cycles-$6$ (i.e., cycles with 6 nodes), as they are the shortest cycles for practical LDPC codes (most practical high-rate LDPC codes, in particular the codes in this paper, are designed to have girth at least $6$). Consider a binary matrix $\mathbf{B}$ with $\gamma$ rows and $\kappa$ columns. The number of cycles-$6$ in the graph of matrix $\mathbf{B}$ can be expressed in terms of the overlap parameters of matrix $\mathbf{B}$ as follows: 
\begin{equation}\label{eq:cyclesA}
    F(\mathbf{B})=\sum_{\{i_1,i_2,i_3\}\subseteq \{1,\dots,\gamma\}}\hspace{-0.8cm}A(t_{\{i_1,i_2,i_3\}},t_{\{i_1,i_2\}},t_{\{i_1,i_3\}},t_{\{i_2,i_3\}})\,,
\end{equation}
where $A$ is given by (see \cite{EsfahanizadehTCOM2018})
\begin{equation}\label{equ_A}\vspace{-0.0cm}
\begin{split}
A&(t_{\{i_1,i_2,i_3\}},t_{\{i_1,i_2\}},t_{\{i_1,i_3\}},t_{\{i_2,i_3\}})=\left(t_{\{i_1,i_2,i_3\}}[t_{\{i_1,i_2,i_3\}}-1]^+[t_{\{i_2,i_3\}}-2]^+\right)\\
+&\left(t_{\{i_1,i_2,i_3\}}(t_{\{i_1,i_3\}}-t_{\{i_1,i_2,i_3\}})[t_{\{i_2,i_3\}}-1]^+\right)+\left((t_{\{i_1,i_2\}}-t_{\{i_1,i_2,i_3\}})t_{\{i_1,i_2,i_3\}}[t_{\{i_2,i_3\}}-1]^+\right)\\
+&\left((t_{\{i_1,i_2\}}-t_{\{i_1,i_2,i_3\}})(t_{\{i_1,i_3\}}-t_{\{i_1,i_2,i_3\}})t_{\{i_2,i_3\}}\right),\vspace{-0.2cm}
\end{split}
\end{equation}
and, $[\alpha]^+\triangleq\max\{\alpha,0\}$. The optimization problem for identifying the optimal overlap parameters, and consequently the optimal partitioning, for designing SC-LDPC protographs with minimum number of cycles-$6$ is presented in \cite{EsfahanizadehTCOM2018}. The approach is called the optimal overlap (OO) partitioning, and is one of the baselines in our experimental results. 

\section{Reducing Search Space: Equivalent Binary Matrices}\label{Sec:equivalence}

In this section, we explore the space of
all possible binary matrices
of a given size $\gamma\times\kappa$. {In the next section, these binary matrices will correspond to partitioning matrices defining the SC-LDPC codes, but in the meantime, it will be instructive to think about these matrices as parity-check matrices of some protograph-based code (called proto-matrices).} We introduce a combinatorial representation that allows to significantly reduce the search space size by capturing the equivalence among the codes and only keeping one candidate from each class of equivalent codes. Equivalent codes are codes whose proto-matrices can be obtained from one another by a sequence of row and column permutations. This equivalence definition is motivated by the fact that permutation of rows and columns in proto-matrices
affect neither the asymptotic threshold nor the number of cycles in the protograph. 
In Section~~\ref{Sec:CC_opt}, we theoretically prove that equivalence under this definition for partitioning matrices implies performance-equivalent SC-LDPC codes. 
Using a new technique for representing the binary matrices, 
we only consider one code from each equivalence class, thereby significantly reducing the search complexity.\vspace{-0.2cm}


\begin{definition}[Column and Row Permutation]\label{def:perm_defs}
A column permutation of matrix $\mathbf{A}$ with $\kappa$ columns is denoted by a vector $\overline{\pi}_c=[\rho_1,\dots,\rho_{\kappa}]$, that is a permutation of the elements in the vector $[1,\dots,\kappa]$. When $\mathbf{A}\xrightarrow[]{\overline{\pi}_c}\mathbf{A}'$, $\mathbf{A}'$ is obtained from $\mathbf{A}$ such that the $j$-th column of $\mathbf{A}$ is the $\rho_j$-th column of $\mathbf{A}'$. Similarly, a row permutation of matrix $\mathbf{A}$ with $\gamma$ rows is denoted by a vector $\overline{\pi}_r=[\nu_1,\dots,\nu_{\gamma}]$, that is a permutation of the elements in the vector $[1,\dots,\gamma]$. When $\mathbf{A}\xrightarrow[]{\overline{\pi}_r}\mathbf{A}'$, $\mathbf{A}'$ is obtained from $\mathbf{A}$ such that the $i$-th row of $\mathbf{A}$ is the $\nu_i$-th row of $\mathbf{A}'$.\vspace{-0.2cm}
\end{definition}

\begin{definition}[Equivalent Matrices]\label{Def: equivalenc}
Two binary matrices $\mathbf{A}$ and $\mathbf{A}'$ are column-wise equivalent (resp., row-wise equivalent) if they can be obtained by column (resp., row) permutations of each other. Two binary matrices $\mathbf{A}$ and $\mathbf{A}'$ are equivalent if they can be derived from each other by a sequence of row and column permutations.\vspace{-0.2cm}
\end{definition}

\begin{remark}
We use the notion of equivalence in this paper to highlight that neither decoding threshold nor the number of combinatorial objects, e.g., cycles, absorbing sets \cite{MitchellISIT2014}, trapping sets \cite{6847681}, etc., change with a sequence of row and column permutations of a binary matrix. 
\end{remark}

We present an efficient combinatorial approach for identifying the nonequivalent binary matrices given their size. Consider a binary matrix $\mathbf{B}$ with size $\gamma\times\kappa$. There are $2^\gamma$ distinct choices for each column of $\mathbf{B}$, i.e., $[0,0,\dots,0]^T$, $[0,0,\dots,1]^T$, \dots, $[1,1,\dots,1]^T$. The set of all binary matrices with size $ \gamma\times\kappa $ is of cardinality $ 2^{\gamma\kappa} $. In what follows, we show that the search space is effectively much smaller due to the equivalence among matrices. 
Our goal is to identify the set of nonequivalent binary matrices with $\gamma$ rows and $\kappa$ columns, denoted by $\mathcal{K}_{\kappa,\gamma}$, and to find a closed-form expression for its cardinality. This reduction, as we numerically verify, combined with an algorithm for iterating over nonequivalent binary matrices, allows a significantly more efficient optimization of LDPC protographs in terms of short cycles and thresholds.

\begin{definition}[Column Type]
The type of a column of a binary matrix is defined as the decimal representation of the binary vector with the top element being the most significant bit.
\end{definition}

\begin{definition}[Column Distribution]\label{col_dist}
We associate with matrix $\mathbf{B}\in\{0,1\}^{\gamma\times\kappa}$ a vector 
$\overline{n}(\mathbf{B})= \left [n_0,n_1,\ldots,n_{2^\gamma-1}\right ]$ such that for every $ i\in \{0,1,\ldots,2^{\gamma}-1\} $, $ n_i $ is the number of \emph{columns} in $\mathbf{B}$ with type $i$. 
We call $ \overline{n}(\mathbf{B}) $
the \emph{column distribution} of $\mathbf{B}$, where the term stems from the fact that for every matrix $\mathbf{B}\in\{0,1\}^{\gamma\times\kappa}$, the entries of $ \overline{n}(\mathbf{B}) $ sum up to $ \kappa $, i.e., $ \sum_{i=0}^{2^\gamma-1}n_i = \kappa $. 
\end{definition}

\begin{example}\label{Ex:col dist}
    Consider matrix 
$ \mathbf{B} = 
    \left[\begin{array}{ccccc}
    0 & 1 & 0 & 1 & 1\\
    0 & 1 & 1 & 1 & 1\\
    1 & 1 & 0 & 1 & 0\end{array}\right].$
Then, $ \overline{n}(\mathbf{B}) = [0,1,1,0,0,0,1,2]$.
\end{example}

Since column permutations do not change the column distribution of a matrix, we identify the number of column-wise nonequivalent matrices by counting the number of distinct column distributions. 
We note that a family of column-wise nonequivalent matrices can include row-wise equivalent matrices. At the same time, by excluding column-wise equivalent multiplicities, some row-wise equivalent multiplicities will also be excluded. For example, consider the $2\times 2$ binary matrices: for matrix $[1 \;0 ; 1\; 1]$, no column permutation will lead to the row-permuted version $[1\; 1 ; 1\; 0]$; however, for matrix $[1\; 0; 0\; 1]$, swapping the columns will yield $[0\; 1; 1\; 0]$, i.e., swapping rows.
The relation between families of column-wise nonequivalent matrices and row-wise nonequivalent matrices is not trivial, and how to derive the family of nonequivalent matrices is one contribution of this paper.

\subsection{Column-Wise Nonequivalent Binary Matrices} 
In this part, we explore the family of column-wise nonequivalent binary matrices, their connection to the stars-and-bars problem in combinatorics \cite{10.5555/2124415}, and their connection to the overlap parameters in \cite{EsfahanizadehTCOM2018}. 
We also derive a closed-form expression for the number of column-wise nonequivalent matrices and describe how to simply iterate over them to study their properties, e.g., threshold, cycle-counts, etc.

\begin{lemma}\label{lemma:StarsnBars}
	Let $\gamma$ and $\kappa $ be two positive integers, and let $ \mathcal S_{\kappa,\gamma} $ be the set of all distinct column distributions for a $ \gamma\times\kappa$ binary matrix, i.e.,
	\[
	\mathcal S_{\kappa,\gamma} = \left \{ \left [n_0,n_1,\ldots,n_{2^{\gamma}-1}\right]\colon n_i\geq0,\; \sum_{i=0}^{2^\gamma-1}n_i = \kappa\right  \} .
	\]
	Then,
	\begin{align}\label{Eq:StarsnBars}
	\left|\mathcal S_{\kappa,\gamma} \right | = \binom{\kappa+2^\gamma-1}{\kappa} =\binom{\kappa+2^\gamma-1}{2^\gamma-1}.  
	\end{align}
\end{lemma}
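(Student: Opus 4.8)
The plan is to recognize $\mathcal{S}_{\kappa,\gamma}$ as exactly the set of weak compositions of the integer $\kappa$ into $2^\gamma$ nonnegative parts, and to count it via the standard stars-and-bars bijection. Writing $N \triangleq 2^\gamma$ for brevity, an element $[n_0,n_1,\ldots,n_{N-1}]$ of $\mathcal{S}_{\kappa,\gamma}$ is a tuple of $N$ nonnegative integers summing to $\kappa$; there is no further constraint, since for any such tuple one can build a $\gamma\times\kappa$ binary matrix realizing it (place $n_i$ columns of type $i$, in any order), so the map $\mathbf{B}\mapsto\overline{n}(\mathbf{B})$ is onto $\mathcal{S}_{\kappa,\gamma}$ and the description of the set is tight.

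First I would set up the classical encoding: represent a tuple $[n_0,\ldots,n_{N-1}]$ by a binary string consisting of $\kappa$ "stars" and $N-1$ "bars", where the $n_i$ stars between the $i$-th and $(i{+}1)$-th bar record the $i$-th part. This is a bijection between $\mathcal{S}_{\kappa,\gamma}$ and the set of arrangements of $\kappa$ indistinguishable stars and $N-1$ indistinguishable bars in a row of length $\kappa + N - 1$. Such an arrangement is determined by choosing which $\kappa$ of the $\kappa + N - 1$ positions hold stars (equivalently, which $N-1$ hold bars), giving
\begin{align}\label{Eq:SnBcount}
|\mathcal{S}_{\kappa,\gamma}| = \binom{\kappa + N - 1}{\kappa} = \binom{\kappa + N - 1}{N - 1},
\end{align}
and substituting $N = 2^\gamma$ yields \eqref{Eq:StarsnBars}. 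The second equality in \eqref{Eq:SnBcount} is just the symmetry $\binom{a}{b}=\binom{a}{a-b}$.

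Alternatively, and perhaps cleaner to write out, I would give the substitution proof: the map $(n_0,\ldots,n_{N-1})\mapsto (m_1,\ldots,m_{N-1})$ with $m_j = j + \sum_{i<j} n_i$ is a bijection from $\mathcal{S}_{\kappa,\gamma}$ onto the set of strictly increasing sequences $1 \le m_1 < m_2 < \cdots < m_{N-1} \le \kappa + N - 1$, i.e., onto the $(N{-}1)$-subsets of $\{1,\ldots,\kappa+N-1\}$, whose count is $\binom{\kappa+N-1}{N-1}$; one checks injectivity and surjectivity by exhibiting the inverse $n_0 = m_1 - 1$, $n_j = m_{j+1} - m_j - 1$ for $1 \le j \le N-2$, $n_{N-1} = \kappa + N - 1 - m_{N-1}$, and verifying the $n_i$ are nonnegative and sum to $\kappa$.

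There is no real obstacle here: the only thing requiring a sentence of care is confirming that every tuple in the defining set of $\mathcal{S}_{\kappa,\gamma}$ is actually attained by some $\gamma\times\kappa$ binary matrix (so that the stated set equals the true image of $\overline{n}(\cdot)$), and that the stars-and-bars correspondence is genuinely a bijection rather than merely a surjection — both are immediate once the inverse map is written down. I would therefore keep the proof to the explicit-bijection argument, state the inverse, and conclude with the binomial identity.
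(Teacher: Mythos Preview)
Your proposal is correct and follows exactly the same approach as the paper: the paper's proof is a one-line invocation of the elementary stars-and-bars method, identifying the $2^\gamma$ column types as bins and the $\kappa$ columns as the items to distribute. Your write-up simply spells out the bijection (and an alternative one) in more detail than the paper does, but the underlying idea is identical.
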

\begin{proof}
The proof follows by applying the elementary stars-and-bars method \cite{10.5555/2124415}, where each \textit{bin} represents a column type and the \textit{elements} are the column indices.    
\end{proof}

We highlight that $|\mathcal{S}_{\kappa,\gamma}|$ is the number of column-wise nonequivalent binary matrices of size $\gamma\times\kappa$. 
A recursive algorithm that iterates over all column-wise nonequivalent binary matrices is given in the Appendix as Algorithm~\ref{star_bar_sol}.

\begin{example}\label{Ex:starsnbars}
	Let $ \kappa=11$ and $\gamma=3 $. Then, there are $ \binom{18}{7} = 31,824 $ column-wise nonequivalent binary matrices with $ \gamma $ rows and $ \kappa  $ columns, which is $ 2.7{\cdot}10^5$ times smaller then the entire space $  \{0,1\}^{\gamma \kappa }$. 
\end{example}

In the next subsections, we further reduce the search space by taking into account the row permutations. In Lemma~\ref{lemma:StarsnBars}, we identified the set of distinct column distributions. It is somewhat challenging to calculate how many of these distributions result in equivalent matrices and thus can still be obtained from each other by a sequence of row and column permutations. 
In what follows, we complete this derivation, and to keep the analysis in this paper tractable, we derive the closed-form expressions only for $ \gamma\in\{2,3\} $ ($ \gamma=1 $ case is trivial).

\subsection{Nonequivalent Binary Matrices With $\gamma=2$}
\label{Sub:gamma2}
In the following lemma, we first state necessary and sufficient conditions for two column distributions to correspond to a pair of equivalent matrices. Then, in a subsequent theorem, we show how to use this lemma to reduce the search space of column distributions such that it consists only of those corresponding to nonequivalent matrices, and we identify the cardinality of this reduced search space. 
\begin{lemma}\label{lemma:gamma2}
	Two binary matrices with  $\gamma=2 $ rows and with column distributions $ [n_0,n_1,n_2,n_3] $ and $ [m_0,m_1,m_2,m_3] $ are row-wise equivalent if and only if $n_0=m_0$, $n_3=m_3$, and either $(n_1=m_1 ,n_2=m_2)$ or $(n_1=m_2 ,n_2=m_1)$.
\end{lemma}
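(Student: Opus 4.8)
The plan is to characterize row-wise equivalence of $\gamma=2$ matrices purely in terms of how a row permutation acts on column types. With only $\gamma=2$ rows there are just two nontrivial row permutations: the identity and the swap $\sigma=(1\,2)$. The identity fixes every column distribution, so two matrices are row-wise equivalent if and only if either they have the same column distribution, or one is obtained from the other by applying $\sigma$ to every column. Hence the entire argument reduces to tracking what $\sigma$ does to each of the four column types $0=[0;0]$, $1=[0;1]$, $2=[1;0]$, $3=[1;1]$.

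First I would observe that swapping the two rows fixes the columns $[0;0]$ and $[1;1]$ (types $0$ and $3$), and interchanges $[0;1]$ and $[1;0]$ (types $1$ and $2$). Therefore, if $\mathbf{B}$ has column distribution $[n_0,n_1,n_2,n_3]$, then $\mathbf{B}$ after the row swap has column distribution $[n_0,n_2,n_1,n_3]$. Combined with the observation above, $\mathbf{B}$ and $\mathbf{B}'$ are row-wise equivalent iff the column distribution of $\mathbf{B}'$ equals either $[n_0,n_1,n_2,n_3]$ or $[n_0,n_2,n_1,n_3]$, which is exactly the stated condition: $n_0=m_0$, $n_3=m_3$, and $(n_1,n_2)$ equals $(m_1,m_2)$ or $(m_2,m_1)$.

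For the ``only if'' direction I would argue that a row-wise equivalence is by Definition~\ref{Def: equivalenc} a composition of row permutations, hence either the identity or $\sigma$ (and applying $\sigma$ twice returns to the identity), and each case yields one of the two admissible column distributions by the computation just described; column distribution is a complete invariant for column-wise equivalence by the discussion following Definition~\ref{col_dist}, so matrices with these column distributions are indeed the full row-wise-equivalence class. For the ``if'' direction, if the column-distribution condition holds, then $\mathbf{B}'$ is column-wise equivalent either to $\mathbf{B}$ or to the row-swapped $\mathbf{B}$, and in both cases a sequence of row and column permutations carries $\mathbf{B}$ to $\mathbf{B}'$.

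The only mild subtlety — and the step I would be most careful about — is bookkeeping the correspondence between ``applying $\sigma$ to the rows of the whole matrix'' and ``permuting the four type-counts,'' i.e., verifying that the induced action of $S_2$ on $\{0,1,2,3\}$ is the transposition $(1\,2)$ fixing $0$ and $3$; everything else is immediate. It also helps to note explicitly that column-wise equivalence does not enlarge the class here beyond what is already captured, so no further distributions need to be considered. No genuine obstacle arises because $\gamma=2$ makes the row-permutation group triv, which is precisely why the paper treats $\gamma=3$ separately.
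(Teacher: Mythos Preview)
Your approach is essentially identical to the paper's: both proofs observe that for $\gamma=2$ the row-permutation group is $\{\mathrm{id},(1\,2)\}$, and then simply track that the swap fixes column types $0$ and $3$ while interchanging types $1$ and $2$. The paper's proof is terser (it stops after that observation), whereas you spell out both directions and flag the minor terminological subtlety about row-wise versus full equivalence; this extra care is fine but not a different argument.
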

\begin{proof}
For $\gamma=2 $, a row permutation is either the identity permutation $\overline{\pi}_r=[1,2]$ or a row swap $\overline{\pi}_r=[2 ,1]$. In the latter, columns with type $0$, i.e., $[0,0]^T$, and type $3$, i.e., $[1,1]^T$, are invariant to the permutation. However, columns with type $1$, i.e., $[0,1]^T$, map to columns with type $2$, i.e., $[1,0]^T$, and vice versa. This concludes the proof.
\end{proof}

In view of Lemma~\ref{lemma:gamma2}, if a column distribution $ [n_0,n_1,n_2,n_3]\in  \mathcal S_{\kappa,2} $ has $ n_1\neq n_2 $, then there exists a different column distribution $ [n_0,n_2,n_1,n_3]\in  \mathcal S_{\kappa,2} $ (i.e., $ n_2 $ and $ n_1 $ are swapped), such that they represent equivalent matrices. This fact readily leads to the following:

\begin{theorem}\label{theorem:gamma2}
	Let $\gamma=2$ and $\kappa $ be a positive integer, and let $ \mathcal K_{\kappa,2} $ be the set of column distributions for all $ 2\times\kappa$ nonequivalent binary matrices. Then,
\begin{subequations}
	\begin{align}\label{Eq:KExplicit1}
	\mathcal K_{\kappa,2} = \left \{ \left[n_0,n_1,n_2,n_3\right]\colon n_i\geq0,\; \sum_{i=0}^{3}n_i = \kappa, \; n_1\leq n_2\right  \},
	\end{align}
	and
	\begin{align}\label{Eq:KExplicit2}
	\left|\mathcal K_{\kappa,2} \right | = \frac12\left (\sum_{i=0}^{\lfloor\kappa/2\rfloor} (\kappa-2i +1) + \binom{\kappa+3}{3}\right ).  
	\end{align}
\end{subequations}
\end{theorem}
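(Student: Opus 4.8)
The plan is to establish the set-theoretic description \eqref{Eq:KExplicit1} first, and then count it to obtain \eqref{Eq:KExplicit2}. For the first part, by Lemma~\ref{lemma:StarsnBars} every $2\times\kappa$ binary matrix has a column distribution in $\mathcal S_{\kappa,2}$, and by Lemma~\ref{lemma:gamma2} two such matrices are row-wise equivalent exactly when their distributions agree on $n_0$ and $n_3$ and have $(n_1,n_2)$ equal up to a swap. Since column permutations leave the column distribution unchanged, \emph{full} equivalence (rows and columns) of two $2\times\kappa$ matrices is equivalent to row-wise equivalence of their column distributions, i.e., to the same condition. Hence picking the canonical representative $n_1\le n_2$ from each equivalence class gives exactly the set on the right-hand side of \eqref{Eq:KExplicit1}: every distribution in $\mathcal S_{\kappa,2}$ is equivalent to one with $n_1\le n_2$, and no two distinct distributions with $n_1\le n_2$ are equivalent (if $[n_0,n_1,n_2,n_3]$ and $[m_0,m_1,m_2,m_3]$ both satisfy this and are equivalent, then $n_0=m_0$, $n_3=m_3$, and the only swap option $n_1=m_2,n_2=m_1$ combined with $n_1\le n_2$, $m_1\le m_2$ forces $n_1=m_1$, $n_2=m_2$).

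For the cardinality, I would count $\bigl|\mathcal K_{\kappa,2}\bigr|$ by a standard symmetry argument. Let $T=|\mathcal S_{\kappa,2}|=\binom{\kappa+3}{3}$ be the number of all column distributions, and let $D$ be the number of those that are \emph{fixed} by the swap $n_1\leftrightarrow n_2$, i.e., with $n_1=n_2$. The swap is an involution on $\mathcal S_{\kappa,2}$, so by Burnside (or just by pairing up the non-fixed points) the number of orbits is $\tfrac12(T+D)$, and each orbit contains exactly one representative with $n_1\le n_2$; thus $\bigl|\mathcal K_{\kappa,2}\bigr|=\tfrac12\bigl(\binom{\kappa+3}{3}+D\bigr)$. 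It remains to compute $D$: a fixed distribution has $n_1=n_2=i$ for some $i$ with $2i\le\kappa$, and then $n_0+n_3=\kappa-2i$ has $\kappa-2i+1$ solutions, so $D=\sum_{i=0}^{\lfloor\kappa/2\rfloor}(\kappa-2i+1)$. Substituting gives \eqref{Eq:KExplicit2}.

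The argument is essentially routine once the equivalence-to-distribution correspondence is pinned down; the one place to be careful is the reduction from matrix equivalence to distribution equivalence. Lemma~\ref{lemma:gamma2} is stated for row-wise equivalence, and one must observe that for $\gamma=2$ the column-permutation part of a general equivalence is invisible at the level of column distributions, so adding column permutations does not enlarge the equivalence classes of distributions beyond what the row swap already produces. (This is exactly the subtlety flagged in the paragraph after Example~\ref{Ex:starsnbars}, that column-wise and row-wise multiplicities interact nontrivially; here it happens to be benign because the only nontrivial row permutation has a transparent action on column types.) Once that is clear, the counting is a one-line Burnside computation, and the closed-form sum $\sum_{i=0}^{\lfloor\kappa/2\rfloor}(\kappa-2i+1)$ could be further simplified but the stated form suffices.
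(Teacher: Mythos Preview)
Your proposal is correct and follows essentially the same approach as the paper. The paper likewise derives \eqref{Eq:KExplicit1} directly from Lemma~\ref{lemma:gamma2}, defines the fixed-point set $\mathcal T_{\kappa,2}=\{n_1=n_2\}$ (your $D$), computes $|\mathcal T_{\kappa,2}|=\sum_{i=0}^{\lfloor\kappa/2\rfloor}(\kappa-2i+1)$ by stars-and-bars on $(n_0,n_3)$, and obtains $|\mathcal K_{\kappa,2}|=|\mathcal T_{\kappa,2}|+\tfrac12|\mathcal S_{\kappa,2}\setminus\mathcal T_{\kappa,2}|=\tfrac12(|\mathcal S_{\kappa,2}|+|\mathcal T_{\kappa,2}|)$; your Burnside/involution phrasing is just a relabeling of this same computation.
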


\subsection{Nonequivalent Binary Matrices with $\gamma=3$}

Similar to the discussion in Section~\ref{Sub:gamma2}, we first identify necessary and sufficient conditions for two column distributions to correspond to a pair of equivalent matrices. Then, we use the results to enumerate the reduced search space of column distributions.

\begin{lemma}\label{lemma:gamma3}
	Two binary matrices with $\gamma=3$ and with column distributions $[n_0,n_1,\ldots,n_7]$ and $[m_0,m_1,\ldots,m_7]$ are row-wise equivalent if and only if $n_0=m_0$, $n_7=m_7$, and
	\begin{equation*}
	\begin{split}
	&(n_1=m_1,\; n_2=m_2,\; n_4=m_4,\; n_6=m_6,\;n_5=m_5,\;n_3=m_3),\quad\text{or}\\
	&(n_1=m_1,\; n_2=m_4,\; n_4=m_2,\; n_6=m_6,\;n_5=m_3,\;n_3=m_5),\quad\text{or}\\
	&(n_1=m_2,\; n_2=m_1,\; n_4=m_4,\; n_6=m_5,\;n_5=m_6,\;n_3=m_3),\quad\text{or}\\
	&(n_1=m_2,\; n_2=m_4,\; n_4=m_1,\; n_6=m_5,\;n_5=m_3,\;n_3=m_6),\quad\text{or}\\
	&(n_1=m_4,\; n_2=m_1,\; n_4=m_2,\; n_6=m_3,\;n_5=m_6,\;n_3=m_5),\quad\text{or}\\
	&(n_1=m_4,\; n_2=m_2,\; n_4=m_1,\; n_6=m_3,\;n_5=m_5,\;n_3=m_6).\hspace{0.85cm}
	\end{split} 
	\end{equation*}
\end{lemma}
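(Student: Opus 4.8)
\textbf{Proof plan for Lemma~\ref{lemma:gamma3}.}

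The plan is to mimic the structure of the proof of Lemma~\ref{lemma:gamma2}, but now the relevant group of row permutations is the symmetric group $S_3$ on the three rows, which has six elements rather than two. The key observation is that a row permutation $\overline{\pi}_r$ acts on the columns of a matrix by permuting their types: a column of type $i$ (thought of as a length-$3$ binary vector) is sent to the column whose bits are the permuted bits. Since columns of type $0=[0,0,0]^T$ and type $7=[1,1,1]^T$ are fixed by every element of $S_3$, we always have $n_0=m_0$ and $n_7=m_7$ as necessary conditions. The weight-$1$ types $\{1,2,4\}$ (i.e. $[0,0,1]^T,[0,1,0]^T,[1,0,0]^T$) are permuted among themselves, and likewise the weight-$2$ types $\{3,5,6\}$ (i.e. $[0,1,1]^T,[1,0,1]^T,[1,1,0]^T$), because row permutations preserve column weight. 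So the first step is to record this decomposition and reduce the problem to understanding the induced action of $S_3$ on the two $3$-element orbits.

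Next I would compute, for each of the six permutations in $S_3$, its induced action on the index sets $\{1,2,4\}$ and $\{3,5,6\}$, and verify that the six resulting pairs of simultaneous permutations are exactly the six displayed lines of the lemma. Concretely: the identity gives line~1; the transposition of rows $2,3$ fixes type $1$ and swaps types $2\leftrightarrow4$, and on the weight-$2$ side fixes type $6$ and swaps $3\leftrightarrow5$ — wait, one must be careful with which transposition induces which swap, so I would set up a small table mapping each of the three transpositions and the two $3$-cycles to its effect on bit positions, then to its effect on the decimal types. The crucial structural point making this clean is that the action of $S_3$ on the weight-$1$ types is the natural permutation action (type $2^{3-k}$ corresponds to ``the single $1$ is in row $k$''), and the action on the weight-$2$ types is the same permutation action composed with complementation (type $7-2^{3-k}$ corresponds to ``the single $0$ is in row $k$''); hence the permutation of weight-$2$ types is determined by, and in lockstep with, the permutation of weight-$1$ types. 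That is why the six lines pair up a weight-$1$ permutation with a uniquely-determined weight-$2$ permutation, and why there are exactly six and not more cases.

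For the ``if'' direction, given two column distributions related by one of the six displayed systems of equalities, I would exhibit the corresponding row permutation $\overline{\pi}_r\in S_3$ and argue directly that applying it (together with an appropriate column permutation, which exists because the multiplicities of each type agree after the relabeling) transforms one matrix into the other; this is immediate once the table from the previous step is in hand. For the ``only if'' direction, suppose $\mathbf{A}$ and $\mathbf{A}'$ are row-wise equivalent via some $\overline{\pi}_r\in S_3$; then the number of columns of a given type in $\mathbf{A}$ equals the number of columns of the image type in $\mathbf{A}'$, which is precisely one of the six systems, according to which element of $S_3$ the permutation $\overline{\pi}_r$ is.

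The main obstacle is purely bookkeeping: correctly and consistently translating each of the six permutations of $\{1,2,3\}$ (acting on row indices) into its induced permutation of the eight column types, and in particular getting the orientation of the weight-$2$ block right relative to the weight-$1$ block (a sign/complement error there would mispair the lines). There is no conceptual difficulty beyond the one already met at $\gamma=2$; the work is to carry out the $S_3$ case-analysis carefully and match it against the six displayed equation systems.
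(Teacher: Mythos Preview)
Your proposal is correct and follows essentially the same approach as the paper: enumerate the six elements of $S_3$ acting on the rows and track, for each, the induced permutation on the eight column types. The paper's proof is terser---it simply lists the six row permutations $\overline{\pi}_r$ in an order matching the six displayed lines and says the result follows by tracking the type changes---whereas you add the (correct and helpful, but not strictly necessary) structural observation that the weight-$1$ and weight-$2$ type orbits are permuted in lockstep via complementation.
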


\begin{proof}
For $\gamma=3$, there are eight different column types and $3!=6$ possible row permutations. The proof follows by tracking the changes in column types when applying each of the possible row permutations (from top to bottom): 1) $\overline{\pi}_r=[1,2,3]$ (identity permutation), 2) $\overline{\pi}_r=[2,1,3]$ 3) $\overline{\pi}_r=[1,3,2]$ 4) $\overline{\pi}_r=[2,3,1]$ 5) $\overline{\pi}_r=[3,1,2]$ 6) $\overline{\pi}_r=[3,2,1]$. 
\end{proof}

\begin{theorem}\label{theorem:gamma3}
Let $\gamma=3$ and $\kappa $ be a positive integer, and let $ \mathcal K_{\kappa,3} $ be the set of column distributions for all $ 3\times\kappa$ nonequivalent binary matrices. Then,
	\begin{equation}
	\begin{split}
	    \mathcal K_{\kappa,3}&=\bigg\{ [n_0,n_1,\ldots,n_7] \colon n_i\geq0,\; \sum_{i=0}^{7}n_i=\kappa,\\
	    &(n_1<n_2<n_4)\;\quad\textnormal{or}\quad (n_1=n_2<n_4\quad\textnormal{and}\quad n_6\leq n_5)\quad\textnormal{or}\\
	   &(n_1<n_2=n_4\quad\textnormal{and}\quad n_5\leq n_3)\quad\textnormal{or}\quad (n_1=n_2=n_4\quad\textnormal{and}\quad n_6\leq n_5\leq n_3)\bigg\} ,
	\end{split}
	\end{equation}
	and $\left|\mathcal K_{\kappa,3} \right | = a_\kappa + b_\kappa +c_\kappa$, where
	\begin{align}\label{Eq:KExplicit3}
	\begin{split}
	&a_\kappa = \sum_{i,j\in \mathbb N\colon\atop{ 3(i+j)\leq \kappa}}  (\kappa-3(i+j)+1),\quad b_\kappa = \sum_{i,j\in \mathbb N\colon\atop{  2(i+j)\leq \kappa}} \binom{\kappa-2(i+j)+3}{3} - a_\kappa,\\
	&c_\kappa = \frac16\left (\binom{\kappa+7}{7}-3b_\kappa - a_\kappa \right ).
	\end{split}
	\end{align}
\end{theorem}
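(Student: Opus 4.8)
The plan is to establish the set description first, then derive the cardinality formula by a careful orbit-counting (Burnside-flavored) argument applied to the stars-and-bars set $\mathcal S_{\kappa,3}$.

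\textbf{Characterizing $\mathcal K_{\kappa,3}$.} By Lemma~\ref{lemma:gamma3}, the $3!=6$ row permutations act on a column distribution $[n_0,\dots,n_7]$ by fixing $n_0$ and $n_7$ and permuting the six "middle" coordinates indexed by the weight-one types $\{1,2,4\}$ and the weight-two types $\{3,5,6\}$ in a coupled way (a permutation of the three bit-positions induces simultaneously a permutation of the three singletons and of the three pairs). Two distributions are equivalent iff they lie in the same orbit. To pick a canonical representative from each orbit, I would sort the singleton multiplicities so that $n_1\le n_2\le n_4$; this uses up the freedom to permute $\{1,2,4\}$ unless there are ties. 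When $n_1<n_2<n_4$ the representative is unique. When exactly two of them are equal, a single nontrivial permutation still stabilizes the singleton part, and it swaps two of the pair-coordinates while fixing the third; imposing the corresponding inequality on that swapped pair ($n_6\le n_5$ in the $n_1=n_2$ case, $n_5\le n_3$ in the $n_1<n_2=n_4$ case — matching which pair-indices are tied to the equal singletons via Lemma~\ref{lemma:gamma3}) pins down the representative. When $n_1=n_2=n_4$, the full $S_3$ stabilizes the singleton part and acts as $S_3$ on $\{n_3,n_5,n_6\}$, so requiring $n_6\le n_5\le n_3$ selects the representative. This yields exactly the three/four disjoint cases in the claimed description of $\mathcal K_{\kappa,3}$, and it shows the map from $\mathcal K_{\kappa,3}$ to orbits is a bijection.

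\textbf{Counting.} To get $|\mathcal K_{\kappa,3}|$ I would split according to the stabilizer type. Let $a_\kappa$ count the distributions with $n_1=n_2=n_4$: then writing the common value as $i$ and $\dots$ — actually it is cleaner to note $a_\kappa$ counts solutions with $n_1=n_2=n_4$ \emph{and} (after the orbit argument) it will reappear as a correction term; concretely $a_\kappa$ as written sums $(\kappa-3(i+j)+1)$ over $i,j\ge 0$ with $3(i+j)\le\kappa$, which is the number of $[n_0,\dots,n_7]$ with $n_1=n_2=n_4$ (set $n_1=n_2=n_4=i$, and one more free "diagonal" count $=j$, leaving $\kappa-3(i+j)$ to be split among the remaining coordinates along a line). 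Similarly $b_\kappa$ is built from the count of distributions with $n_1=n_2$ (the inner sum $\sum\binom{\kappa-2(i+j)+3}{3}$) minus the all-equal count $a_\kappa$, giving the number with exactly $n_1=n_2\ne n_4$ configurations up to the remaining symmetry. The key identity is the Burnside/orbit decomposition of $|\mathcal S_{\kappa,3}|=\binom{\kappa+7}{7}$: the number of orbits equals the number with trivial stabilizer divided by $6$, plus the number with an order-$2$ stabilizer divided by $2$, plus the number with the full $S_3$ stabilizer divided by $1$ — equivalently, $|\mathcal K_{\kappa,3}| = \frac16\big(\binom{\kappa+7}{7} + 3\,(\text{\#order-}2\text{-stabilized}) + 2\,(\text{\#}S_3\text{-stabilized})\big)$ after the standard manipulation, and one reorganizes this into $a_\kappa+b_\kappa+c_\kappa$ with $c_\kappa=\frac16(\binom{\kappa+7}{7}-3b_\kappa-a_\kappa)$ playing the role of the generic (trivial-stabilizer) orbit count. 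I would verify the bookkeeping by checking that $a_\kappa$ counts the $S_3$-symmetric part, $a_\kappa+b_\kappa$ the part symmetric under at least one transposition in the relevant copy, and that the coefficients $3$ and $1$ on $b_\kappa$ and $a_\kappa$ inside $c_\kappa$ correctly implement $\frac{\text{fixed points}}{|\text{group}|}$-type counting.

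\textbf{Main obstacle.} The delicate point is the orbit/stabilizer bookkeeping: the $S_3$ action here is not the plain action on $7$ coordinates but the induced action on the pair of $3$-sets $\{1,2,4\}$ and $\{3,5,6\}$, where a transposition of bit-positions acts as a transposition on the singletons and simultaneously as a (different-looking) transposition on the pairs, while the order-$3$ elements act without fixed points among nonzero-difference configurations. Getting the fixed-point counts of each conjugacy class right — and matching them to the $a_\kappa,b_\kappa$ expressions, which are themselves written as slightly nonobvious double sums over a "line" of remaining freedom — is where almost all the real work lies; the rest is routine stars-and-bars evaluation. I would sanity-check the final formula against a brute-force enumeration for small $\kappa$ (say $\kappa\le 5$) to catch any off-by-one in the constraints inherited from Lemma~\ref{lemma:gamma3}.
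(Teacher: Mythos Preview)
Your approach is essentially the paper's: partition $\mathcal S_{\kappa,3}$ by stabilizer type under the $S_3$-action on column distributions and count orbits class by class, with $a_\kappa,b_\kappa,c_\kappa$ corresponding to the fully symmetric, transposition-symmetric, and generic classes. The set description via canonical representatives is also handled the same way.

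Two concrete slips in your bookkeeping, which are exactly the ``main obstacle'' you flagged. First, you describe $a_\kappa$ as counting distributions with $n_1=n_2=n_4$; it actually counts those with $n_1=n_2=n_4$ \emph{and} $n_3=n_5=n_6$ (this is what full $S_3$-invariance forces, since the action ties the singleton triple $\{1,2,4\}$ and the pair triple $\{3,5,6\}$ together). With $n_1=n_2=n_4=i$ and $n_3=n_5=n_6=j$, only $n_0+n_7=\kappa-3(i+j)$ remains, giving the $(\kappa-3(i+j)+1)$ summand. Likewise, invariance under the row-$(1,2)$ swap forces both $n_1=n_2$ \emph{and} $n_5=n_6$; setting these to $i$ and $j$ leaves four free coordinates summing to $\kappa-2(i+j)$, whence the $\binom{\kappa-2(i+j)+3}{3}$ summand, and $b_\kappa$ is that total minus $a_\kappa$. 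Second, in your orbit decomposition the class with an order-$2$ stabilizer has orbits of size $6/2=3$, so it contributes $|S^B|/3$, not $|S^B|/2$; with this fix (and $|S^B|=3b_\kappa$ by symmetry over the three transpositions), your Burnside/orbit-stabilizer identity collapses to $a_\kappa+b_\kappa+c_\kappa$ as stated.
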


\begin{example}\label{Ex:gamma3_2}
	Let $ \kappa=11$ and $\gamma=3 $. Then, there are $ |\mathcal K_{11,3}|= 60+1{,}452+4{,}568=6{,}080 $ nonequivalent binary matrices with $ \gamma $ rows and $ \kappa  $ columns, which is $ 1.41{\cdot} 10^6
 $ times smaller then the cardinality of the entire space  $  \{0,1\}^{\gamma \kappa }$. 
\end{example}

\begin{example}\label{example:searchspace}
\begin{figure}
    \label{fig:search_space_size_final}
    \centering
    \begin{tikzpicture}

\begin{axis}[%
width=2.5in,
height=2.5in,
at={(0,0)},
xmin=5,
xmax=37,
xlabel={$\kappa$},
ymode=log,
ymin=10,
ymax=1e4,
yminorticks=true,
ylabel={Search Space Cardinality},
xmajorgrids,
ymajorgrids,
yminorgrids,
legend columns=-1, 
legend entries={ Exhaustive, OO, Non-isomprphic},
legend to name= leg,
legend style={
    at={(-0.5,-0.3)},anchor=north,
    legend cell align=left
}
]
\addplot [color=red,line width=1pt]
  table[row sep=crcr]{%
5	1024\\
6	4096\\
7	16384\\
8	65536\\
9	262144\\
10	1048576\\
11	4194304\\
12	16777216\\
13	67108864\\
14	268435456\\
15	1073741824\\
16	4294967296\\
17	17179869184\\
18	68719476736\\
19	274877906944\\
};


\addplot [color=blue,line width=1pt]
  table[row sep=crcr]{%
5	56\\
6	84\\
7	120\\
8	165\\
9	220\\
10	286\\
11	364\\
12	455\\
13	560\\
14	680\\
15	816\\
16	969\\
17	1140\\
18	1330\\
19	1540\\
20  1771\\
21  2024\\
22  2300\\
23  2600\\
24  2925\\
25  3276\\
26  3654\\
27  4060\\
28  4495\\
29  4960\\
30  5456\\
31  5984\\
32  6545\\
33  7140\\
34  7770\\
35  8436\\
36  9139\\
37  9880\\
};

\addplot [color=teal,line width=1pt]
  table[row sep=crcr]{%
5   22\\
6   34\\
7   50\\
8   70\\
9   95\\
10  125\\
11  161\\
12  203\\
13  252\\
14  308\\
15  372\\
16  444\\
17  525\\
18  615\\
19  715\\
20  825\\
21  946\\
22  1078\\
23  1222\\
24  1378\\
25  1547\\
26  1729\\
27  1925\\
28  2135\\
29  2360\\
30  2600\\
31  2856\\
32  3128\\
33  3417\\
34  3723\\
35  4047\\
36  4389\\
37  4750\\
};
\end{axis}

\begin{axis}[%
width=2.5in,
height=2.5in,
at={(2.5in,0)},
xmin=5,
xmax=37,
xlabel={$\kappa$},
ymode=log,
ymin=10,
ymax=1e8,
yminorticks=true,
xmajorgrids,
ymajorgrids,
yminorgrids
]
\addplot [color=red,line width=1pt]
  table[row sep=crcr]{%
5	32768\\
6	262144\\
7	2097152\\
8	16777216\\
9	134217728\\
10	1073741824\\
11	8589934592\\
12	68719476736\\
13	549755813888\\
};

\addplot [color=blue,line width=1pt]
  table[row sep=crcr]{%
5   792\\
6   1716\\
7   3432\\
8   6435\\
9   11440\\
10  19448\\
11  31824\\
12  50388\\
13  77520\\
14  116280\\
15  170544\\
16  245157\\
17  346104\\
18  480700\\
19  657800\\
20  888030\\
21  1184040\\
22  1560780\\
23  2035800\\
24  2629575\\
25  3365856\\
26  4272048\\
27  5379616\\
28  6724520\\
29  8347680\\
30  10295472\\
31  12620256\\
32  15380937\\
33  18643560\\
34  22481940\\
35  26978328\\
36  32224114\\
37  38320568\\
};

\addplot [color=teal,line width=1pt]
  table[row sep=crcr]{%
5         190\\
6         386\\
7         734\\
8        1324\\
9        2284\\
10        3790\\
11        6080\\
12        9473\\
13       14378\\
14       21323\\
15       30974\\
16       44159\\
17       61898\\
18       85440\\
19      116286\\
20      156240\\
21      207446\\
22      272432\\
23      354162\\
24      456097\\
25      582238\\
26      737205\\
27      926298\\
28     1155567\\
29     1431892\\
30     1763074\\
31     2157904\\
32     2626276\\
33     3179278\\
34     3829294\\
35     4590118\\
36     5477081\\
37     6507152\\
};
\end{axis}
\end{tikzpicture}

\ref{leg}
    \caption{The cardinality of the search space for binary matrices using different methods: Left: $\gamma_c=2$ and Right: $\gamma_c=3$.}
    \label{fig:my_label}
    \vspace{-0.8cm}
\end{figure}
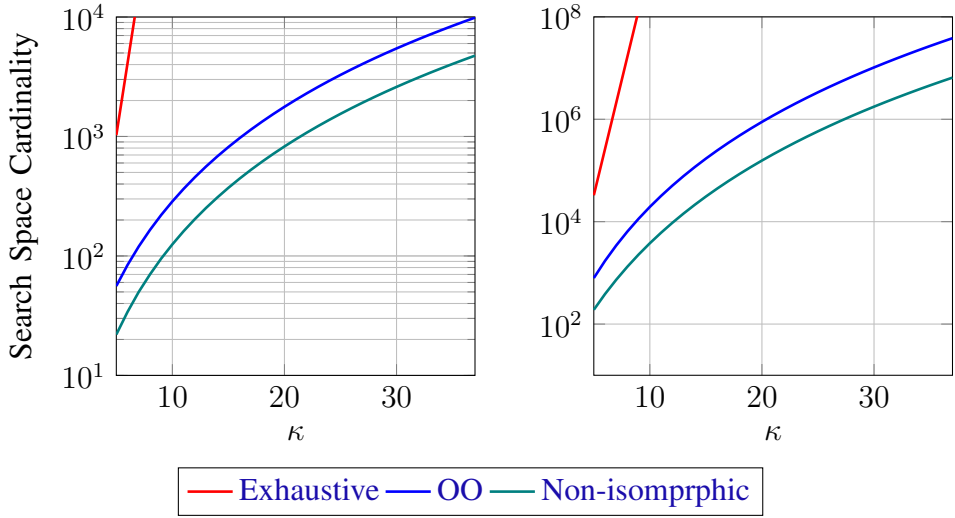
Fig.~\ref{fig:search_space_size_final} shows the cardinality of the search space as a function of $\kappa$ for $\gamma=2$ (left panel) and $\gamma=3$ (right panel), using three different schemes. As we see for the exhaustive search scheme, the cardinality, i.e., $2^{\gamma\kappa}$, grows exponentially with $\kappa$ and quickly goes beyond the practical feasibility. However, the cardinality of the search space for the OO and our new scheme based on nonequivalent matrices remain far below the one for the exhaustive search due to their smaller growth rate with $\kappa$. 
Furthermore, our scheme has smaller cardinality compared to the existing setting (OO) by almost half an order of magnitude for $\gamma=2$ and almost an order of magnitude for $\gamma=3$.\vspace{-0.4cm}
\end{example}

We finally note that evaluating each option in the search space, particularly identifying its decoding threshold, is computationally heavy and any reduction in the search space results in a  reduction of the complexity of the design algorithm. Thus, the observed reduction of an order of magnitude in the search space in Example~\ref{example:searchspace} notably reduces the design expenses.\vspace{-0.3cm}

\section{An algorithm For Joint Finite-Length Asymptotic Design of SC Codes}\vspace{-0.1cm}
\label{Sec:CC_opt}
In this section, we focus on binary partitioning matrices, i.e., $\mathbf{P}\in\{0,1\}^{\gamma\times\kappa}$ corresponding to regular SC-LDPC codes with memory $1$, i.e., $\mathbf{B}_0+\mathbf{B}_1=\mathbf{1}_{\gamma\times\kappa}$. However, the derivations in this section can be generalized to an arbitrary memory $m\geq1$ and irregular designs, i.e., $\mathbf{P}\in\{{\star},0,\dots,m\}^{\gamma\times\kappa}$. 
In the generalized case, the reduced search space in the previous section follows similarly by noting that the stars-and-bars method is considered with $(m+1)^{\gamma}$ bins, or $(m+2)^{\gamma}$ bins in case of irregular design, rather than $2^\gamma$ bins. In what follows, given $\gamma$ and $\kappa$, we produce a (short) list of partitioning matrices that offer a meaningful trade-off between threshold and cycle population. By meaningful we mean that no member of this list results in an SC protograph that is inferior to any other one (in the entire search space) in both the threshold and cycle-count properties.

\subsection{Reduced Search Space of SC-LDPC Codes}

We remind that the proto-matrix $\mathbf{B}_\text{SC}$ of a regular SC code with memory $m=1$ and coupling length $l$, in our construction, is obtained by partitioning the proto-matrix of $\mathbf{B}=\mathbf{1}_{\gamma\times\kappa}$ into $\mathbf{B}_0$ and $\mathbf{B}_1=\mathbf{B}-\mathbf{B}_0$, and coupling $l$ copies of $\mathbf{B}_0=[u_{i,j}]$ and $\mathbf{B}_1=[v_{i,j}]$ in a diagonal structure. We introduced the partitioning matrix $\mathbf{P}=[p_{i,j}]$ with size $\gamma\times\kappa$ and elements in $\{0,1,\star\}$ that fully characterizes the construction of a regular/irregular SC code with memory $1$ as follows: If $p_{i,j}=0$, $u_{i,j}=1$ and $v_{i,j}=0$; If $p_{i,j}=1$, $u_{i,j}=0$ and $v_{i,j}=1$; If $p_{i,j}=\star$, $u_{i,j}=v_{i,j}=0$.

Lemma~\ref{lemma:sc_perms} derives the congruence between coupled proto-matrices and the partitioning matrices that are used to construct them. This congruence allows searching for a coupled SC-LDPC code over a reduced search space of small ($\gamma\times \kappa$) nonequivalent partitioning matrices.\vspace{-0.3cm}

\begin{lemma}\label{lemma:sc_perms}
Any column/row permutation of the partitioning matrix $\mathbf{P}$ of an SC code results in an SC proto-matrix that is a column/row permuted version of the original SC proto-matrix $\mathbf{B}_\textrm{SC}$.\vspace{-0.3cm}
\end{lemma}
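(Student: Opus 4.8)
The plan is to make precise the bookkeeping that relates the block structure of $\mathbf{B}_\textrm{SC}$ to the individual entries of $\mathbf{P}$, and then show that permuting $\mathbf{P}$ simply permutes the blocks. Recall that $\mathbf{B}_\textrm{SC}$ is the $(l+1)\gamma \times l\kappa$ matrix obtained by placing $l$ diagonal copies of the vertically stacked pair $[\mathbf{B}_0;\mathbf{B}_1]$, where $\mathbf{B}_0=[u_{i,j}]$ and $\mathbf{B}_1=[v_{i,j}]$ are determined entrywise by $\mathbf{P}=[p_{i,j}]$ via the rule stated just above the lemma. The first step is to fix an explicit indexing: the rows of $\mathbf{B}_\textrm{SC}$ are grouped into $l+1$ blocks of $\gamma$ rows each (indexed by a replica/offset pair), and the columns into $l$ blocks of $\kappa$ columns each. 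With this indexing, the $(a,b)$ entry of $\mathbf{B}_\textrm{SC}$ is either $0$ (outside the diagonal band) or equals $u_{i,j}$ or $v_{i,j}$ for the unique $i\in\{1,\dots,\gamma\}$, $j\in\{1,\dots,\kappa\}$ obtained by reducing $a$ and $b$ modulo the block sizes, the choice between $u$ and $v$ being dictated by which of the two stacked layers the row falls in.

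Next I would handle a column permutation $\overline{\pi}_c=[\rho_1,\dots,\rho_\kappa]$ of $\mathbf{P}$. Applying it produces $\mathbf{P}'$ with $p'_{i,\rho_j}=p_{i,j}$, hence $u'_{i,\rho_j}=u_{i,j}$ and $v'_{i,\rho_j}=v_{i,j}$; that is, each of $\mathbf{B}_0,\mathbf{B}_1$ undergoes the \emph{same} column permutation. The key observation is that in the coupled construction every column of $\mathbf{B}_\textrm{SC}$ lies in exactly one column block (one replica), and within that block the $\kappa$ columns are exactly the $\kappa$ columns of the corresponding copy of $[\mathbf{B}_0;\mathbf{B}_1]$. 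Therefore permuting the columns of $\mathbf{P}$ by $\overline{\pi}_c$ has the effect of permuting the columns of $\mathbf{B}_\textrm{SC}$ by the ``block-diagonal'' permutation $\overline{\Pi}_c$ that acts as $\overline{\pi}_c$ independently inside each of the $l$ column blocks and fixes the block structure. This is a bona fide permutation of $\{1,\dots,l\kappa\}$, so $\mathbf{B}_\textrm{SC}$ and the SC proto-matrix built from $\mathbf{P}'$ are column-permuted versions of one another. I would make this rigorous simply by exhibiting $\overline{\Pi}_c$ explicitly in terms of $\overline{\pi}_c$ and the block sizes, and verifying the entrywise identity $[\mathbf{B}_\textrm{SC}']_{a,\Pi_c(b)}=[\mathbf{B}_\textrm{SC}]_{a,b}$ using the indexing from the first step.

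The row case is handled the same way, with one extra wrinkle that I expect to be the only genuinely delicate point. A row permutation $\overline{\pi}_r=[\nu_1,\dots,\nu_\gamma]$ of $\mathbf{P}$ permutes the $\gamma$ rows of $\mathbf{B}_0$ and of $\mathbf{B}_1$ simultaneously, but in $\mathbf{B}_\textrm{SC}$ the rows coming from $\mathbf{B}_0$ of replica $r$ and the rows coming from $\mathbf{B}_1$ of replica $r-1$ sit in the \emph{same} row block of size $\gamma$; so one must check that applying $\overline{\pi}_r$ to both layers is consistent with a single permutation acting inside each row block. Because the $\mathbf{B}_0$-layer of replica $r$ and the $\mathbf{B}_1$-layer of replica $r-1$ occupy the \emph{same} $\gamma$ row positions and are permuted by the \emph{same} $\overline{\pi}_r$, this is exactly the block-diagonal row permutation $\overline{\Pi}_r$ that acts as $\overline{\pi}_r$ inside each of the $l+1$ row blocks; the entrywise check $[\mathbf{B}_\textrm{SC}']_{\Pi_r(a),b}=[\mathbf{B}_\textrm{SC}]_{a,b}$ then goes through. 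The main obstacle is therefore not conceptual difficulty but keeping the multi-index notation (replica index, offset layer, within-block row/column index) clean enough that the verification is transparent; once the indexing convention of the first step is pinned down, both verifications are short. Combining the column and row cases gives the claim for an arbitrary sequence of permutations by composition.
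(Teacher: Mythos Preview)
Your proposal is correct and takes essentially the same approach as the paper: both arguments observe that a permutation of $\mathbf{P}$ induces the \emph{same} permutation on $\mathbf{B}_0$ and $\mathbf{B}_1$, and then lift this to a block-diagonal permutation of $\mathbf{B}_{\mathrm{SC}}$ acting as $\overline{\pi}_c$ (resp.\ $\overline{\pi}_r$) inside each of the $l$ column blocks (resp.\ $l{+}1$ row blocks). The paper simply writes down the lifted permutations explicitly as $[\overline{\pi}_c,\overline{\pi}_c+\kappa,\dots,\overline{\pi}_c+(l-1)\kappa]$ and $[\overline{\pi}_r,\overline{\pi}_r+\gamma,\dots,\overline{\pi}_r+l\gamma]$, whereas you are more careful about the row-block overlap wrinkle that the paper glosses over; but the underlying argument is identical.
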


\begin{proof}
By definition, any row and column permutations on $\mathbf{P}$ automatically applies to both $\mathbf{B}_0$ and $\mathbf{B}_1$. 
This means that when $\mathbf{P}\xrightarrow{{\overline{\pi}_c,\overline{\pi}_r}}\mathbf{P}'$, we have $\mathbf{B}_0\xrightarrow{{\overline{\pi}_c,\overline{\pi}_r}}\mathbf{B}'_0$ and $\mathbf{B}_1\xrightarrow{{\overline{\pi}_c,\overline{\pi}_r}}\mathbf{B}'_1$. 
Thus, the matrix $[\mathbf{B}'_1\text{ }\mathbf{B}'_0]$ is a row permuted version of $[\mathbf{B}_1\text{ }\mathbf{B}_0]$ using $\overline{\pi}_r$, 
and the matrix $[{\mathbf{B}'_0};{\mathbf{B}'_1}]$ is a column permuted version of $[{\mathbf{B}_0};{\mathbf{B}_1}]$ using $\overline{\pi}_c$. In view of the diagonal structure of $\mathbf{B}_\text{SC}$, we can infer that $\mathbf{B}'_\text{SC}$ which has $\mathbf{B}'_0$ and $\mathbf{B}'_1$ as component matrices is row and column permuted version of $\mathbf{B}_\text{SC}$, with column permutation
$[\overline{\pi}_c,\overline{\pi}_c+\kappa,\dots,\overline{\pi}_c+(l-1)\kappa]$ and row permutation $[\overline{\pi}_r,\overline{\pi}_r+\gamma,\dots,\overline{\pi}_r+l\gamma]$, where the addition is performed element-wise.
\end{proof}


In Lemma~\ref{col_perm} in the appendix, we show that the state-of-the-art approach of constructing SC-LDPC codes with overlap parameters \cite{EsfahanizadehTCOM2018} results in a search space with similar size as the reduced search space obtained by just eliminating the column-wise equivalent options. Unlike the representation of column-wise nonequivalent matrices introduced in this paper, it is computationally difficult to iterate over all possible overlap parameters to find the best ones due to their dependencies. Besides, \cite{EsfahanizadehTCOM2018} does not consider the row-wise equivalence to further reduce the search space and thus results in a higher-computational complexity of the code design compared to the introduced approach in this paper.


\subsection{Cycle Enumeration in the Reduced Search Space\label{sub:cycleenum}}

In this paper, we focus on cycles-$6$ as the shortest and most problematic cycles for practical LDPC codes. However, the approach presented in this subsection can be extended to longer cycles with some modifications. Let $\mathbf{B}$ be a binary matrix with size $\gamma\times \kappa$. A sequence of index pairs $[i_1,j_1]-[i_1,j_2]-[i_2,j_2]-[i_2,j_3]-[i_3,j_3]-[i_3,j_1]$, where $[i_1,i_2,i_3]\in\{1,\dots,\gamma\}^3$, $[j_1,j_2,j_3]\in\{1,\dots,\kappa\}^3$, $i_1\neq i_2\neq i_3$, and $j_1\neq j_2\neq j_3$, represents a cycle-$6$ in $\mathbf{B}=[b_{i,j}]$ iff,
\begin{equation}\label{Eq:protoCylce}
    b_{i_1,j_1}=b_{i_1,j_2}=b_{i_2,j_2}=b_{i_2,j_3}=b_{i_3,j_3}=b_{i_3,j_1}=1.
\end{equation}
Moreover, this cycle results in $z$ cycles in the lifted matrix $\mathbf{H}$ according to the power matrix $\mathbf{C}=[c_{i,j}]$ (see Section~\ref{Sub:SC-LDPCL}) iff \cite{6691250,8984289},
\begin{equation}\label{Eq:liftedCylce}
    c_{i_1,j_1}-c_{i_1,j_2}+c_{i_2,j_2}-c_{i_2,j_3}+c_{i_3,j_3}-c_{i_3,j_1}=0\quad(\mathrm{mod}\;z).
\end{equation}
In general, identifying cycles-$6$ can be done via brute-force methods. As a result, the computational complexity can be high for practical code parameters. In what follows, we exploit the structure of the SC matrix, i.e., circulant-based and repetitive structure, to reduce the complexity.

Due to the structure of SC codes with memory $m=1$, any cycle-$6$ spans either one or two replicas \cite{EsfahanizadehTCOM2018}. 
Denote the first replica and the first two replicas of $\mathbf{H}_\text{SC}$ as $\mathbf{R}_1$ and $\mathbf{R}_2$, respectively. Similarly, denote the first replica and the first two replicas of $\mathbf{B}_\text{SC}$ as $\mathbf{Q}_1$ and $\mathbf{Q}_2$, respectively. Note that the corresponding power matrices can be obtained using $\mathbf{C}$ by simple concatenations. Let function $F(\cdot)$ operate on a binary matrix and output its number of cycles-6. Then, due to the repetitive and diagonal structure of SC codes \cite{EsfTMAG}, 
\begin{equation}\label{cycle_SC_structure}
\begin{split}
    F(\mathbf{H}_\text{SC})=lF(\mathbf{R}_1)+(l-1)(F(\mathbf{R}_2)-2F(\mathbf{R}_1)),\\
    F(\mathbf{B}_\text{SC})=lF(\mathbf{Q}_1)+(l-1)(F(\mathbf{Q}_2)-2F(\mathbf{Q}_1)).
\end{split}
\end{equation}
Matrices $\mathbf{R}_1$ and $\mathbf{R}_2$ (resp., $\mathbf{Q}_1$ and $\mathbf{Q}_2$) are often notably smaller than $\mathbf{H}_\text{SC}$ (resp., $\mathbf{B}_\text{SC}$). Besides, their number of cycles-6 can be obtained using the overlap parameters, as described in (\ref{eq:cyclesA}). Finally, one can obtain $F(\mathbf{R}_1)$ (resp., $F(\mathbf{R}_2)$) from $F(\mathbf{Q}_1)$ (resp., $F(\mathbf{Q}_2)$) using \eqref{Eq:protoCylce} and \eqref{Eq:liftedCylce}.

\subsection{Decoding-Threshold Evaluation in the Reduced Search Space}

In order to evaluate the EXIT threshold (see Section~\ref{Sub:EXIT}), we construct the coupled protograph according to the partitioning matrix and the coupling length, and perform EXIT calculations. For increasing values of AWGN-channel parameter $\sigma$, we apply
\eqref{Eq:J VN} and \eqref{Eq:J CN} for each edge in the protograph until a threshold value is found such that all EXIT values on the VNs approach $1$. 

\subsection{An Algorithm For SC Code Design Offering A Design Trade-Off\label{sub:algo:trade-off}}

We now present our code-design algorithm. The inputs of the algorithm are the code parameters $\gamma,\kappa,z,l$, and the output is a list of protographs such that no member in this list is inferior to any other protograph in the entire search space (we say that protograph $\mathcal{G}_1$ is inferior to protograph $\mathcal{G}_2$ if $\mathcal{G}_1$ has both lower threshold and larger number of cycles-6 than $\mathcal{G}_2$ in the corresponding lifted graph). This candidate list is often very short, and it is sorted such that its first member is the protograph with the best (lowest) cycle-count and the worst (lowest) threshold, and the last member has the best threshold and worst (highest) number of cycles. 

The algorithm consists of the following steps:
\begin{enumerate}
    \item Generate a list of nonequivalent partitioning matrices of size $\gamma\times \kappa$ (which according to Lemma~\ref{lemma:sc_perms} corresponds to a list of nonequivalent SC proto-matrices).     
    \item \label{item:lift} Calculate the number of cycles-6 in the lifted coupled graphs corresponding to each partitioning matrix using \eqref{Eq:protoCylce} and \eqref{Eq:liftedCylce}.
    \item Sort the list in ascending order according to the number of cycles-6.
    \item \label{step:exit} Iterate over the sorted list and for each partitioning matrix, generate the coupled protograph and calculate its EXIT threshold as described in Section~\ref{Sub:EXIT}. 
    \item Filter the list by removing inferior partitioning matrices using the following method:
    \begin{itemize}
        \item Initialize the final list of candidates to be empty and $\sigma^*=0$ ($\sigma^*$ records the highest found threshold).
        \item Iterate, in order, over the sorted list. If a member has a higher threshold than $\sigma^*$, append the partitioning matrix to the final candidate-list and update $\sigma^*$.  
    \end{itemize}
\end{enumerate}

\begin{remark}\label{remark:lifting}
Although in this work we use CB lifting, one can easily use any other lifting method while keeping the general structure of the algorithm. For example, one can perform the cycle optimization of step~\ref{item:lift} over the protograph (to obtain the minimum number of cycles-6 in the protograph) and then later use a lifting optimization program as in \cite{EsfahanizadehTCOM2018}.
\end{remark}

The output of the above algorithm is a candidate list whose first member represents a choice that has the best cycle-count properties in the list, called cycle-driven (CD) choice, and the last member has the best threshold properties in the list, called threshold-driven (TD) choice. 

\section{Extension to SC-LDPC Codes with Sub-Block Locality}
\label{Sec:localglobal_des}

In this section, we expand our framework of designing SC-LDPC codes with jointly optimized finite-length and asymptotic performance to also incorporate the sub-block locality feature. The first and second subsections are dedicated to designing CCNs and designing LCNs, respectively.

\subsection{Global Design}
In this subsection, we discuss the design of CCNs, i.e., the entries in the first $\gamma_c$ rows of $\mathbf{P}$, in order to reduce the population of short cycles and increase the global decoding threshold. Joint cycle and threshold optimization of SC codes with no locality was studied in Section~\ref{Sec:CC_opt}; as we will see, adding locality {introduces additional opportunities for reducing the design complexity}:
\begin{itemize}
\item \textbf{Threshold Estimation per Partitioning Candidate:} a na\"ive approach for threshold optimization, is for every partitioning matrix in the set of nonequivalent options to construct the global code, i.e., $l$ replicas of $ [\mathbf{B}_0;\mathbf{B}_1]$ (see Section~\ref{Sub:SC-LDPCL}), and calculate its global EXIT threshold $ \sigma^*_G $ (see Section~\ref{Sub:EXIT}). However, this calculation can be computationally intensive since the coupled protograph is large (has many edges). As a result, instead of calculating the global threshold of the coupled protograph, we calculate a lower bound which appears implicitly in \cite{IyengarPapaleo12} as $\sigma^*_G \geq \sigma^*\left(\mathbf{B}_0\right)$, where $\sigma^*\left(\mathbf{B}_0\right)$ is the threshold of $\mathbf{B}_0$.
The reason this bound is used for the design of SC-LDPC codes with sub-block locality (i.e., not in the design in Section~\ref{Sec:CC_opt}) is that it gets tighter as $\gamma$ increases, as is the case for SC codes with sub-block locality, and offers a good approximation to the global threshold. For codes without sub-block locality, the lower bound holds, however, it is loose and sometimes even trivial, see \cite{EshedTIT}.
The complexity of calculating $\sigma^*\left(\mathbf{B}_0\right)$ is significantly smaller than the complexity of calculating $\sigma^*_G$, and this reduction enables a more affordable joint cycles-threshold optimization.
 \item \textbf{Cycle Enumeration per Partitioning Candidate:} For every nonequivalent partitioning matrix ({comprising both CCNs and LCNs}), we calculate the number of cycles-$6$ in the lifted global code, as described in Section~\ref{Sec:CC_opt}. 
 
\end{itemize}
{However, in the case of codes with sub-block locality, the structure of the codes provides further simplifications. Recall from Section~\ref{Sub:SC-LDPCL} that the matrix $\mathbf{P}$ has $\gamma_l$ rows that do not mix $0$ and $1$ entries. This simplifies the construction algorithm in two ways: 1) there are fewer nonequivalent $\mathbf{P}$ matrices to search over compared to unstructured matrices with $\gamma_c+\gamma_l$ rows, and 2) for each candidate matrix $\mathbf{P}$, calculating the number of cycles-6 it induces (using~\eqref{cycle_SC_structure}) is simplified because the function $F(\cdot)$ only needs to track overlaps among the $\gamma_c$ CCN rows, and the cycles of the full code can be shown to be fully determined by these overlaps and the power matrix when the $\gamma_l$ rows are all-$0$ (corresponding to regular local code).}
{We next quantify the reduction of the search space due to the constraint that $\mathbf{P}$ has $\gamma_c$ CCNs and $\gamma_l$ LCNs. In particular, when traversing the space of nonequivalent matrices with $\gamma_c$ rows, we exclude those} that result in an all-zero or all-one rows, as these choices effectively add one or more LCNs and result in fewer than $ \gamma_c$ CCNs. We theoretically derive the impact of this reduction for $\gamma_c=2$ and $\gamma_c=3$ in the next propositions, whose proofs appear in the appendix.

\begin{proposition}\label{prop:gamma2searcharea} Let $\gamma_c=2$ and $\kappa$ be a positive integer. The set of column distributions for all $2\times\kappa$ nonequivalent  matrices that do not have all-zero or all-one rows has cardinality $|\mathcal{K}_{\kappa,2}|-2\kappa-1$, where $|\mathcal{K}_{\kappa,2}|$ is given in Theorem~\ref{theorem:gamma2}.
\end{proposition}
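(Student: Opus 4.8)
The plan is to start from the explicit description of $\mathcal{K}_{\kappa,2}$ given in Theorem~\ref{theorem:gamma2}, namely the column distributions $[n_0,n_1,n_2,n_3]$ with $n_i\ge 0$, $\sum_i n_i = \kappa$, and $n_1\le n_2$, and to subtract exactly those distributions that correspond to a matrix having an all-zero row or an all-one row. So I would define two ``bad'' subsets of $\mathcal{K}_{\kappa,2}$: the set $\mathcal{Z}$ of distributions whose matrices have an all-zero row, and the set $\mathcal{O}$ of distributions whose matrices have an all-one row, and compute $|\mathcal{K}_{\kappa,2}| - |\mathcal{Z}\cup\mathcal{O}|$ via inclusion-exclusion, i.e. $|\mathcal{Z}| + |\mathcal{O}| - |\mathcal{Z}\cap\mathcal{O}|$.

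The key translation step is to express the ``all-zero row'' and ``all-one row'' conditions in terms of the column-type counts. With $\gamma_c=2$, the four column types are $0=[0,0]^T$, $1=[0,1]^T$, $2=[1,0]^T$, $3=[1,1]^T$. Row $1$ (the top row) is all-zero iff no column has a $1$ in row $1$, i.e. iff $n_2 = n_3 = 0$; row $2$ is all-zero iff $n_1 = n_3 = 0$. Since we have already imposed $n_1\le n_2$ in the canonical representative, ``some row is all-zero'' is equivalent to the union of these, but one must be careful that the canonical form with $n_1\le n_2$ does not double count: a matrix with an all-zero row, after the canonical normalization, satisfies $n_3=0$ together with ($n_1=0$ or $n_2=0$); since $n_1\le n_2$, the case $n_2=0$ forces $n_1=0$ too, so in fact the all-zero-row distributions in $\mathcal{K}_{\kappa,2}$ are precisely those with $n_3=0$ and $n_1=0$, i.e. $[n_0,0,n_2,0]$ with $n_0+n_2=\kappa$ — there are $\kappa+1$ of these. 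Symmetrically, an all-one row corresponds to $n_0=0$ together with ($n_1=0$ or $n_2=0$), and the canonical $n_1\le n_2$ collapses this to $n_0=0$, $n_1=0$: distributions $[0,0,n_2,n_3]$ with $n_2+n_3=\kappa$, again $\kappa+1$ of them. The intersection is $n_0=n_1=n_3=0$, i.e. the single distribution $[0,0,\kappa,0]$. Inclusion-exclusion then gives $(\kappa+1)+(\kappa+1)-1 = 2\kappa+1$ bad distributions, hence the cardinality of the reduced set is $|\mathcal{K}_{\kappa,2}| - (2\kappa+1)$, as claimed.

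I would present the argument in this order: (i) recall the canonical form from Theorem~\ref{theorem:gamma2}; (ii) characterize all-zero and all-one rows via the $n_i$'s for general (non-canonical) representatives; (iii) carefully intersect these characterizations with the canonical constraint $n_1\le n_2$ to avoid overcounting, obtaining the clean descriptions $[n_0,0,n_2,0]$ and $[0,0,n_2,n_3]$; (iv) count each family ($\kappa+1$ each), count the overlap ($1$), and conclude by inclusion-exclusion.

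The main obstacle — and the place where a careless argument would go wrong — is step (iii): the interaction between the equivalence-class canonicalization ($n_1\le n_2$) and the row-degeneracy conditions. Naively one might say ``all-zero row iff $n_3=0$ and one of $n_1,n_2$ is zero,'' count $2(\kappa+1)$ such distributions, and then struggle with which of them actually lie in $\mathcal{K}_{\kappa,2}$ and whether the $n_1\leftrightarrow n_2$ swap identifies some of them. The resolution is the observation that under $n_1\le n_2$, requiring $\min(n_1,n_2)=0$ is equivalent to requiring $n_1=0$, which both removes the ambiguity and automatically prevents double counting; verifying this cleanly (and the symmetric statement for all-one rows, and that the two bad families meet in exactly one point) is the crux of the proof. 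Everything else is elementary counting of lattice points on a segment.
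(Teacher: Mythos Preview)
Your proposal is correct and follows essentially the same approach as the paper: characterize the all-zero-row and all-one-row cases via the $n_i$'s, use the canonical constraint $n_1\le n_2$ to reduce each bad family to $\kappa+1$ distributions, and apply inclusion--exclusion with a single overlap. Your write-up is more explicit than the paper's about how the canonicalization collapses ``$n_1=0$ or $n_2=0$'' to ``$n_1=0$,'' but the logic is identical.
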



\begin{proposition}\label{prop:gamma3searcharea} Let $\gamma=3$ and $\kappa$ be a positive integer. The set of column distributions for all $3\times\kappa$ nonequivalent  matrices that do not have all-zero or all-one rows has cardinality $|\mathcal{K}_{\kappa,3}|-2|\mathcal{K}_{\kappa,2}|+(\kappa+1)$, where $|\mathcal{K}_{\kappa,2}|$ and $|\mathcal{K}_{\kappa,3}|$ are given in Theorem~\ref{theorem:gamma2} and Theorem~\ref{theorem:gamma3}, respectively.\vspace{-0.4cm}
\end{proposition}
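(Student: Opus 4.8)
The plan is to count, among the nonequivalent $3\times\kappa$ binary matrices enumerated by $\mathcal{K}_{\kappa,3}$ in Theorem~\ref{theorem:gamma3}, how many have at least one all-zero or all-one row, and subtract that count from $|\mathcal{K}_{\kappa,3}|$. As in the $\gamma_c=2$ case (Proposition~\ref{prop:gamma2searcharea}), the key observation is that an all-zero row in a $3\times\kappa$ matrix is equivalent (up to row permutation, which does not leave the equivalence class) to a $2\times\kappa$ matrix obtained by deleting that row, and similarly an all-one row corresponds to a $2\times\kappa$ matrix after deletion. So the matrices to be removed are in bijection with $\mathcal{K}_{\kappa,2}$ in two ways: those with an all-zero row (after deletion, any nonequivalent $2\times\kappa$ matrix can appear) and those with an all-one row. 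This suggests the inclusion--exclusion structure: $|\{\text{has all-0 or all-1 row}\}| = |\{\text{all-0 row}\}| + |\{\text{all-1 row}\}| - |\{\text{both an all-0 and an all-1 row}\}|$.

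\medskip

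First I would make precise the bijection $\{3\times\kappa$ nonequivalent matrices with an all-zero row$\} \leftrightarrow \mathcal{K}_{\kappa,2}$. Given a nonequivalent $3\times\kappa$ matrix with an all-zero row, permute rows so the all-zero row is (say) the last, then delete it; the resulting $2\times\kappa$ matrix is determined up to row/column permutation, giving a well-defined element of $\mathcal{K}_{\kappa,2}$, and conversely appending an all-zero row to a representative of a class in $\mathcal{K}_{\kappa,2}$ yields a $3\times\kappa$ matrix whose equivalence class is well defined. One must check this map is a bijection: injectivity and surjectivity both follow because the all-zero row is canonically identifiable (it is the unique row type, among the three, equal to the zero vector — unless there are two all-zero rows, which corresponds to the $2\times\kappa$ class containing an all-zero row, and the bookkeeping still goes through). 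The same argument with the all-one vector gives the second bijection, so each of $|\{\text{all-0 row}\}|$ and $|\{\text{all-1 row}\}|$ equals $|\mathcal{K}_{\kappa,2}|$. The intersection term — matrices having both an all-zero and an all-one row — is in bijection with nonequivalent $1\times\kappa$ matrices, of which there are exactly $\kappa+1$ (a single row of length $\kappa$ is determined up to column permutation by its weight, which ranges over $0,1,\dots,\kappa$). Plugging in, $|\{\text{has all-0 or all-1 row}\}| = 2|\mathcal{K}_{\kappa,2}| - (\kappa+1)$, and subtracting from $|\mathcal{K}_{\kappa,3}|$ gives the claimed $|\mathcal{K}_{\kappa,3}| - 2|\mathcal{K}_{\kappa,2}| + (\kappa+1)$.

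\medskip

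The main obstacle I anticipate is verifying that the deletion/insertion maps are genuinely well defined on equivalence classes and bijective, rather than merely surjective or counting with multiplicity. The subtle cases are matrices with \emph{two} all-zero rows (or two all-one rows, or one of each together with a third repeated row): one must confirm that such a matrix maps to the class in $\mathcal{K}_{\kappa,2}$ that itself has the corresponding special row, and that no double-counting arises — this is exactly what the inclusion--exclusion correction $-(\kappa+1)$ handles, but it needs to be argued carefully that the "all-zero row" count and the "all-one row" count each include, with multiplicity one, precisely the matrices having both. A clean way to organize this is to work directly with column distributions: an all-zero row in $\mathbf{P}$ forces $n_i = 0$ for all column types $i$ having a $1$ in that row position, and one restricts the enumeration in Theorem~\ref{theorem:gamma3} to that sub-face of the simplex, matching it to the enumeration of $\mathcal{K}_{\kappa,2}$; symmetrically for an all-one row; the intersection restricts to a face matching $1\times\kappa$ matrices. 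This makes the bijections transparent and avoids delicate arguments about canonical forms.
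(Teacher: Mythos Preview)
Your proposal is correct and follows essentially the same approach as the paper: count nonequivalent $3\times\kappa$ matrices with an all-zero row by deleting that row and reducing to $|\mathcal{K}_{\kappa,2}|$, do the same for an all-one row, and correct via inclusion--exclusion with the $\kappa+1$ classes having both. The paper's argument is terser---it works directly with column distributions (e.g., an all-zero first row forces $n_4=n_5=n_6=n_7=0$) rather than spelling out the deletion/insertion bijection---which is exactly the ``clean way'' you describe in your final paragraph.
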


\begin{example}\label{Ex:gamma3_1}
Let $ \kappa=11$ and $\gamma=3 $. Then, there are $6{,}080$ nonequivalent binary matrices with size $3\times11$. Among them, there are $5{,}686$ matrices that do not have all-zero or all-one rows.\vspace{-0.2cm}
\end{example}

After obtaining the reduced space of candidates for matrix $\mathbf{P}_C$, we propose using the approach discussed in Section~\ref{sub:algo:trade-off} to identify the ones that offer a meaningful trade-off between finite-length and asymptotic performance.

\subsection{Local Design\label{Sec:local_des}}

The sub-block locality feature lets smaller contiguous parts of the long codewords to be decoded individually, using a local decoder and thus results in an improved decoding latency. In this sub-section, we investigate how to add irregularity into the local codes in order to improve both their decoding threshold and their cycle properties. In particular, we propose two protograph constructions for the local code of an SC-LDPCL code with parameters $\gamma_l$, $\kappa$, and $\nu$, where $\nu\in\{0,\dots,\kappa-1\}$ is the number of $\star$ elements in the $\gamma_l\times\kappa$ matrix $\mathbf{P}_L$. The two designs we propose have the same rate but may differ considerably in both their threshold and cycles properties. 

We first define some matrices that are used in the constructions. For integers $\alpha, \beta$, and $k$, let $\mathbf{Q}(\alpha,\beta;k)=[q_{i,j}]$ and $\mathbf{S}(\alpha,\beta)=[s_{i,j}]$ be $\alpha\times\beta$  matrices, such that 
\begin{equation*}
    q_{i,j} = \begin{cases}
        0&i=k  \\
        1&\text{otherwise}
    \end{cases},\quad\quad
    s_{i,j}=\begin{cases}
        0&j=i\leq\min(\alpha,\beta)  \\
        1& \text{otherwise.}
    \end{cases}
\end{equation*}
Let write $\nu=a\gamma_l+b$ with positive integers $a$ and $b$ such that $b < \gamma_l$. The \textit{balanced} and \textit{unbalanced} local-code constructions are represented by the proto-matrices $\mathbf{B}_B$ and $\mathbf{B}_U$, respectively,
\begin{flalign}\label{def_mat_BB}
\mathbf{B}_B\hspace{-0.08cm}=
{\small\left[\!\!
\begin{array}{c:c:c:c:c}
\!\mathbf{1}_{\gamma_l\times(\kappa-\nu)}\hspace{-0.08cm}&\hspace{-0.08cm} \mathbf{S}(\gamma_l,b)\hspace{-0.08cm}&\hspace{-0.08cm} \mathbf{Q}(\gamma_l,a;\gamma_l)\hspace{-0.08cm}&\hspace{-0.08cm}\dots\hspace{-0.08cm}&\hspace{-0.08cm} \mathbf{Q}(\gamma_l,a;1)\!
\end{array}\!\! \right]},\quad\mathbf{B}_U={\small\left[\!\!
\begin{array}{c:c}
\mathbf{1}(\gamma_l,\kappa-\nu) & \mathbf{Q}(\gamma_l,\nu;1) 
\end{array}\!\! \right],}
\end{flalign}
where the vertical dashed lines represent the horizontal concatenation of sub-matrices. $\mathbf{B}_B$ and $\mathbf{B}_U$ are both $\gamma_l \times \kappa$ matrices with $\nu$ zero entries; in $\mathbf{B}_B$, zeros are uniformly distributed among the rows, while in $\mathbf{B}_U$, all zeros are in the first row.\vspace{-0.3cm}

\begin{example}\label{Ex:Const 1}
	Let $ \gamma_l=3 $, $\kappa=13$, and $ \nu = 10 $. Then,
	\begin{align*}
	&\mathbf{B}_B = 
	\left[\begin{array}{c:c:c:c:c}
	1\,1\,1&0&1\,1\,1&1\,1\,1&0\,0\,0\\
	1\,1\,1&1&1\,1\,1&0\,0\,0&1\,1\,1\\
	1\,1\,1&1&0\,0\,0&1\,1\,1&1\,1\,1
	\end{array}\right],\quad\mathbf{B}_U = 
	\left[\begin{array}{c:cccc}
    1\,1\,1&0&0\,0\,0&0\,0\,0&0\,0\,0 \\
    1\,1\,1&1&1\,1\,1&1\,1\,1&1\,1\,1\\
    1\,1\,1&1&1\,1\,1&1\,1\,1&1\,1\,1
    \end{array} \right].
	\end{align*}
\end{example}


\begin{proposition}\label{Prop:thU<thB}
	Let $\kappa$, $\gamma_l$, and $\nu<\kappa$ be positive integers. If $ \kappa-\lfloor\tfrac{\nu}{\gamma_l}\rfloor \leq \nu$, then $ \sigma^*(\mathbf{B}_U)\leq  \sigma^*(\mathbf{B}_B) $.
\end{proposition}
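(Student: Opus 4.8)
The plan is to compare the EXIT recursions (cf. Section~\ref{Sub:EXIT}) of the two protographs run at a common channel parameter $\sigma$, and to show that the recursion associated with $\mathbf{B}_U$ is dominated, coordinate-by-coordinate, by the one associated with $\mathbf{B}_B$. Since $\sigma^*(\cdot)$ is by definition the supremum of the $\sigma$ for which the recursion started from the all-zero state converges to the all-one fixed point, such a domination immediately gives that the convergence region of $\mathbf{B}_U$ is contained in that of $\mathbf{B}_B$, and hence $\sigma^*(\mathbf{B}_U)\le\sigma^*(\mathbf{B}_B)$.

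First I would record the degree profiles implied by \eqref{def_mat_BB}. Writing $\nu=a\gamma_l+b$ with $a=\lfloor\nu/\gamma_l\rfloor$ and $0\le b<\gamma_l$, a direct count shows that $\mathbf{B}_B$ and $\mathbf{B}_U$ have the \emph{same} column-weight distribution --- $\kappa-\nu$ columns of weight $\gamma_l$ and $\nu$ columns of weight $\gamma_l-1$ --- and hence the same number of edges, while their row weights differ: $\mathbf{B}_U$ has one row of weight $\kappa-\nu$ and $\gamma_l-1$ rows of weight $\kappa$, whereas every row of $\mathbf{B}_B$ has weight $\kappa-a$ or $\kappa-a-1$. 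The hypothesis $\kappa-\lfloor\nu/\gamma_l\rfloor\le\nu$ is exactly the statement that \emph{every} check node of $\mathbf{B}_B$ has degree at most $\nu$, hence strictly less than $\kappa$, the degree of the $\gamma_l-1$ ``heavy'' check nodes of $\mathbf{B}_U$; this is the one place the hypothesis is used.

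Next I would use the permutation symmetry of the two protographs to collapse each EXIT recursion to a system in a bounded number of scalar EXIT values: for $\mathbf{B}_U$ one tracks the VN$\to$CN values out of the weight-$\gamma_l$ and weight-$(\gamma_l-1)$ variable nodes towards, respectively, the light (weight-$(\kappa-\nu)$) check and the heavy (weight-$\kappa$) checks; for $\mathbf{B}_B$ one tracks the analogous handful of values, now indexed also by whether the incident check has weight $\kappa-a$ or $\kappa-a-1$. Using that $J_{out}^{(V)}$ and $J_{out}^{(C)}$ in \eqref{Eq:J VN}--\eqref{Eq:J CN} are nondecreasing in every argument, and that $J_{out}^{(C)}$ is \emph{nonincreasing} in the check degree (a higher-degree check has more incoming arguments, which can only lower its extrinsic output), I would argue by induction on the iteration index that every VN$\to$CN value of the $\mathbf{B}_U$ recursion stays at most the corresponding value of the $\mathbf{B}_B$ recursion. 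The crucial observation is that the weight-$(\gamma_l-1)$ variable nodes --- which are the bottleneck nodes, receiving the weakest incoming information in both graphs --- hear in $\mathbf{B}_U$ only from weight-$\kappa$ checks, and by the degree comparison above these produce no more extrinsic information than any check of $\mathbf{B}_B$; meanwhile the light check of $\mathbf{B}_U$, although strong, is connected only to the weight-$\gamma_l$ variable nodes, which in $\mathbf{B}_B$ are themselves connected to all $\gamma_l$ moderate-degree checks and so are not the bottleneck. Propagating these two facts through one VN update and one CN update preserves the inductive inequality, and passing to the limit gives coordinatewise domination of the fixed points.

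The step I expect to be the main obstacle is precisely this induction: the two protographs do not have isomorphic check neighbourhoods, so one cannot match edges one-to-one, and care is needed both to choose the right order-preserving correspondence between the reduced state spaces and to rule out that the single light check of $\mathbf{B}_U$ overturns the deficit created by its $\gamma_l-1$ heavy checks --- which is exactly what can happen when $\kappa-\lfloor\nu/\gamma_l\rfloor>\nu$. Concretely, this reduces to verifying, for each of the finitely many message types, a scalar inequality between a composition of $J_{out}^{(V)}$ and $J_{out}^{(C)}$ evaluated at the $\mathbf{B}_U$ degrees versus at the $\mathbf{B}_B$ degrees, which holds thanks to the ordering $\kappa-a\le\nu<\kappa$. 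Once the domination of the fixed points is in hand, the conclusion $\sigma^*(\mathbf{B}_U)\le\sigma^*(\mathbf{B}_B)$ follows from the definition of the EXIT threshold.
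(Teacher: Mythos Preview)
Your plan differs from the paper's, and the place you yourself flag as ``the main obstacle'' is in fact a genuine gap. A direct edge-by-edge domination of the $\mathbf{B}_U$-recursion by the $\mathbf{B}_B$-recursion does not go through: after the very first CN half-iteration the light check of $\mathbf{B}_U$ (degree $\kappa-\nu\le\kappa-a$) already outputs a value \emph{larger} than any check of $\mathbf{B}_B$, so the degree-$\gamma_l$ VNs of $\mathbf{B}_U$ may send to the heavy checks messages exceeding what any VN of $\mathbf{B}_B$ sends. Your inductive claim ``the weight-$\kappa$ checks of $\mathbf{B}_U$ produce no more extrinsic information than any check of $\mathbf{B}_B$'' needs both the degree inequality (which you have) \emph{and} that their inputs are no larger --- but the latter is precisely the inductive hypothesis you are trying to establish, and it is false for the inputs coming from the degree-$\gamma_l$ VNs. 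Calling those VNs ``not the bottleneck'' does not remove them from the heavy check's input list; their boosted messages can, in principle, lift the heavy-check output and break the coordinatewise ordering. The hypothesis $\kappa-\lfloor\nu/\gamma_l\rfloor\le\nu$ is not strong enough to make this a purely monotone comparison.

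The paper avoids this entanglement by inserting two \emph{regular} protographs between $\mathbf{B}_U$ and $\mathbf{B}_B$. First it shows $\sigma^*(\mathbf{B}_B)\ge\sigma^*(\gamma_l-1,\kappa-a)$: it modifies $\mathbf{B}_B$ into a $(\gamma_l-1,\kappa-a)$-regular graph $\overline{\mathbf{B}}_B$ by cutting the degree-$\gamma_l$ VNs down to $\gamma_l-1$ and padding the CNs with auxiliary degree-$1$ VNs carrying a time-varying ``channel'' equal to the regular recursion's current message; against this uniform reference, monotonicity in VN degree and CN degree gives a clean induction \eqref{Eq:y>x}. Second it invokes the sub-matrix lemma of \cite{RamCass18b} to get $\sigma^*(\mathbf{B}_U)\le\sigma^*(\gamma_l-1,\nu)$. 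The hypothesis $\kappa-a\le\nu$ then bridges the two regular thresholds via Fact~\ref{Fact: Threshold monotonicity}. The point is that each half of the sandwich compares to a graph with \emph{uniform} messages, which is exactly what makes the domination inductions close; comparing $\mathbf{B}_U$ and $\mathbf{B}_B$ directly forfeits that uniformity.
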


Next, we investigate the cycle properties of the balanced and unbalanced local constructions.

\begin{proposition}\label{Prop:C6U<C6B}
	 Let $\gamma_l\!=\!3$, $\kappa\!>\!0$, and $\nu\!=\!a\gamma_l\!+\!b\!<\!\kappa$, and let $F(\mathbf{B}_B)$ and $F(\mathbf{B}_U)$ denote the number of cycles-$6$ in the protograph of the balanced and unbalanced local codes, respectively. Then $F(\mathbf{B}_U)\leq  F(\mathbf{B}_B)$. 
\end{proposition}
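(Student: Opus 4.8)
\textbf{Proof proposal for Proposition~\ref{Prop:C6U<C6B}.}

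The plan is to count cycles-$6$ in both protographs directly via the overlap-parameter formula~\eqref{eq:cyclesA}--\eqref{equ_A}, exploiting that both matrices consist of a common all-one block $\mathbf{1}_{\gamma_l\times(\kappa-\nu)}$ followed by a ``defect'' block of $\nu$ columns containing exactly $\nu$ zeros (and otherwise ones). Since $\gamma_l=3$, there is only a single triple of rows $\{1,2,3\}$, so $F(\mathbf{B})=A(t_{\{1,2,3\}},t_{\{1,2\}},t_{\{1,3\}},t_{\{2,3\}})$ and everything reduces to comparing the four overlap parameters of $\mathbf{B}_B$ against those of $\mathbf{B}_U$. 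First I would compute these overlaps. For $\mathbf{B}_U$, all $\nu$ zeros sit in row $1$, so $t_{\{2,3\}}=\kappa$ while $t_{\{1,2\}}=t_{\{1,3\}}=t_{\{1,2,3\}}=\kappa-\nu$. For $\mathbf{B}_B$, the zeros are spread as evenly as possible: writing $\nu=3a+b$ with $b\in\{0,1,2\}$, each row misses either $a$ or $a+1$ of the defect columns, and within the defect block the sub-blocks $\mathbf{S}(\gamma_l,b)$ and the $\mathbf{Q}(\gamma_l,a;k)$ are arranged so that no two zeros share a column; hence every defect column has at most one zero, which gives $t_{\{1,2,3\}}=\kappa-\nu$ as well, and the pairwise overlaps are $t_{\{i,i'\}}=\kappa-(\text{number of zeros in row }i)-(\text{number of zeros in row }i')$, each lying in $\{\kappa-2a-2,\ldots,\kappa-2a\}$ and in any case strictly less than $\kappa$ whenever $\nu>0$.

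Next I would plug these into $A(\cdot)$. Because $t_{\{1,2,3\}}=\kappa-\nu$ is identical for both constructions, and $A$ is — by inspection of~\eqref{equ_A} — monotonically non-decreasing in each of the three pairwise overlaps $t_{\{1,2\}},t_{\{1,3\}},t_{\{2,3\}}$ once $t_{\{1,2,3\}}$ is held fixed (each of the four summands is a product of non-negative factors that is non-decreasing in these arguments), it suffices to show that the multiset of pairwise overlaps of $\mathbf{B}_U$ \emph{dominates} that of $\mathbf{B}_B$ in the appropriate sense — i.e., that one can match them up so each overlap of $\mathbf{B}_U$ is at least the corresponding one of $\mathbf{B}_B$. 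This is where the ``zeros concentrated in one row'' structure pays off: in $\mathbf{B}_U$ two of the three pairwise overlaps are as large as possible ($\kappa-\nu$, since one row is missing all $\nu$ defect positions) and the third is the maximum possible value $\kappa$; in $\mathbf{B}_B$ the two smallest zero-counts-per-row are $a$ and $a$ (or $a$ and $a{+}1$), so the largest pairwise overlap of $\mathbf{B}_B$ is $\kappa-2a$ or $\kappa-2a-1\le \kappa$, and the other two are $\kappa-2a-1$ or $\kappa-2a-2 \le \kappa-\nu$ is \emph{not} automatic — so the comparison must be done termwise with a bit of care. I would argue it by the substitution $t_{\{1,2\}}^{U}=t_{\{1,3\}}^{U}=\kappa-\nu \ge$ (the two smallest pairwise overlaps of $\mathbf{B}_B$, each of which equals $\kappa$ minus the sum of two row-zero-counts $\ge a+a=2a$, hence $\le \kappa-2a$, but we need $\le \kappa-\nu=\kappa-3a-b$; this fails when $a>0$). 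So the naive termwise domination is false, and instead I would evaluate $A$ in closed form for both, using the explicit near-equal split, and verify the inequality algebraically.

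The main obstacle is therefore the last step: $A(\mathbf{B}_B)$ does \emph{not} dominate term-by-term, so I expect to reduce $A$ under the constraint $t_{\{1,2,3\}}=\kappa-\nu$ to a polynomial in $a,b,\kappa$ and show $A(\mathbf{B}_B)-A(\mathbf{B}_U)\ge 0$. Concretely: substitute $t_{\{1,2,3\}}=\kappa-\nu$ into~\eqref{equ_A}; for $\mathbf{B}_U$, with $t_{\{2,3\}}=\kappa$ and $t_{\{1,2\}}=t_{\{1,3\}}=\kappa-\nu$, the first three summands collapse and one is left with a compact expression; for $\mathbf{B}_B$, substitute the three pairwise overlaps $\kappa-(z_i+z_j)$ where $(z_1,z_2,z_3)$ is the near-equal partition of $\nu$ (two values equal to $a$, one to $a+b'$, handled by cases on $b$), and expand. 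The desired inequality should then follow because concentrating all ``missing-column mass'' in one row of $\mathbf{B}_U$ removes the many length-$6$ closed walks that would otherwise pass through all three rows of the all-one part — intuitively, $\mathbf{B}_U$ has one near-empty row which kills most triple-row cycles. I would organize the case analysis on $b\in\{0,1,2\}$ (the remainder in $\nu=3a+b$) and, within each, compare the two polynomials; each case is a finite, elementary (if tedious) verification, and the $[\cdot]^+$ operators in~\eqref{equ_A} can be dropped once one checks $t_{\{1,2,3\}}\ge 2$, i.e. $\kappa-\nu\ge 2$, with the small boundary cases $\kappa-\nu\in\{0,1\}$ treated separately (there the $[\,\cdot\,]^+$ zero out the leading terms and the inequality becomes even easier).
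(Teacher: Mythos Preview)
Your proposal is correct and lands on essentially the same approach as the paper: compute the overlap parameters for both $\mathbf{B}_B$ and $\mathbf{B}_U$ (using that no two zeros share a column, so $t_{\{1,2,3\}}=\kappa-\nu$ in both cases), substitute into $A(\cdot)$ from~\eqref{equ_A}, and compare the resulting expressions directly. The paper handles the $b$-dependence via indicator expressions $(b{>}0)$, $(b{>}1)$ rather than an explicit three-case split, but this is a cosmetic difference; your correct observation that the termwise monotonicity argument fails (since the pairwise overlaps of $\mathbf{B}_B$ can exceed $\kappa-\nu$) is a useful sanity check that the paper simply skips.
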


\begin{proposition}\label{pro_FB_FU_gamma4} Let $\gamma_l\!=\!4$, $\kappa\!>\!0$, and $\nu\!=\!a\gamma_l<\!\kappa$, and let $F(\mathbf{B}_B)$ and $F(\mathbf{B}_U)$ denote the number of cycles-$6$ in the protograph of the balanced and unbalanced local codes, respectively. Then $F(\mathbf{B}_U)>  F(\mathbf{B}_B)$.
\end{proposition}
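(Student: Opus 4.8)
The plan is to derive closed-form expressions for $F(\mathbf{B}_B)$ and $F(\mathbf{B}_U)$ from the overlap-parameter expansion~\eqref{eq:cyclesA}--\eqref{equ_A}, and then show that their difference is a manifestly positive polynomial in $a$. Since $\nu=a\gamma_l<\kappa$ and $a$ is a positive integer, we have $a\ge 1$ and $M\triangleq\kappa-\nu=\kappa-4a\ge 1$; the assumption $a\ge 1$ is essential, because for $a=0$ both matrices equal $\mathbf{1}_{4\times\kappa}$ and the strict inequality fails. Because $\nu=a\gamma_l$, the decomposition $\nu=a\gamma_l+b$ has $b=0$, so the $\mathbf{S}(\gamma_l,b)$ block in~\eqref{def_mat_BB} has no columns and $\mathbf{B}_B=[\mathbf{1}_{4\times M}\mid\mathbf{Q}(4,a;4)\mid\mathbf{Q}(4,a;3)\mid\mathbf{Q}(4,a;2)\mid\mathbf{Q}(4,a;1)]$, while $\mathbf{B}_U=[\mathbf{1}_{4\times M}\mid\mathbf{Q}(4,4a;1)]$.

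First I would extract the overlap parameters. In $\mathbf{B}_B$ every column carries at most one zero, and the $4a$ zero-bearing columns split into four groups of $a$, one per row; hence, by full row-symmetry, every pairwise overlap equals $P\triangleq\kappa-2a$ and every triple overlap equals $R\triangleq\kappa-3a$. In $\mathbf{B}_U$, rows $2,3,4$ are all-ones while row~$1$ has $M$ ones, so $t_{\{1,j\}}=t_{\{1,j,k\}}=M$ and $t_{\{j,k\}}=t_{\{2,3,4\}}=\kappa$ for all $j,k\in\{2,3,4\}$.

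Next I would substitute these values into~\eqref{eq:cyclesA}, a sum over the $\binom{4}{3}=4$ row-triples. It is convenient to first record the identity $A(t_{ijk},t_{ij},t_{ik},t_{jk})=t_{ij}t_{ik}t_{jk}-t_{ijk}(t_{ij}+t_{ik}+t_{jk}-2)$, which follows from~\eqref{equ_A} by elementary algebra whenever the $[\cdot]^+$ truncations are inactive (they are here, since every overlap in play is at least $R=M+a\ge 2$); it also counts directly the triples of distinct columns obtained by choosing one column from each of the three pairwise-overlap sets. Applying it, all four triples of $\mathbf{B}_B$ contribute the same value $P^3-R(3P-2)$, so $F(\mathbf{B}_B)=4\bigl[(\kappa-2a)^3-(\kappa-3a)(3(\kappa-2a)-2)\bigr]$; for $\mathbf{B}_U$, the triple $\{2,3,4\}$ contributes $\kappa(\kappa-1)(\kappa-2)$ and each of the three triples $\{1,j,k\}$ contributes $M^2\kappa-M(2M+\kappa-2)=M(M-1)(\kappa-2)$, so $F(\mathbf{B}_U)=3M(M-1)(\kappa-2)+\kappa(\kappa-1)(\kappa-2)$.

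Finally, expanding both expressions as polynomials in $\kappa$ and $a$ (via $M=\kappa-4a$) and simplifying, all terms cancel except a remainder purely in $a$:
\[
F(\mathbf{B}_U)-F(\mathbf{B}_B)=32a^3-24a^2=8a^2(4a-3),
\]
which is strictly positive for every $a\ge 1$, proving the claim. I do not anticipate a conceptual obstacle: the effort is entirely in the bookkeeping --- reading off the overlap parameters from the block structure (made easy by the row-symmetry of $\mathbf{B}_B$ when $b=0$), checking that the $[\cdot]^+$ operations in~\eqref{equ_A} never activate for the overlap values that occur, and carrying out the final polynomial cancellation without error. It is worth remarking that the sign is opposite to Proposition~\ref{Prop:C6U<C6B} (the $\gamma_l=3$ case): when $\gamma_l=4$ the unbalanced code has three all-ones rows, hence an all-ones row-triple contributing the maximal $\kappa(\kappa-1)(\kappa-2)$ cycles, and it is precisely this term --- absent when $\gamma_l\le 3$ and absent from $\mathbf{B}_B$, each of whose rows carries $a\ge 1$ zeros --- that tips the comparison.
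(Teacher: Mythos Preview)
Your proof is correct and follows essentially the same route as the paper: compute the overlap parameters of $\mathbf{B}_B$ and $\mathbf{B}_U$, plug into~\eqref{eq:cyclesA}--\eqref{equ_A}, and simplify the difference. Your use of the compact identity $A(t_{ijk},t_{ij},t_{ik},t_{jk})=t_{ij}t_{ik}t_{jk}-t_{ijk}(t_{ij}+t_{ik}+t_{jk}-2)$ streamlines the algebra compared to the paper's direct expansion, and your final value $F(\mathbf{B}_U)-F(\mathbf{B}_B)=8a^2(4a-3)=(\nu^2/2)(\nu-3)$ is in fact the correct one (the paper's stated $\nu^2(3/2-\nu)$ has a small arithmetic slip in the cubic coefficient, though the sign conclusion is unaffected). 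One minor point: your justification that the $[\cdot]^+$ truncations are inactive because ``every overlap in play is at least $R=M+a\ge 2$'' is not quite accurate for the unbalanced triples $\{1,j,k\}$, where $t_{\{1,j,k\}}=M$ can equal $1$; the correct statement is that the arguments $T-1$, $z-1$, $z-2$ are all nonnegative (since $T\ge M\ge 1$ and $z\ge P\ge 3$), which is all that is needed.
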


\begin{remark}
In Proposition~\ref{pro_FB_FU_gamma4}, we assumed $\nu$ is divisible by $\gamma_l$ only for simplicity. One can find a condition on $\nu$ for general case $\nu=a\gamma_l+b$ (where $a\geq 0$ and $0\leq b<\gamma_l$) such that $F(\mathbf{B}_U)>F(\mathbf{B}_B)$, by formulating the overlap parameters in terms of parameters $a$, $b$, and $\kappa$.
\end{remark}
\begin{remark}
For $\gamma_l=3$, Propositions~\ref{Prop:thU<thB},~\ref{Prop:C6U<C6B} show that there is a trade-off between cycle and threshold properties of these local codes, and it is the designer discretion to choose between balanced and unbalanced schemes, depending on which feature is more desirable. This trade-off does not exist for $\gamma_l=4$, where the balanced scheme has better performance in both features.\vspace{-0.3cm}
\end{remark}

We call an SC-LDPC code whose both global code and local code are designed to achieve the best cycle-count properties, resp., threshold properties, a locality-aware cycle-driven (LA-CD) choice, resp., locality-aware threshold-driven (LA-CD) choice.

\section{Simulation Results}
\label{Sec:Simulations}

In our simulations, we consider parameters $\kappa=11$, $z=67$ $\gamma_c=3$, $m=1$, $l=5$, $\gamma_l\in\{0,2,3\}$, and power matrix $C=[c_{i,j}]$ with $c_{i,j}=6{\cdot}i{\cdot}j \;(\mathrm{mod}\;z)$, which yields cycle-4 free graphs \cite{FanTurbo2000}. We investigate the performance of SC-LDPC codes with and without sub-block locality constructed using various methods (the new introduced methods and existing methods). Our results include the BER performance, cycle-counts, and threshold values. For Monte Carlo simulations, we observed at least $50$ frame errors in every reported point.\vspace{-0.4cm}

\subsection{Setup~1: Regular SC-LDPC Code Without Locality}

In this subsection, we consider $\gamma_l=0$ (SC codes with no locality), and compare four different design methods for $\mathbf{P}=[p_{i,j}]$:

\begin{itemize}
    \item\label{Item:CV_method} \textit{Cutting-vector (CV) partitioning} \cite{MitchellISIT2014}: This is partitioning via a cutting vector with size $\gamma$ whose elements $0<\zeta_1<\ldots<\zeta_{\gamma}$ are natural numbers. Then, $p_{i,j}=0$ if and only if $j<\zeta_i$. We consider the cutting vector $[4,8,11]$ for the simulations.
    \item \label{Item:Unconstrained_OO} 
\textit{Optimal overlap (OO) partitioning} \cite{EsfahanizadehTCOM2018}: The OO partitioning results in the minimum number of cycles-$6$ in the protograph SC code.
\item \label{Item:CD} 
\textit{Cycle-driven (CD) partitioning}: This is the partitioning within our reduced search space that results in the minimum number of cycles-$6$ in the \emph{lifted} graph.
\item \label{Item:TD} \textit{Threshold-driven (TD) partitioning}: This is the partitioning within our reduced search space that has the maximum threshold.
\end{itemize}

We first record the populations of cycles-$6$ along with the threshold values for the SC-LDPC codes that are constructed via the above methods. 
The results are given in the left panel of Table~\ref{Tab:GamL_02_GamC_3_cycle_pop_threshold}, where it is shown that the CD method yields $54\%$ reduction in the population of cycles-$6$ (in lifted graphs) compared to the CV method, while this reduction is less than $1\%$ for the OO method compared to the CV method. We remind that the OO method results in the minimum number of cycles-6 in the protograph, not necessarily in the lifted graph. In terms of the asymptotic behavior, the TD method results in the highest threshold while also having fewer number of cycles compared to the CV and OO methods.

\begin{table}
\centering
\caption{Cycle and threshold properties of various design methods for SC-LDPC codes with parameters $\kappa=11$, $z=67$, $m=1$, $l=5$. Left table: Without locality ($\gamma=3$); Right table: With locality ($\gamma_l=2$, $\gamma_c=5$).}
\vspace{-0.5cm}
\setlength\tabcolsep{4pt}
\begin{tabular}{l c c}
design & cycles-$6$ & $\sigma^*$\\
\hline
\hline
CV \cite{MitchellISIT2014} & $7,638$ & $0.6779$ \\
\hline
OO \cite{EsfahanizadehTCOM2018} & $7,571$ & $0.6901$ \\
\hline
CD & $3,551$ & $0.6851$ \\
\hline
TD & $5,628$ & $0.6909$ \\
\hline
\end{tabular} \quad\quad\quad
\begin{tabular}{l c c}
design & cycles-$6$ & $\sigma^*$\\
\hline
\hline
LB-CV & $83,348$ & $0.825$ \\
\hline
LB-OO & $49,714$ & $0.8433$ \\
\hline
LA-CD & $41,540$ & $0.8438$ \\
\hline
LA-TD & $49,044$ & $0.8983$ \\
\hline
\end{tabular} 
\label{Tab:GamL_02_GamC_3_cycle_pop_threshold}
\vspace{-1cm}
\end{table}

Fig.~\ref{Fig:GamL_0_GamC_3_cycle_pop_threshold} compares the BER performance for these SC-LDPC codes. The left sub-figure shows the BER performance in the low-SNR region and in particular the superiority of the TD partitioning with about half an order of magnitude compared to the CV partitioning at SNR$=2.5$~dB. The right sub-figure shows the BER performance in the high-SNR region and the superiority of the CD partitioning with about one order of magnitude compared to the CV partitioning at SNR$=5$~dB. Moreover, there is a crossover point at SNR $\simeq3$~dB, where the BER performances of CD and TD methods intersects.

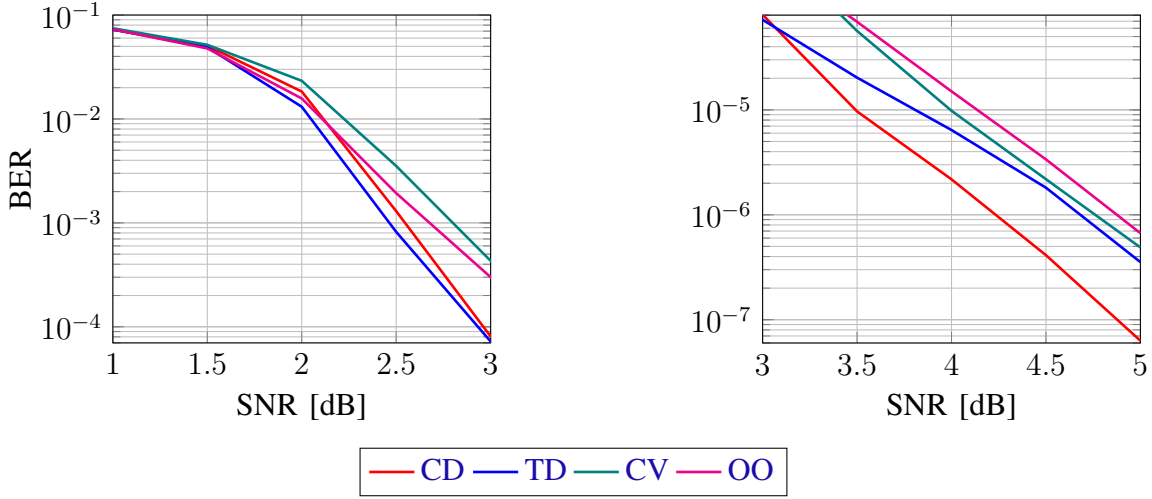
\begin{figure}
\centering
\begin{tikzpicture}

\begin{axis}[%
width=0.4\linewidth,
height=0.36\linewidth,
at={(0,0)},
xmin=1,
xmax=3,
xlabel={SNR [dB]},
ymode=log,
ymin=7e-05,
ymax=0.1,
yminorticks=true,
ylabel={BER},
xmajorgrids,
ymajorgrids,
yminorgrids,
]
\addplot [color=red,line width=1pt]
  table[row sep=crcr]{%
0	0.108959100600892\\
1	0.0722517283711314\\
1.5	0.0506900562124443\\
2	0.0183640240356658\\
2.5	0.00129625482472896\\
3	8.08088243905459e-05\\
3.5	9.67232000582644e-06\\
4	2.17715590087988e-06\\
4.5	4.12141250440116e-07\\
5	6.30931495964934e-08\\
};

\addplot [color=blue,line width=1pt]
  table[row sep=crcr]{%
0	0.10952316340376\\
1	0.072710473605996\\
1.5	0.0490065904245009\\
2	0.0131078374361956\\
2.5	0.000819148299842278\\
3	7.22101548441949e-05\\
3.5	2.0298113519036e-05\\
4	6.42603282716579e-06\\
4.5	1.81255238849818e-06\\
5	3.54134857565943e-07\\
};

\addplot [color=teal,line width=1pt]
  table[row sep=crcr]{%
1	0.074640434192673\\
1.5	0.051697034308975\\
2	0.023281643729405\\
2.5	0.003526846288040\\
3	4.312151727146053e-04\\
3.5	5.654572903588130e-05\\
4	9.808157962706255e-06\\
4.5	2.184313454678955e-06\\
5	4.884341948438173e-07\\
};

\addplot [color=magenta,line width=1pt]
  table[row sep=crcr]{%
1	0.072910447761194\\
1.5	0.047716611746462\\
2	0.015744330296569\\
2.5	0.001931966959461\\
3	3.003067060768982e-04\\
3.5	6.912184975357486e-05\\
4	1.501276989545539e-05\\
4.5	3.372690444821332e-06\\
5	6.686301132233171e-07\\
};

\end{axis}

\begin{axis}[%
width=0.4\linewidth,
height=0.36\linewidth,
at={(3.4in,0)},
xmin=3,
xmax=5,
xlabel={SNR [dB]},
ymode=log,
ymin=6e-08,
ymax=8e-05,
yminorticks=true,
xmajorgrids,
ymajorgrids,
yminorgrids,
legend columns=4, 
legend entries={ CD, TD, CV, OO},
legend to name= leg,
legend style={
    legend cell align=left, 
    align=left, 
    draw=black,
    /tikz/column 2/.style={column sep=0.5pt},
}
]
\addplot [color=red,line width=1pt]
  table[row sep=crcr]{%
0	0.108959100600892\\
1	0.0722517283711314\\
1.5	0.0506900562124443\\
2	0.0183640240356658\\
2.5	0.00129625482472896\\
3	8.08088243905459e-05\\
3.5	9.67232000582644e-06\\
4	2.17715590087988e-06\\
4.5	4.12141250440116e-07\\
5	6.30931495964934e-08\\
};

\addplot [color=blue,line width=1pt]
  table[row sep=crcr]{%
0	0.10952316340376\\
1	0.072710473605996\\
1.5	0.0490065904245009\\
2	0.0131078374361956\\
2.5	0.000819148299842278\\
3	7.22101548441949e-05\\
3.5	2.0298113519036e-05\\
4	6.42603282716579e-06\\
4.5	1.81255238849818e-06\\
5	3.54134857565943e-07\\
};

\addplot [color=teal,line width=1pt]
  table[row sep=crcr]{%
1	0.074640434192673\\
1.5	0.051697034308975\\
2	0.023281643729405\\
2.5	0.003526846288040\\
3	4.312151727146053e-04\\
3.5	5.654572903588130e-05\\
4	9.808157962706255e-06\\
4.5	2.184313454678955e-06\\
5	4.884341948438173e-07\\
};

\addplot [color=magenta,line width=1pt]
  table[row sep=crcr]{%
1	0.072910447761194\\
1.5	0.047716611746462\\
2	0.015744330296569\\
2.5	0.001931966959461\\
3	3.003067060768982e-04\\
3.5	6.912184975357486e-05\\
4	1.501276989545539e-05\\
4.5	3.372690444821332e-06\\
5	6.686301132233171e-07\\
};

\end{axis}

\end{tikzpicture}%

\ref{leg}
\caption{BER performance of various design methods for SC-LDPC codes with parameters $\kappa=11$, $z=67$, $\gamma=3$, $\gamma_l=0$, $m=1$, and $l=5$. Left: low-SNR region; right: high-SNR region.\vspace{-1cm}}
\label{Fig:GamL_0_GamC_3_cycle_pop_threshold}
\end{figure}

\subsection{Setup~2: Regular SC-LDPC Code with Locality}

In this subsection,  we consider $\gamma_l=2$ (SC codes with  locality) and $\mathbf{P}_L=\mathbf{0}_{\gamma_l\times\kappa}$ (last $\gamma_l$ rows of $\mathbf{P}$ are all-$0$, that is, the matrix $\mathbf{B}_0$ has $\gamma_l$ all-$1$ rows). Then, we compare four different design methods for the matrix $\mathbf{P}_C$:
\begin{itemize}
    \item\textit{Locality-blind cutting-vector (LB-CV) partitioning}: This is partitioning matrix $\mathbf{P}_C$ via cutting vector $[4,8,11]$ \cite{MitchellISIT2014}, similar to the case with no locality.
    \item\textit{Locality-blind optimal overlap (LB-OO) partitioning}: This is partitioning matrix $\mathbf{P}_C$ via OO approach \cite{EsfahanizadehTCOM2018}, ignoring the existence of the local part $\mathbf{P}_L$.
    \item\textit{Locality-aware cycle-driven (LA-CD) partitioning}: This is the optimal partitioning within our reduced search space that results in the minimum number of cycles-$6$.
    \item\textit{Locality-aware threshold-driven (LA-TD) partitioning}: This is the optimal partitioning within our reduced search space that results in the maximum threshold.
\end{itemize}

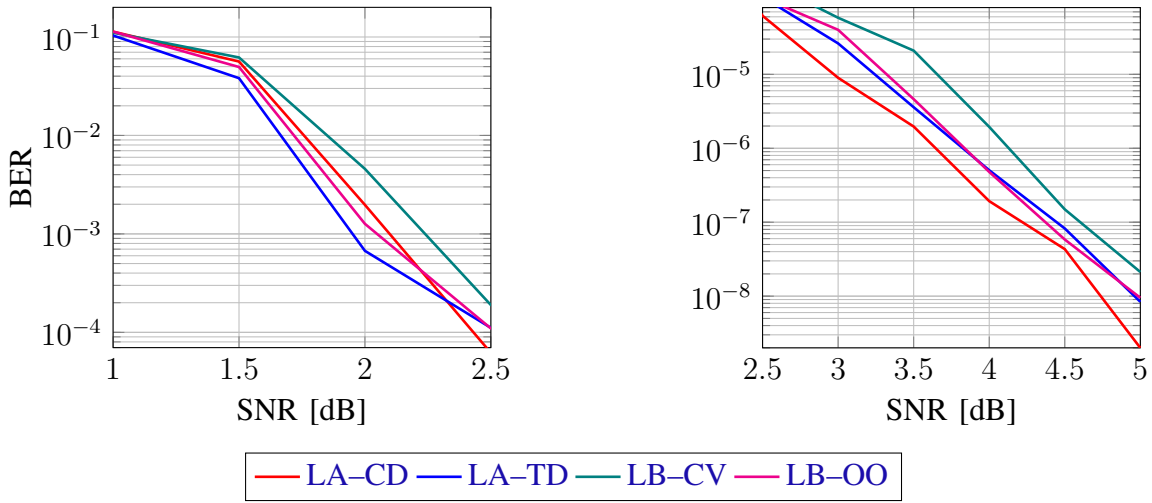
\begin{figure}
\centering
\begin{tikzpicture}

\begin{axis}[%
width=0.40\linewidth,
height=0.37\linewidth,
at={(0,0)},
xmin=1,
xmax=2.5,
xlabel={SNR [dB]},
ymode=log,
ymin=7e-05,
ymax=0.2,
yminorticks=true,
ylabel={BER},
xmajorgrids,
ymajorgrids,
yminorgrids,
legend columns=4, 
legend entries={ LA--CD, LA--TD, LB--CV, LB--OO},
legend to name= leg,
legend style={
    legend cell align=left, 
    align=left, 
    draw=black,
    /tikz/column 2/.style={column sep=0.5pt},
}
]
\addplot [color=red,line width=1pt]
  table[row sep=crcr]{%
0	0.153093622795115\\
1	0.113367577695936\\
1.5	0.0563132389998062\\
2	0.00197131226981973\\
2.5	6.19944880940786e-05\\
3	8.97677185785747e-06\\
3.5	1.96870667354629e-06\\
4	1.92864768249648e-07\\
4.5	4.32493904782837e-08\\
5	1.99536927270018e-09\\
};

\addplot [color=blue,line width=1pt]
  table[row sep=crcr]{%
0	0.151649544485365\\
1	0.10353104606836\\
1.5	0.0381604962201977\\
2	0.000672702357825131\\
2.5	0.00011117579502699\\
3	2.62056056351904e-05\\
3.5	3.59203343886759e-06\\
4	5.04394233190204e-07\\
4.5	8.22586446351671e-08\\
5	8.38050781040581e-09\\
};

\addplot [color=teal,line width=1pt]
  table[row sep=crcr]{%
1	0.110540051584828\\
1.5	0.061927699166505\\
2	0.004571124720378\\
2.5	1.895922676634502e-04\\
3	5.826110944595089e-05\\
3.5	2.092720420143358e-05\\
4	1.946788316245458e-06\\
4.5	1.492425381529221e-07\\
5	2.119864125053369e-08\\
};

\addplot [color=magenta,line width=1pt]
  table[row sep=crcr]{%
1	0.113331071913161\\
1.5	0.049545454545455\\
2	0.001264936315490\\
2.5	1.096176083309382e-04\\
3	3.999757678914343e-05\\
3.5	4.618146776278494e-06\\
4	4.786271862399168e-07\\
4.5	5.834434871244911e-08\\
5	9.497929104611177e-09\\
};

\end{axis}

\begin{axis}[%
width=0.40\linewidth,
height=0.37\linewidth,
at={(3.4in,0)},
xmin=2.5,
xmax=5,
xlabel={SNR [dB]},
ymode=log,
ymin=2e-09,
ymax=8e-05,
yminorticks=true,
xmajorgrids,
ymajorgrids,
yminorgrids,
legend style={
    legend cell align=left, 
    align=left, 
    draw=black
}
]
\addplot [color=red,line width=1pt]
  table[row sep=crcr]{%
0	0.153093622795115\\
1	0.113367577695936\\
1.5	0.0563132389998062\\
2	0.00197131226981973\\
2.5	6.19944880940786e-05\\
3	8.97677185785747e-06\\
3.5	1.96870667354629e-06\\
4	1.92864768249648e-07\\
4.5	4.32493904782837e-08\\
5	1.99536927270018e-09\\
};

\addplot [color=blue,line width=1pt]
  table[row sep=crcr]{%
0	0.151649544485365\\
1	0.10353104606836\\
1.5	0.0381604962201977\\
2	0.000672702357825131\\
2.5	0.00011117579502699\\
3	2.62056056351904e-05\\
3.5	3.59203343886759e-06\\
4	5.04394233190204e-07\\
4.5	8.22586446351671e-08\\
5	8.38050781040581e-09\\
};

\addplot [color=teal,line width=1pt]
  table[row sep=crcr]{%
1	0.110540051584828\\
1.5	0.061927699166505\\
2	0.004571124720378\\
2.5	1.895922676634502e-04\\
3	5.826110944595089e-05\\
3.5	2.092720420143358e-05\\
4	1.946788316245458e-06\\
4.5	1.492425381529221e-07\\
5	2.119864125053369e-08\\
};

\addplot [color=magenta,line width=1pt]
  table[row sep=crcr]{%
1	0.113331071913161\\
1.5	0.049545454545455\\
2	0.001264936315490\\
2.5	1.096176083309382e-04\\
3	3.999757678914343e-05\\
3.5	4.618146776278494e-06\\
4	4.786271862399168e-07\\
4.5	5.834434871244911e-08\\
5	9.497929104611177e-09\\
};

\end{axis}

\end{tikzpicture}%

\ref{leg}
\caption{BER performance of various design methods for SC-LDPC codes with sub-block locality with parameters $\kappa=11$, $z=67$, $\gamma_c=3$, $\gamma_l=2$, $m=1$, $l=5$, and $P_L=\mathbf{0}$. Left: low-SNR region; right: high-SNR region.\label{Fig:BER_Loc}}
\vspace{-0.8cm}
\end{figure}

The results for population of cycles-$6$ in the lifted graphs along with the thresholds for SC-LDPC codes constructed with above methods are given in the right panel of Table~\ref{Tab:GamL_02_GamC_3_cycle_pop_threshold}. As we see, the LA-CD method yields about $50\%$ reduction in the population of cycles-$6$  compared to the LB-CV method, while this reduction is about $40\%$ for the LB-OO method compared to the LB-CV method. In terms of the asymptotic behavior, the LA-TD method results in the highest threshold as motivated by the design while also having fewer number of cycles in the lifted graph compared to the LB-CV and LB-OO methods. As for the BER performance, Fig.~\ref{Fig:BER_Loc} shows the results for SC-LDPC codes with sub-block locality according to the above constructions. Again, the left and right sub-figures show the performance in the low-SNR region and high-SNR region, respectively. Consistent with the design goals, the LA-TD design has superior performance in the low-SNR region, i.e., about $0.7$ of an order of magnitude compared to the LB-CV method at SNR$=2$~dB, while the LA-CD design has superior performance in the high-SNR region, i.e., about one order of magnitude compared to the LB-CV method at SNR$=5$~dB. Moreover, there is a crossover point at SNR $\simeq2.3$~dB, where the BER performances of LA-CD and LA-TD methods intersects.\vspace{-0.4cm}

\subsection{Setup 3: Irregular SC-LDPC Code with Locality}

In this subsection, we consider $\gamma_l= 3$ (SC codes with locality), $\kappa=11, \nu=8$, and $\mathbf{P}_L$ constructed according to balanced and unbalanced designs that were presented in Section~\ref{Sec:local_des}. 
In fact, we consider the balanced design for the local code when the LA-TD method is used for the global design, as the balanced design was shown to have superior threshold performance (see Proposition~\ref{Prop:thU<thB}). 
Similarly, we consider the unbalanced design for the local code when the LA-CD method is used for the global design, as the unbalanced design was shown to have lower cycle-$6$ count for $\gamma=3$ (see Proposition~\ref{Prop:C6U<C6B}). 
We then investigate the performance of local decoding and global decoding for these construction methods (the performance of local decoding corresponds to the performance of the local LDPC code).

Table~\ref{Tab:GamL_3_GamC_3_cycle_pop_threshold} shows the population of cycles-$6$ in the lifted graphs along with the threshold values for the regular and irregular SC-LDPC codes with sub-block locality. As seen, adding unbalanced irregularity results in $63\%$ reduction in the population of cycles-$6$  in the cycle-driven approach, and adding balanced irregularity further improves the decoding threshold in the threshold-driven approach. For the local codes, there is a trade-off between threshold and cycle properties of the balanced and unbalanced schemes, as expected.

\begin{table}
\centering
\caption{Cycle and threshold properties of cycle-driven and threshold-driven methods for SC-LDPC codes with sub-block locality and parameters $\kappa=11$, $z=67$, $\gamma_c=3$, $\gamma_l=3$, $m=1$, and $l=5$.\vspace{-0.4cm}}
\setlength\tabcolsep{4pt}
\begin{tabular}{c|c|c|ccc}
& global design & local design & cycles-$6$ & $\sigma^*$\\
\hline
\hline
\multirow{4}{*}{\rotatebox[origin=c]{90}{global code}}& \multirow{2}{*}{LA-CD} & regular & $89,847$ & $0.8805$ \\
\cline{3-6}
& & unbalanced & $33,031$ & $0.9245$ \\
\cline{2-6}
& \multirow{2}{*}{LA-TD} & regular & $137,082$ & $0.9568$ \\
\cline{3-6}
&  & balanced & $34,170$ & $0.9867$ \\
\hline
\hline
\multirow{2}{*}{\rotatebox[origin=c]{90}{local code}} & \multirow{2}{*}{-} & unbalanced & $268$ & $0.5271$ \\
\cline{3-6}
&  & balanced & $ 536 $ & $0.5979$\vspace{0.1cm} \\
\hline
\end{tabular} 
\label{Tab:GamL_3_GamC_3_cycle_pop_threshold}
\vspace{-0.8cm}
\end{table}

Fig.~\ref{Fig:BER_LocRegIrreg} compares the BER performance of the mentioned constructions using global decoding, with the left sub-figure for low-SNR region and right one for high-SNR region. As shown, adding irregularity improves the performance of the threshold driven design by $0.8$ order of magnitude at SNR$=2$~dB. Moreover, adding irregularity improves the performance of the cycle driven design by approximately one order of magnitude at SNR$=4$~dB. Fig.~\ref{Fig:LocalDec} shows the local-decoding BER performance of the balanced and unbalanced schemes corresponding to the local part of the codes in Fig.~\ref{Fig:BER_LocRegIrreg}. As seen, the BER curves exemplify the theoretical results.

\begin{figure}
\centering
\begin{tikzpicture}

\begin{axis}[%
width=0.40\linewidth,
height=0.37\linewidth,
at={(0,0)},
xmin=1,
xmax=2.5,
xlabel={SNR [dB]},
ymode=log,
ymin=7e-05,
ymax=0.2,
yminorticks=true,
ylabel={BER},
xmajorgrids,
ymajorgrids,
yminorgrids,
legend columns=4, 
legend entries={ LA-CD Regular, LA-TD Regular, LA-CD  Unbalanced, LA-TD  Balanced},
legend to name= legg,
legend style={
    legend cell align=left, 
    align=left, 
    draw=black,
    /tikz/column 2/.style={column sep=0.5pt},
}
]
\addplot [color=red,line width=1pt]
  table[row sep=crcr]{%
1	0.148569490211281\\
1.5	0.125518511339407\\
2	0.072225237449118\\
2.5	0.007047877495639\\
3	1.415774510493917e-04\\
3.5	2.275745910513250e-05\\
4	1.071646971775687e-06\\
4.5	7.734018317696742e-08\\
5	8.819516621493385e-09\\
};

\addplot [color=blue,line width=1pt]
  table[row sep=crcr]{%
1	0.139125799573561\\
1.5	0.090527233960070\\
2	0.019119984493119\\
2.5	0.001973730224230\\
3	3.065946606521876e-04\\
3.5	5.209342895910060e-05\\
4	1.185937570851050e-05\\
4.5	1.173973308454419e-06\\
5	1.370413772083107e-07\\
};

\addplot [color=red,dashed,line width=1pt]
  table[row sep=crcr]{%
1	0.131316146540027\\
1.5	0.085090133746850\\
2	0.011670866446986\\
2.5	1.767111269106245e-04\\
3	1.829636263698248e-05\\
3.5	1.756327957219025e-06\\
4	1.417903358244403e-07\\
4.5	1.560372116728290e-08\\
5	7.462649253917910e-09\\
};

\addplot [color=blue,dashed,line width=1pt]
  table[row sep=crcr]{%
1	0.126586547780578\\
1.5	0.053266136848226\\
2	0.003171157201008\\
2.5	4.482830602233588e-04\\
3	8.297160125675333e-05\\
3.5	1.351462342587820e-05\\
4	1.698315078740303e-06\\
4.5	3.025765061133989e-07\\
5	2.374479308064790e-08\\
};

\end{axis}

\begin{axis}[%
width=0.40\linewidth,
height=0.37\linewidth,
at={(3.4in,0)},
xmin=2.5,
xmax=4.5,
xlabel={SNR [dB]},
ymode=log,
ymin=2e-09,
ymax=8e-05,
yminorticks=true,
xmajorgrids,
ymajorgrids,
yminorgrids,
legend style={
    legend cell align=left, 
    align=left, 
    draw=black
}
]
\addplot [color=red,line width=1pt]
  table[row sep=crcr]{%
1	0.148569490211281\\
1.5	0.125518511339407\\
2	0.072225237449118\\
2.5	0.007047877495639\\
3	1.415774510493917e-04\\
3.5	2.275745910513250e-05\\
4	1.071646971775687e-06\\
4.5	7.734018317696742e-08\\
5	8.819516621493385e-09\\
};

\addplot [color=blue,line width=1pt]
  table[row sep=crcr]{%
1	0.139125799573561\\
1.5	0.090527233960070\\
2	0.019119984493119\\
2.5	0.001973730224230\\
3	3.065946606521876e-04\\
3.5	5.209342895910060e-05\\
4	1.185937570851050e-05\\
4.5	1.173973308454419e-06\\
5	1.370413772083107e-07\\
};

\addplot [color=red,dashed,line width=1pt]
  table[row sep=crcr]{%
1	0.131316146540027\\
1.5	0.085090133746850\\
2	0.011670866446986\\
2.5	1.767111269106245e-04\\
3	1.829636263698248e-05\\
3.5	1.756327957219025e-06\\
4	1.417903358244403e-07\\
4.5	1.560372116728290e-08\\
5	7.462649253917910e-09\\
};

\addplot [color=blue,dashed,line width=1pt]
  table[row sep=crcr]{%
1	0.126586547780578\\
1.5	0.053266136848226\\
2	0.003171157201008\\
2.5	4.482830602233588e-04\\
3	8.297160125675333e-05\\
3.5	1.351462342587820e-05\\
4	1.698315078740303e-06\\
4.5	3.025765061133989e-07\\
5	2.374479308064790e-08\\
};

\end{axis}

\end{tikzpicture}%

\ref{legg}
\caption{BER performance of the cycle-driven and threshold-driven methods for regular and irregular SC-LDPC codes with parameters $\kappa=11$, $z=67$, $\gamma_c=3$, $\gamma_l=3$, $m=1$, and $l=5$. Left: low-SNR region; right: high-SNR region.\label{Fig:BER_LocRegIrreg}}
\end{figure}
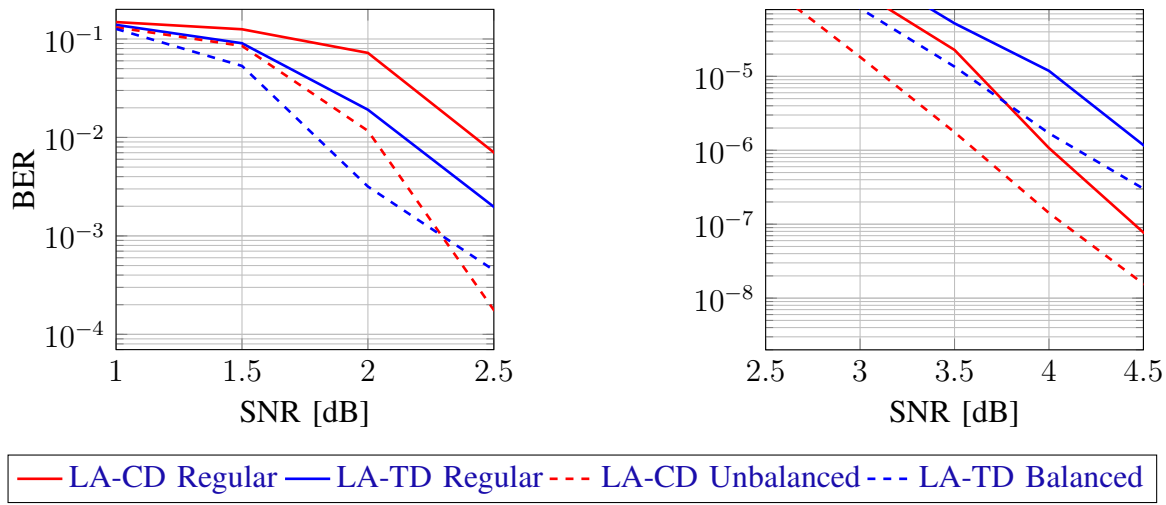

\begin{figure}
\centering
\begin{tikzpicture}

\begin{axis}[%
width=0.30\linewidth,
height=0.26\linewidth,
at={(0,0)},
scale only axis,
xmin=3,
xmax=8,
xlabel={SNR $[\mathrm{dB}]$},
ymode=log,
ymin=5e-5,
ymax=0.05,
yminorticks=true,
ylabel={BER},
xmajorgrids,
ymajorgrids,
yminorgrids,
legend columns=2, 
legend entries={ Unbalanced, Balanced},
legend to name= legend,
legend style={
    legend cell align=left, 
    align=left, 
    draw=black,
    /tikz/column 2/.style={column sep=0.5pt},
}
]


\addplot [color=red, dashed,line width=1pt]
  table[row sep=crcr]{%
3.04119982655925	0.067949214964140\\
3.54119982655925	0.050465206435356\\
4.04119982655925	0.035617367706920\\
4.54119982655925	0.020003876720295\\
5.04119982655925	0.009808102345416\\
5.54119982655925	0.004575827594625\\
6.04119982655925	0.001994793417427\\
6.54119982655925	6.478445688091409e-04\\
7.04119982655925	2.202966661771185e-04\\
7.54119982655925	6.549364263079888e-05\\
8.04119982655925	1.694721366623310e-05\\
8.54119982655925	3.663464043785614e-06\\
9.04119982655925	8.208900497877278e-07\\
9.54119982655925	1.289003052949457e-07\\
10.0411998265592	6.784252035288755e-09\\
};

\addplot [color=blue, dashed,line width=1pt]
  table[row sep=crcr]{%
3.04119982655925	0.066476061252181\\
3.54119982655925	0.048371777476255\\
4.04119982655925	0.026603993021903\\
4.54119982655925	0.012502422950184\\
5.04119982655925	0.003774179058169\\
5.54119982655925	0.001476196334112\\
6.04119982655925	4.919508981477619e-04\\
6.54119982655925	1.946682266015831e-04\\
7.04119982655925	6.557764172007047e-05\\
7.54119982655925	2.695688135906781e-05\\
8.04119982655925	9.317698622314360e-06\\
8.54119982655925	3.059670895828355e-06\\
9.04119982655925	9.113462536210862e-07\\
9.54119982655925	2.611927238871268e-07\\
10.0411998265592	3.052901967511872e-08\\
};

\end{axis}

\begin{axis}[%
width=0.30\linewidth,
height=0.26\linewidth,
at={(3.4in,0)},
scale only axis,
xmin=8,
xmax=10,
xlabel={SNR $[\mathrm{dB}]$},
ymode=log,
ymin=6e-9,
ymax=5e-5,
yminorticks=true,
xmajorgrids,
ymajorgrids,
yminorgrids,
]


\addplot [color=red, dashed,line width=1pt]
  table[row sep=crcr]{%
3.04119982655925	0.067949214964140\\
3.54119982655925	0.050465206435356\\
4.04119982655925	0.035617367706920\\
4.54119982655925	0.020003876720295\\
5.04119982655925	0.009808102345416\\
5.54119982655925	0.004575827594625\\
6.04119982655925	0.001994793417427\\
6.54119982655925	6.478445688091409e-04\\
7.04119982655925	2.202966661771185e-04\\
7.54119982655925	6.549364263079888e-05\\
8.04119982655925	1.694721366623310e-05\\
8.54119982655925	3.663464043785614e-06\\
9.04119982655925	8.208900497877278e-07\\
9.54119982655925	1.289003052949457e-07\\
10.0411998265592	6.784252035288755e-09\\
};

\addplot [color=blue, dashed,line width=1pt]
  table[row sep=crcr]{%
3.04119982655925	0.066476061252181\\
3.54119982655925	0.048371777476255\\
4.04119982655925	0.026603993021903\\
4.54119982655925	0.012502422950184\\
5.04119982655925	0.003774179058169\\
5.54119982655925	0.001476196334112\\
6.04119982655925	4.919508981477619e-04\\
6.54119982655925	1.946682266015831e-04\\
7.04119982655925	6.557764172007047e-05\\
7.54119982655925	2.695688135906781e-05\\
8.04119982655925	9.317698622314360e-06\\
8.54119982655925	3.059670895828355e-06\\
9.04119982655925	9.113462536210862e-07\\
9.54119982655925	2.611927238871268e-07\\
10.0411998265592	3.052901967511872e-08\\
};

\end{axis}

\end{tikzpicture}%

\ref{legend}
\caption{Local decoding BER performance of irregular (balanced and unbalanced) codes with the parameters $\kappa=11$, $z=67$, $\gamma_l=3$. Left: low-SNR region; right: high-SNR region.\label{Fig:LocalDec}\vspace{-1.2cm}
}
\end{figure}
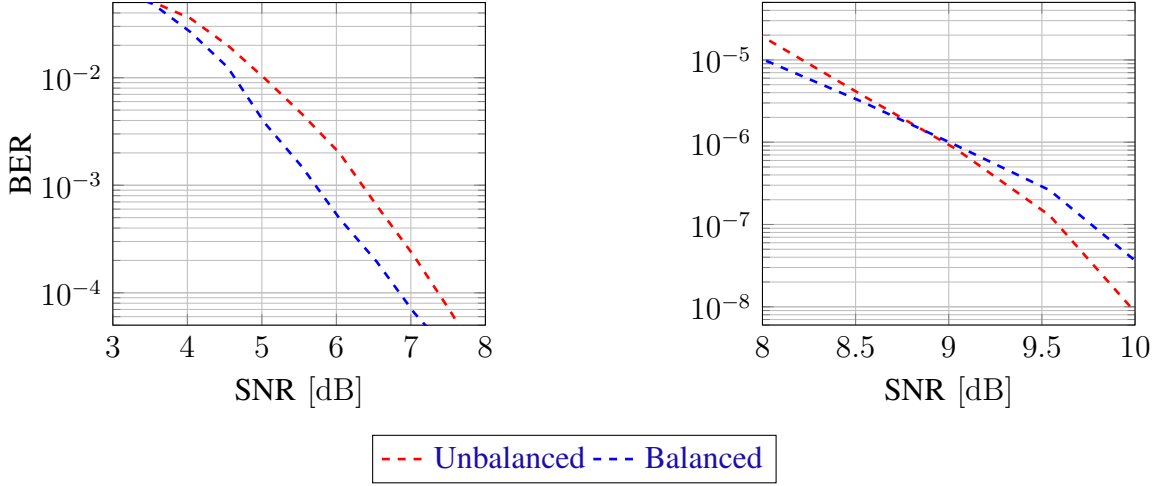

\subsection{Simulations over Partial Response (PR) Channel}

We also perform experiments over the partial response (PR) channel. We use a similar PR setting as in our previous work \cite{EsfTMAG}, which is briefly reviewed here: Our PR setting includes a magnetic recording channel model that incorporates inter-symbol interference in addition to transition jitter noise and electronic noise. The channel density is set to $1.4$, and the equalization target used is $[8,14,2]$. The message is iteratively recovered via a min-sum LDPC decoding algorithm in addition to Bahl Cocke Jelinek Raviv (BCJR) detector based on pattern-dependent noise prediction. The internal iterations inside the LDPC decoder are called local iterations, while a global iteration is the one looping between the detector and the decoder. The decoder performs a specified number of local iterations (fewer if a codeword is reached) between any two successive global iterations. We use $20$ global iterations and $200$ local iterations for our simulations.

Fig.~\ref{Fig:PR} shows the BER comparison between three SC-LDPC code constructions: cutting vector (CV), optimal overlap (OO), and cycle-driven (CD), {over PR channel model with two different level of transition jitter noise}. Since our proposed threshold-driven (TD) construction optimizes the threshold for AWGN channel, we did not incorporate it in our BER evaluation over PR channel. We see that, while all having the same latency and rate, our CD construction enjoys about one order of magnitude performance improvement at SNR$=14$~dB thanks to the dramatically lower number of cycles-6 compared to the CV and OO constructions in the lifted graph. This observation is supported by the fact that short cycles are sub-graphs of the detrimental combinatorial objects over PR channels \cite{EsfTMAG}. {The deeper error floor that is observed in the lower panel is due to lower level of the jitter noise.}

\begin{figure}
\centering
\begin{tikzpicture}

\begin{axis}[%
width=0.4\linewidth,
height=0.36\linewidth,
at={(0,0)},
xmin=12,
xmax=14,
xlabel={SNR [dB]},
ymode=log,
ymin=1e-06,
ymax=1e-01,
yminorticks=true,
xmajorgrids,
ymajorgrids,
yminorgrids,
legend columns=4, 
legend entries={CD,CV,OO},
legend to name= leg,
legend style={
    legend cell align=left, 
    align=left, 
    draw=black,
    /tikz/column 1.5/.style={column sep=0.4pt},
}
]
\addplot [color=red,line width=1pt]
  table[row sep=crcr]{%
12	0.097991858887381\\
12.25   0.057449118046133\\
12.5    0.009176010854817\\
12.75   5.612917232021710e-04\\
13      1.838860244233379e-05\\
13.25   4.464586160108548e-06\\
13.5    6.054274084124830e-06\\
13.75   7.737856173677069e-06\\
14      6.541763907734057e-06\\
};


\addplot [color=teal,line width=1pt]
  table[row sep=crcr]{%
12	    0.121322116689281\\
12.25   0.095697693351425\\
12.5    0.042374382632293\\
12.75   0.008080615106287\\
13      5.745014925373134e-04\\
13.25   5.146105834464043e-05\\
13.5    3.857476255088196e-05\\
13.75   4.538453188602443e-05\\
14      4.357232021709634e-05\\
};

\addplot [color=magenta,line width=1pt]
  table[row sep=crcr]{%
12	    0.100846132971506\\
12.25   0.049165264586160\\
12.5    0.010606187245590\\
12.75   8.569118046132971e-04\\
13      6.760542740841248e-05\\
13.25   4.291071913161466e-05\\
13.5    5.356037991858887e-05\\
13.75   5.902717815879282e-05\\
14      4.922388059701492e-05\\
};

\end{axis}

\begin{axis}[%
width=0.4\linewidth,
height=0.36\linewidth,
at={(3.4in,0)},
xmin=12.75,
xmax=14,
xlabel={SNR [dB]},
ymode=log,
ymin=1e-07,
ymax=1e-04,
yminorticks=true,
xmajorgrids,
ymajorgrids,
yminorgrids,
legend columns=4, 
legend entries={CD,CV,OO},
legend to name= leg,
legend style={
    legend cell align=left, 
    align=left, 
    draw=black,
    /tikz/column 1.5/.style={column sep=0.4pt},
}
]

\addplot [color=red,line width=1pt]
  table[row sep=crcr]{%
12.75   6.6703e-06\\
13      7.3178e-06\\
13.25   6.0499e-06\\
13.5    3.8768e-06\\
13.75   1.9626e-06\\
14      8.2822e-07\\
};

\addplot [color=teal,line width=1pt]
  table[row sep=crcr]{%
12.75   3.5875e-05\\
13      4.2956e-05\\
13.25   4.1562e-05\\
13.5    2.9508e-05\\
13.75   1.7745e-05\\
14      9.0518e-06\\
};

\addplot [color=magenta,line width=1pt]
  table[row sep=crcr]{%
12.75   5.1881e-05\\
13      5.6793e-05\\
13.25   4.6920e-05\\
13.5    3.0349e-05\\
13.75   1.6607e-05\\
14      7.9593e-06\\
};

\end{axis}

\end{tikzpicture}%

\ref{leg}
\caption{BER performance over PR channel of various code constructions for SC-LDPC codes with parameters $\kappa=11$, $z=67$, $\gamma=3$, $m=1$, and $l=5$. {The transition jitter noise is heavier in the top panel (i.e., $70\%$) compared to the bottom panel (i.e., $50\%$)}}
\label{Fig:PR}
\end{figure}
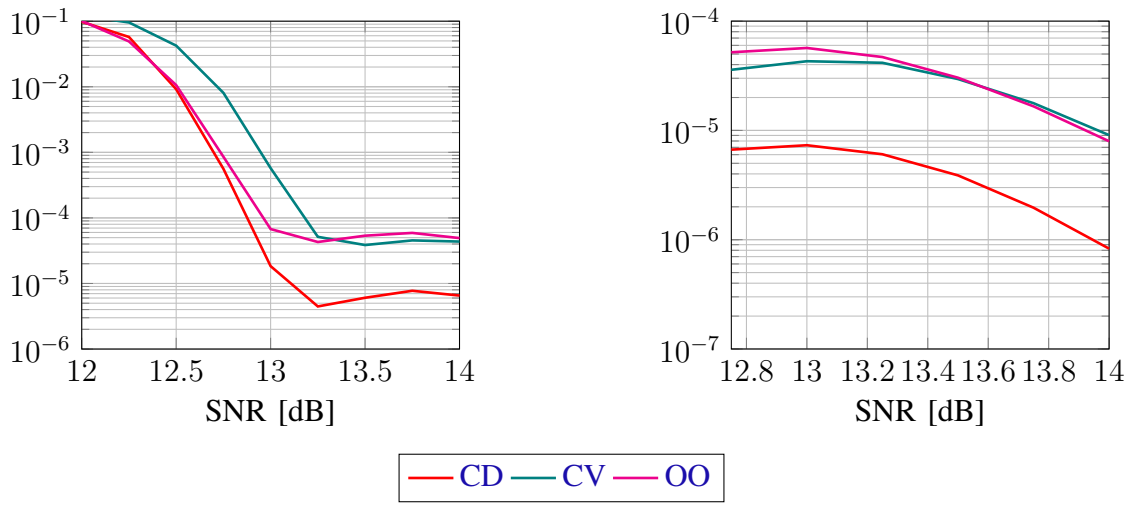

Fig.~\ref{Fig:PR} highlights the importance of optimizing the partitioning matrix to improve reliability of the storage device by an order of magnitude, without any additional cost in the encoding and decoding procedure. One can further improve the performance of the code by changing the code parameters, such as increasing the row weight $\kappa$, column weight $\gamma$, increasing the field size, among others.

\section{Conclusion}
In this paper, we proposed a novel framework to reduce the search space of block LDPC and SC-LDPC codes via only keeping one member from a family of equivalent matrices that share identical finite-length and asymptotic metrics (cycle-6 and thresholds, respectively).
Then, we proposed a design method that identifies all constructions that offer a trade-off between finite-length and asymptotic performances in this reduced search space. Further, we incorporated the block locality feature into our SC-LDPC design, and proposed methods for designing both local CNs and global CNs. Our simulation results verify our theoretical derivations and show the outstanding performance and flexibility of the codes designed using our method.
For future work, one can also incorporate additional constraints over the search space of nonequivalent matrices introduced in this paper, e.g., all columns must be at least of certain weight.\vspace{-0.4cm}

\section{Acknowledgements}
\label{Sec:ACK}

Research supported in part by a grant from ASRC-IDEMA, and in part by a grant from the Israel Science Foundation. The authors would like to thank Lev Tauz and Debarnab Mitra for their assistance in carrying the PR experiments.


\bibliographystyle{IEEEtran}
\bibliography{IEEEabrv,references}

\newpage

\section*{Appendices}

\begin{algorithm}[H]
    \label{star_bar_sol}
     \SetAlgoLined
     \SetKwProg{Fn}{Function}{ is}{end}
     \KwResult{$\text{list}=\{[n_0,\dots,n_{\alpha-1}]:n_i\geq0\text{ and } \sum_{i=0}^{\alpha-1}n_i=\beta \}$}
     $\text{list}=\{\}$, $a=\alpha$, $b=\beta$\;
     $F(a,b,[])$\;
     \Fn{$F(a,b,\textrm{choice})$}{
        \uIf{$a=1$}{
             $\text{choice} = [\text{choice},b]$\;
           list = list $\cup$ choice\;
         }
         \uElse{  
             \For{$k\in\{0,\dots,b\}$}{
                 $\text{choice}'=[ \text{choice},k]$ \;
                 $F(a-1,b-k,\text{choice}')$\;
             }
       }
    }
\caption{Recursive Algorithm for Solving Star and Bar Problem}
\end{algorithm}


%
\vspace{1cm}

\begin{lemma}\label{col_perm}
Consider two binary matrices $\mathbf{A}$ and $\mathbf{B}$ with size $\gamma\times\kappa$. Let $\mathcal{O}(\mathbf{A})$ and $\mathcal{O}(\mathbf{B})$ be the set of overlap parameters for these two matrices, respectively. 
Then, $\mathbf{A}$ and $\mathbf{B}$ are column permuted versions of each other if and only if $\mathcal{O}(\mathbf{A})=\mathcal{O}(\mathbf{B})$.
\end{lemma}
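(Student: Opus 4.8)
The plan is to prove the two implications separately, the converse being the substantive one. For the forward direction I would observe that, by Definition~\ref{def_ov_par}, each overlap parameter $t_{\{i_1,\dots,i_d\}}(\mathbf{A})$ is simply the number of columns of $\mathbf{A}$ exhibiting a fixed pattern (a $1$ in each of the rows $i_1,\dots,i_d$); a column permutation only reorders columns and hence preserves the multiset of columns, so it preserves every such count, giving $\mathcal{O}(\mathbf{A})=\mathcal{O}(\mathbf{B})$.

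For the converse I would rely on the fact already used in Section~\ref{Sec:equivalence} that two binary matrices of the same size are column permutations of one another if and only if they have the same column distribution $\overline{n}(\cdot)$, i.e., the same multiset of columns; thus it suffices to recover $\overline{n}(\mathbf{B})$ from $\mathcal{O}(\mathbf{B})$. The idea is to re-index column types by their supports: writing $S\subseteq\{1,\dots,\gamma\}$ for the set of nonzero rows of a column type (a bijection with the type indices $0,\dots,2^\gamma-1$), let $n_S$ be the number of columns of $\mathbf{B}$ with support exactly $S$, and for $S\neq\emptyset$ let $t_S\in\mathcal{O}(\mathbf{B})$ be the degree-$|S|$ overlap parameter on the rows of $S$, with the convention $t_\emptyset\triangleq\kappa$. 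Since a column contributes to $t_S$ exactly when its support contains $S$, one gets the linear relation $t_S=\sum_{T\supseteq S}n_T$.

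This relation expresses each $t_S$ as a sum of the $n_T$ over all supersets $T$ of $S$, so the system is triangular and invertible by M\"obius inversion on the Boolean lattice: $n_S=\sum_{T\supseteq S}(-1)^{|T|-|S|}t_T$, which rests only on the elementary identity that $\sum_{S\subseteq U\subseteq T}(-1)^{|U|-|S|}$ equals $1$ if $S=T$ and $0$ otherwise. Hence, for every nonempty $S$, $n_S$ is a fixed function of the overlap parameters, so $\mathcal{O}(\mathbf{A})=\mathcal{O}(\mathbf{B})$ forces $n_S(\mathbf{A})=n_S(\mathbf{B})$ for all $S\neq\emptyset$; the only remaining entry, $n_\emptyset$ (the count of all-zero columns), is then recovered from $n_\emptyset=\kappa-\sum_{S\neq\emptyset}n_S$, using that $\mathbf{A}$ and $\mathbf{B}$ share the same number of columns $\kappa$. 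Therefore $\overline{n}(\mathbf{A})=\overline{n}(\mathbf{B})$, and the matrices are column permutations of each other.

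The main obstacle is conceptual rather than computational: recognizing that the map from the column distribution to the overlap parameters is a M\"obius-invertible summation over the Boolean lattice, and being careful that the all-zero column type is invisible to the (degree-$\geq 1$) overlap parameters and so must be pinned down separately via the shared value $\kappa$. Everything else — the forward direction and the verification of the binomial/M\"obius identity — is routine and requires no real calculation.
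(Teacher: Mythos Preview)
Your proof is correct. Both directions are handled soundly, and the M\"obius-inversion argument for the substantive implication is clean: the relation $t_S=\sum_{T\supseteq S}n_T$ is exactly the zeta transform on the Boolean lattice, and your handling of the $S=\emptyset$ case via the shared column count $\kappa$ is the right patch.

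Your route differs genuinely from the paper's. The paper argues the implication $\mathcal{O}(\mathbf{A})=\mathcal{O}(\mathbf{B})\Rightarrow$ ``column-permuted'' by a successive row-by-row construction: fix $t_{\{1\}}$ ones in the first row, then place $t_{\{2\}}$ ones in the second row so that $t_{\{1,2\}}$ of them sit under ones of the first row, and so on, arguing that this determines the matrix up to column order. Your approach is algebraic rather than constructive: you pass through the column distribution $\overline{n}(\cdot)$ and invert the linear map $\overline{n}\mapsto\mathcal{O}$ explicitly via M\"obius inversion. What this buys you is a closed-form recovery formula $n_S=\sum_{T\supseteq S}(-1)^{|T|-|S|}t_T$ and a fully rigorous proof that the map is injective, whereas the paper's inductive construction is described informally and leaves the reader to verify that the process is both well-defined and exhaustive at each step. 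The paper's approach, on the other hand, is more elementary in that it requires no lattice machinery and suggests an actual algorithm for reconstructing a representative matrix from its overlap parameters.
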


\textbf{Proof of Lemma~\ref{col_perm}.} We remind that a degree-$d$ overlap parameter $t_{\{i_1,\dots,i_d\}}$ for a binary matrix is defined as the number of columns such that all the rows with indices in ${\{i_1,\dots,i_d\}}$ have $1$s simultaneously. 
We show in the next lemma that the space of all distinct sets of overlap parameters and the space of all distinct column distributions have a one-to-one correspondence. 

We first prove the first direction, i.e., if $\mathcal{O}(\mathbf{A})=\mathcal{O}(\mathbf{B})$, two matrices $\mathbf{A}$ and $\mathbf{B}$ are column permuted version of each other. To this end, we show that if the order of the columns does not matter, a binary matrix can be uniquely constructed using its overlap parameters. Thus, we conclude that two matrices $\mathbf{A}$ and $\mathbf{B}$ must be column permuted version of each other as they have the same set of overlap parameters (both are column permutation of the uniquely constructed matrix). First, we show how one can successively construct the binary matrix using its overlap parameters. 
For the first row, $t_{\{1\}}$ elements are selected to have value $1$ and the rest are set to $0$. For the second row, $t_{\{2\}}$ elements are selected to have value $1$ such that $t_{\{1,2\}}$ of them belong to the columns that the first row also has $0$ and $(t_{\{2\}}-t_{\{1,2\}})$ of them belong to the columns that the first row has $0$. Then, the rest are set to $0$. The remaining rows are formed successively in a similar fashion without any reference to the specific column order but rather with reference to the relative positions of columns. Therefore, the set of matrices that share the same set of overlap parameters must be column permuted versions of each other.\\
The other direction is more straight forward. In fact, the number of overlaps among a set of rows of a binary matrix does not change via column permutations. Thus, if two binary matrices $\mathbf{A}$ and $\mathbf{B}$ are column permuted version of each other, $\mathcal{O}(\mathbf{A})=\mathcal{O}(\mathbf{B})$.\qed

\vspace{1cm}

\textbf{Proof of Theorem~\ref{theorem:gamma2}.} Equation \eqref{Eq:KExplicit1} is a direct consequence of Lemma~\ref{lemma:gamma2}, and it ensures the distinct column distributions of equivalent matrices are considered only once by imposing the constraint $n_1\leq n_2$. To find the number of nonequivalent binary matrices with $ \gamma=2 $ rows, we first calculate the cardinality of the set $ \mathcal T_{\kappa,2}\subseteq \mathcal S_{\kappa,2} $ defined below
\begin{align}\label{Eq:gamma2form1}
\mathcal T_{\kappa,2} = \left \{ [n_0,n_1,n_2,n_3]\in  \mathcal S_{\kappa,2}\colon n_1= n_2\right \}.
\end{align}
Then, the number of nonequivalent matrices will follow from
\begin{align}\label{Eq:K2}
|\mathcal K_{\kappa,2}| = \left  |\mathcal{T}_{\kappa,2}\right | + \frac12\left| \mathcal S_{\kappa,2}\setminus \mathcal T_{\kappa,2} \right |.
\end{align}
Using \eqref{Eq:gamma2form1}, 
\begin{align*}
	\mathcal T_{\kappa,2}= \bigcup_{i=0}^{\lfloor\kappa/2\rfloor} \left \{ [n_0,n_1,n_2,n_3]\in  \mathcal S_{\kappa,2}\colon n_1= n_2= i\right \}\triangleq \bigcup_{i=0}^{\lfloor\kappa/2\rfloor} \mathcal{Q}_{\kappa,i}.		
	\end{align*}
	By invoking the stars-and-bars method again, $ \left| \mathcal{Q}_{\kappa,i} \right | = \binom{\kappa-2i +2-1}{2-1} = \kappa-2i +1 $. Since the union above is disjoint, we get 
	\begin{align}\label{Eq:SminusT}
	\left|\mathcal T_{\kappa,2}\right | =  \sum_{i=0}^{\lfloor\kappa/2\rfloor} (\kappa-2i +1).
	\end{align}
	We remind that according to \eqref{Eq:StarsnBars}, $\left |\mathcal S_{\kappa,2}\right | =\binom{\kappa+3}{3}$, and combining with
	\eqref{Eq:K2} and \eqref{Eq:SminusT} yields
	\begin{align*}
	|\mathcal K_{\kappa,2}| 
	&= \left |\mathcal T_{\kappa,2}\right | + \frac12\left |\mathcal S_{\kappa,2} \right|- \frac12\left |\mathcal T_{\kappa,2} \right | = \frac12\left( \left|\mathcal S_{\kappa,2}\right |+ \left |\mathcal T_{\kappa,2} \right |\right)= \frac12\left( \sum_{i=0}^{\lfloor\kappa/2\rfloor} (\kappa-2i +1)+\binom{\kappa+3}{3} \right ).
	\end{align*}\qed
	
\vspace{1cm}

\textbf{Proof of Theorem~\ref{theorem:gamma3}. }

The first part is a direct consequence of Lemma~\ref{lemma:gamma3} that ensures the distinct column distributions of equivalent matrices are only considered once by imposing appropriate constraints. For the second part, we find the number of distinct nonequivalent binary matrices with $\gamma=3$ rows. In order to do so, we partition the column distributions in $\mathcal S_{\kappa,3}$ into three classes:
    \begin{itemize}
        \item Class-A column distributions $S_{\kappa,3}^A$: column distributions that are invariant to any row permutation. In other words, for a matrix with column distribution in Class-A, all ($3!$ out of $3!$) row permutations of the matrix results in the same column  distribution.
        \item Class-B column distributions $S_{\kappa,3}^B$: column distributions that are invariant to permutation of one pair of rows. In other words, for a matrix with column distribution in Class-B, $3!/2!=3$ row permutations exist that result in distinct column distributions in Class-B.
        \item Class-C column distributions $S_{\kappa,3}^C$: column distributions that are variant to any row permutation. In other words, for a matrix with column distribution in Class-A, all $3!=6$ row permutations of the matrix result in distinct column distributions in Class-C.
    \end{itemize}
Consequently,
\begin{equation}\label{equ:K_gamma_3_summatio}
    |\mathcal{K}_{\kappa,3}|=|S_{\kappa,3}^A|+|S_{\kappa,3}^B|/3+|S_{\kappa,3}^C|/6.
\end{equation}
We first identify $S_{\kappa,3}^A$ as follows:
\begin{equation*}
    S_{\kappa,3}^A = \left \{\left [n_0,n_1,n_2,n_4,n_6,n_5,n_3,n_7\right ]\in \mathcal S_{\kappa,3}\colon n_1=n_2=n_4, n_6=n_5=n_3  \right \}.
\end{equation*}
Thus,
\begin{equation*}
    S_{\kappa,3}^A = \bigcup_{i,j\in \mathbb N  \colon\atop{ 3(i+j)\leq \kappa}} \left \{\left [n_0,n_1,n_2,n_4,n_6,n_5,n_3,n_7\right ]\in \mathcal S_{\kappa,3} \colon n_1=n_2=n_4=i,n_6=n_5=n_3=j \right \},
\end{equation*}
where the union is disjoint. In view of Lemma~\ref{lemma:StarsnBars},
	\begin{align}\label{Eq:numtype1}
	\begin{split}	
	\left |S_{\kappa,3}^A\right |
	&=\sum_{i,j\in \mathbb N  \colon\atop{ 3(i+j)\leq \kappa}} \left |\left \{[n_0,n_7] \colon n_0+n_7=\kappa-3(i+j) \right \}\right |=\sum_{i,j\in \mathbb N  \colon\atop{ 3(i+j)\leq \kappa}}  (\kappa-3(i+j)+1)=a_\kappa.
	\end{split}
\end{align}
Next, we identify $S_{\kappa,3}^B$. We define $\mathcal T_{\kappa,3}^{1,2}$ as the the set of column distributions that are invariant to swapping the first and second row:
\begin{equation*}
    \mathcal T_{\kappa,3}^{1,2} = \left \{[ n_0,n_1,n_2,n_4,n_6,n_5,n_3,n_7\right ]\in \mathcal S_{\kappa,3}\colon n_1=n_2, n_6=n_5\},
\end{equation*}
and similarly, $\mathcal T_{\kappa,3}^{1,3}$ and $\mathcal T_{\kappa,3}^{2,3}$ can be defined. Note that $S_{\kappa,3}^A$ is a subset of $\mathcal T_{\kappa,3}^{1,2}$, $\mathcal T_{\kappa,3}^{1,3}$, and $\mathcal T_{\kappa,3}^{2,3}$. Therefore,
\begin{equation*}
    S_{\kappa,3}^B=(\mathcal T_{\kappa,3}^{1,2}\setminus S_{\kappa,3}^A)\cup(\mathcal T_{\kappa,3}^{1,3}\setminus S_{\kappa,3}^A)\cup(\mathcal T_{\kappa,3}^{2,3}\setminus S_{\kappa,3}^A).
\end{equation*}
Because of the disjoint property and since $|\mathcal T_{\kappa,3}^{1,2}\setminus S_{\kappa,3}^A|=|\mathcal T_{\kappa,3}^{1,3}\setminus S_{\kappa,3}^A|=|\mathcal T_{\kappa,3}^{2,3}\setminus S_{\kappa,3}^A|$ (due to the symmetry), we have $|S_{\kappa,3}^B|=3|\mathcal T_{\kappa,3}^{1,2}\setminus S_{\kappa,3}^A|=3(|\mathcal T_{\kappa,3}^{1,2}|-a_\kappa)$. Besides
\begin{align*}
	\mathcal T_{\kappa,3}^{1,2}
	=\bigcup_{i,j\in \mathbb N  \colon\atop{ 2(i+j)\leq \kappa}} \{[n_0,n_1,n_2,n_3,n_4,n_5,n_4,n_7]\in \mathcal S_{\kappa,3}:n_1=n_2=i,n_5=n_6=j\},
\end{align*}
where the union is disjoint. In view of Lemma~\ref{lemma:StarsnBars},
\begin{equation*}
	\left |\mathcal T_{\kappa,3}^{1,2}\right |{=}
	\hspace{-0.3cm}\sum_{i,j\in \mathbb N  \colon\atop{ 2(i+j)\leq \kappa}}\hspace{-0.2cm} \left |\left \{[n_0,n_3,n_4,n_7] \colon n_0+n_3+n_4+n_7=\kappa-2(i+j) \right \}\right |=\hspace{-0.3cm}\sum_{i,j\in \mathbb N  \colon\atop{ 2(i+j)\leq \kappa}}\hspace{-0.2cm}  \binom{\kappa-2(i+j)+3}{3}.
\end{equation*}
Thus, the number column distributions in Class-B is:  
\begin{align}\label{Eq:numtype2}
	|\mathcal{S}_{\kappa,3}^B|=3\left(\sum_{i,j\in \mathbb N  \colon\atop{ 2(i+j)\leq \kappa}}  \binom{\kappa-2(i+j)+3}{3}-A_\kappa\right)=3b_\kappa.
\end{align}
Finally, we identify $\mathcal{S}_{\kappa,3}^C$, i.e., column distributions that are variant to any permutations. We remind that the total number of column distributions is $ \left |\mathcal{S}_{\kappa,3}\right | $, $a_k$ of them belong to Class-A, and $3b_k$ of them belong to Class-B. As a result,
\begin{equation}\label{Eq:numtype3}
    |\mathcal{S}_{\kappa,3}^C|=\binom{\kappa+7}{7}-a_k-3b_k=6c_k.
\end{equation}
Combining \eqref{equ:K_gamma_3_summatio}-\eqref{Eq:numtype3} completes the proof.\qed

\vspace{1cm}

\textbf{Proof of Proposition~\ref{prop:gamma2searcharea}. }If matrix $\mathbf{P}_C$ has an all-zero row, then its column distribution $ (n_0,n_1,n_2,n_3) $ satisfies $ n_3=0 $. Further, if matrix $P_C$ has an all-zero row, $n_1=0$. This is because to have an all-zero row in $P_C$ either $n_1$ or $n_2$ must be zero, and by definition $ n_1\leq  n_2 $. Thus, using the stars-and-bars method, there exists $\binom{\kappa +2-1}{2-1}= \kappa+1$ column distributions for $\mathbf{P}_C$ among the nonequivalent choices that have at least one all-zero row. Similarly, there exists $\kappa+1 $ column distributions for $\mathbf{P}_C$ among the nonequivalent choices that have at least one all-one row. These two subsets of column distributions have only one common member which is a $2\times\kappa$ matrix with one all-zero row and one all-one row. Finally, using the principle of inclusion and exclusion, the number of nonequivalent matrices that do not have all-zero or all-one rows is $|\mathcal{K}_{\kappa,2}|-2(\kappa+1)+1$.\qed

\vspace{1cm}

\textbf{Proof of Proposition~\ref{prop:gamma3searcharea}. }First, we identify the column distributions that yield matrices with all-zero rows. Since we already consider nonequivalent matrices, a single all-zero row is sufficient for the matrix to be invalid for our purpose. Without loss of generality, we can assume the first row is zero, and thus $ n_4=n_5=n_6=n_7=0$, and the other possibilities can be considered as row permutation of this setting. Next, we quantify the possible realizations for $[n_0,n_1,n_2,n_3,0,0,0,0]$ that result in nonequivalent matrices. This corresponds to discarding the first row that is all-zero, and identifying the cardinality of the set of column distributions for all distinct $2\times\kappa$ nonequivalent binary matrices which is $|\mathcal{K}_{\kappa,2}|$ and is given in Theorem~\ref{theorem:gamma2}. Similarly, the number of column distributions that result in nonequivalent matrices with all-one rows is $|\mathcal{K}_{\kappa,2}|$. Lastly, there are $\kappa+1$ column distribution that correspond to matrices with both a full-zero row (the first row without loss of generality) and a full-one row (the second row without loss of generality), which are realizations in form of $[0,0,n_2,n_3,0,0,0,0,0]$ and have $\kappa+1$ possibilities.\qed

\vspace{1cm}

\textbf{Proof of Proposition~\ref{Prop:thU<thB}.}

\begin{fact}\label{Fact: Threshold monotonicity}
Let $\sigma^*_1$ and $\sigma^*_2$ be the EXIT thresholds of  $(\gamma_1,\kappa_1)$-regular and $(\gamma_2,\kappa_2)$-regular protographs.  If $\kappa_1=\kappa_2$ and $\gamma_1\leq\gamma_2$, then $\sigma^*_1\leq\sigma^*_2$, and if $\gamma_1=\gamma_2$ and $\kappa_1\leq\kappa_2$, then $\sigma^*_1\geq\sigma^*_2$.
\end{fact}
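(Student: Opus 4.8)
The plan is to reduce the per-edge EXIT recursion on a regular $(\gamma,\kappa)$ protograph to a single scalar recursion, to record how this recursion depends monotonically on each of the degree parameters $\gamma$ and $\kappa$, and then to transfer these monotonicities to the thresholds by a squeeze argument comparing the iterate sequences of the two codes at a common channel parameter.

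First I would exploit the symmetry of the all-ones $\gamma\times\kappa$ protograph, i.e. the complete bipartite graph $K_{\gamma,\kappa}$, whose automorphism group $S_\gamma\times S_\kappa$ acts transitively on the edges. Starting the EXIT iteration from the symmetric (all-zero) initialization, all edge EXIT values remain equal throughout, so the vector recursion collapses to a scalar one. Writing $x^{(n)}$ for the common CN$\to$VN value after $n$ rounds and combining \eqref{Eq:J VN} (with $d_v=\gamma$, hence $\gamma-1$ equal incoming messages) with \eqref{Eq:J CN} (with $d_c=\kappa$, hence $\kappa-1$ equal incoming messages) yields $x^{(n+1)}=g(x^{(n)};\sigma,\gamma,\kappa)$ with $x^{(0)}=0$, where
\[
g(x;\sigma,\gamma,\kappa)=1-J\!\left(\sqrt{\kappa-1}\; J^{-1}\!\left(1-J\!\left(\sqrt{(\gamma-1)\,(J^{-1}(x))^2+s_{ch}^2}\right)\right)\right),
\]
and $s_{ch}^2=4/\sigma^2$. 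By the characterization of the threshold, $\sigma<\sigma^*$ holds if and only if $x^{(n)}\to 1$.

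Next I would establish three monotonicities of $g$. Non-decreasingness in $x$ is immediate from the stated monotonicity of $J_{out}^{(V)}$ and $J_{out}^{(C)}$ (equivalently from $J,J^{-1}$ increasing together with the map $x\mapsto 1-x$ applied twice). The key new facts concern the degrees and rely on nonnegativity of $J^{-1}$. Since $J^{-1}(x)\ge 0$ on $[0,1)$, incrementing $\gamma$ merely adds the nonnegative term $(J^{-1}(x))^2$ under the VN square root, so the inner VN output is non-decreasing in $\gamma$; as the CN step is increasing in its input, $g$ is non-decreasing in $\gamma$. For $\kappa$, since $J^{-1}(1-v)\ge 0$, incrementing $\kappa$ enlarges the CN argument $\sqrt{\kappa-1}\,J^{-1}(1-v)$ and hence enlarges $J(\cdot)$, and the outer $1-J(\cdot)$ flips this into a decrease; thus $g$ is non-increasing in $\kappa$.

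Finally I would transfer these to the thresholds. For the first claim ($\kappa_1=\kappa_2$, $\gamma_1\le\gamma_2$), non-decreasingness in $\gamma$ gives $g(\cdot;\sigma,\gamma_1,\kappa)\le g(\cdot;\sigma,\gamma_2,\kappa)$ pointwise, and an induction that also uses non-decreasingness of $g$ in $x$ shows $x^{(n)}_1\le x^{(n)}_2$ for all $n$ at any fixed $\sigma$; hence whenever $\sigma<\sigma^*_1$ (so $x^{(n)}_1\to1$), the squeeze $x^{(n)}_1\le x^{(n)}_2\le 1$ forces $x^{(n)}_2\to1$, i.e. $\sigma<\sigma^*_2$, giving $\sigma^*_1\le\sigma^*_2$. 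The second claim ($\gamma_1=\gamma_2$, $\kappa_1\le\kappa_2$) is symmetric: non-increasingness in $\kappa$ yields $x^{(n)}_1\ge x^{(n)}_2$, so convergence of the $\kappa_2$-chain to $1$ forces convergence of the $\kappa_1$-chain, giving $\sigma^*_1\ge\sigma^*_2$. The main obstacle is the parameter-monotonicity step, and in particular getting the $\kappa$ direction right, where the nonnegativity of $J^{-1}$ and the CN sign flip $1-J(\cdot)$ must be combined correctly; the symmetry reduction and the final squeeze are then routine.
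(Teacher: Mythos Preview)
The paper does not prove this statement; it is introduced as a \emph{Fact} inside the proof of Proposition~\ref{Prop:thU<thB} and invoked without justification, as a known property of regular LDPC protographs. Your argument is correct and supplies exactly what the paper omits. The symmetry reduction of the edge-EXIT recursion on the all-ones $\gamma\times\kappa$ protograph to a scalar map $g$ is valid because the initialization is symmetric and the updates \eqref{Eq:J VN}--\eqref{Eq:J CN} are permutation-equivariant; your formula for $g$ matches the composition of \eqref{Eq:J VN} with $d_v=\gamma$ and \eqref{Eq:J CN} with $d_c=\kappa$. The degree monotonicities are derived correctly: nonnegativity of $J^{-1}$ makes the VN square-root argument non-decreasing in $\gamma$, and the outer $1-J(\cdot)$ in the CN step flips the $\sqrt{\kappa-1}$ growth into non-increase in $\kappa$. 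The inductive comparison of the two iterate sequences (using that $g$ is non-decreasing in $x$) together with the squeeze $x_1^{(n)}\le x_2^{(n)}\le 1$ then yields the threshold ordering, and the $\kappa$ case is handled symmetrically. There is nothing further to compare, since the paper simply asserts the result.
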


Let $\nu=a\gamma_l +b$. Consider a $ (\gamma_l-1,\kappa-a) $-regular protograph. Assume that we apply \eqref{Eq:J VN} and \eqref{Eq:J CN} on this protograph, and let $ x_\ell(\sigma) $ and $ u_\ell(\sigma) $ denote the resulting VN$\rightarrow$CN and CN$\rightarrow$VN EXIT values at iteration $ \ell $, respectively, given the channel parameter $\sigma$. We construct $\overline{\mathbf{B}}_{B}$ as follows:
\begin{align*} 
\overline{\mathbf{B}}_B=&\left(\begin{array}{l:l:}
\mathbf{1}_{\gamma_l\times(\kappa-\nu)}-\mathbf{Q}(\gamma_l,\kappa-\nu;1)& 
\mathbf{1}_{\gamma_l\times b}-\mathbf{S}(\gamma_l,b)
\end{array}\right. \\
&\hspace{2mm}\left.\begin{array}{l:l:l:l:l}
\mathbf{Q}(\gamma_l,\kappa-\nu,1)&
\mathbf{S}(\gamma_l,b)&\mathbf{Q}(\gamma_l,a,\gamma_l) &
\dots &
\mathbf{Q}(\gamma_l,a,1)
\end{array} \right).
\end{align*}
In other words, $ \overline{\mathbf{B}}_B$ is obtained from $\mathbf{B}_B$ by: 1) replacing the leftmost $\kappa-\nu$ entries in the first row with zeros such that all VNs are $\gamma_l-1$ regular, and 2) adding $\kappa-\nu+b$ columns of degree $1$ such that all CNs are $\kappa-a$ regular. We call the added degree-$1$ columns (VNs) ``auxiliary VNs" (see Fig.~\ref{Fig:Th_Proposition} for an example with $\kappa=5,\gamma_l=3,\nu=4$). Thus,  $\overline{\mathbf{B}}_B$ is $(\gamma_l-1,\kappa-a)$-regular except for the auxiliary VNs.
Next, we apply \eqref{Eq:J VN} and \eqref{Eq:J CN} on $\overline{\mathbf{B}}_B$ with a channel parameter $\sigma$ for non-auxiliary VNs, while the auxiliary VNs pass through a channel with a parameter $\sigma_\ell $ that changes in every iteration $\ell$ in a way that $ J(\sigma_\ell)=x_\ell(\sigma)$. It follows that the EXIT values passing over all edges of $\overline{\mathbf{B}}_B$ equal to those passing over a $ (\gamma_l-1,\kappa-a) $-regular protograph, i.e., $ x_\ell(\sigma) $ and $ u_\ell(\sigma) $ for VN$\rightarrow$CN and CN$\rightarrow$VN messages, respectively. We match the edges in ${\mathbf{B}}_B$ to the edges in $\overline{\mathbf{B}}_B$ as follows. The edges connecting the $\nu$ rightmost columns in ${\mathbf{B}}_B$ match their identical edges in $\overline{\mathbf{B}}_B$, and the edges connecting bottom-most $\gamma_l-1$ CNs with the leftmost $\kappa-\nu$ VNs in ${\mathbf{B}}_B$ match their identical edges in $\overline{\mathbf{B}}_B$ as well. Finally, the edges connecting the top CN with the leftmost $\kappa-\nu$ VNs each matches one arbitrary edge connected to an auxiliary VN (see Fig.~\ref{Fig:Th_Proposition}).
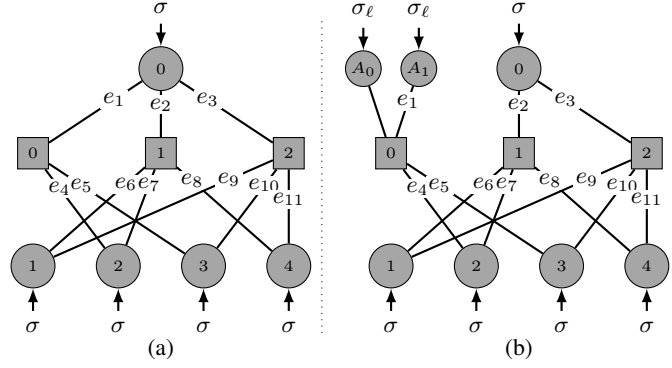
\begin{figure}
    \centering
    \begin{tikzpicture}[>=latex]
\tikzstyle{cnode}=[rectangle,draw,fill=gray!70!white,minimum size=4mm]
\tikzstyle{vnode}=[circle,draw,fill=gray!70!white,minimum size=3mm]
\pgfmathsetmacro{\x}{1.7}
\pgfmathsetmacro{\y}{1.5}

	\foreach \c in {0,1,2}
	{
		\node[cnode] (c\c) at (\c*\x,0) {\tiny$\c$};	
	}
    \foreach \v in {1,2,3,4}
	{
		\pgfmathtruncatemacro{\k}{\v -1}
		\node[vnode] (v\v) at (\k*\x*2/3,-\y) {\tiny$\v$};	
		\node (ch\v) [below=2*\y mm of v\v] {\footnotesize $\sigma$};
		\draw [thick,->] (ch\v)--(v\v);
	}
	\node[vnode] (v0) at (\x,0.75*\y) {\tiny $0$};	
    \node (ch0) [above=2*\y mm of v0] {\footnotesize $\sigma$};
    \draw [thick,->] (ch0)--(v0);
    \draw [thick] (c0)--(v2) node[fill=white,inner sep=1pt,pos = 0.24] {\footnotesize $e_4$};
    \draw [thick] (c0)--(v3) node[fill=white,inner sep=1pt,pos = 0.24] {\footnotesize $e_5$};
    \draw [thick] (c0)--(v0) node[fill=white,inner sep=1pt,pos = 0.7] {\footnotesize $e_1$};
    \draw [thick] (c1)--(v0) node[fill=white,inner sep=1pt,pos = 0.7] {\footnotesize $e_2$};
    \draw [thick] (c1)--(v1) node[fill=white,inner sep=1pt,pos = 0.2] {\footnotesize $e_6$};
    \draw [thick] (c1)--(v2) node[fill=white,inner sep=1pt,pos = 0.2] {\footnotesize $e_7$};
    \draw [thick] (c1)--(v4) node[fill=white,inner sep=1pt,pos = 0.16] {\footnotesize $e_8$};
    \draw [thick] (c2)--(v0) node[fill=white,inner sep=1pt,pos = 0.7] {\footnotesize $e_3$};
    \draw [thick] (c2)--(v1) node[fill=white,inner sep=1pt,pos = 0.2] {\footnotesize $e_9$};
    \draw [thick] (c2)--(v3) node[fill=white,inner sep=1pt,pos = 0.2] {\footnotesize $e_{10}$};
    \draw [thick] (c2)--(v4) node[fill=white,inner sep=1pt,pos = 0.4] {\footnotesize $e_{11}$};
    \node [below=14*\y mm of c1] {\footnotesize(a)};
    \begin{scope}[shift={(2.8*\x,0)}]
    \foreach \c in {0,1,2}
	{
		\node[cnode] (c\c) at (\c*\x,0) {\tiny $\c$};	
	}
    \foreach \v in {1,2,3,4}
	{
		\pgfmathtruncatemacro{\k}{\v -1}
		\node[vnode] (v\v) at (\k*\x*2/3,-\y) {\tiny $\v$};	
		\node (ch\v) [below=2*\y mm of v\v] {\footnotesize $\sigma$};
		\draw [thick,->] (ch\v)--(v\v);
	}
	\node[vnode] (v0) at (\x,0.75*\y) {\tiny $0$};	
    \node (ch0) [above=2*\y mm of v0] {\footnotesize $\sigma$};
    \draw [thick,->] (ch0)--(v0);

    \node (z) [inner sep = 0pt]at (c0|-v0) {};
    \node[vnode,left = 1mm of z,inner sep = 1pt] (a0) {\tiny $A_0$};	
    \node (cha0) [above=2*\y mm of a0] {\footnotesize $\sigma_\ell$};
    \draw [thick,->] (cha0)--(a0);
    \node[vnode,right = 1mm of z,inner sep = 1pt] (a1) {\tiny $A_1$};	
    \node (cha1) [above=2*\y mm of a1] {\footnotesize $\sigma_\ell$};
    \draw [thick,->] (cha1)--(a1);
    \draw [thick] (c0)--(a1) node[fill=white,inner sep=1pt,pos = 0.7] {\footnotesize $e_1$};
    \draw [thick] (c0)--(a0);
    \draw [thick] (c0)--(v2) node[fill=white,inner sep=1pt,pos = 0.24] {\footnotesize $e_4$};
    \draw [thick] (c0)--(v3) node[fill=white,inner sep=1pt,pos = 0.24] {\footnotesize $e_5$};
    \draw [thick] (c1)--(v0) node[fill=white,inner sep=1pt,pos = 0.7] {\footnotesize $e_2$};
    \draw [thick] (c1)--(v1) node[fill=white,inner sep=1pt,pos = 0.2] {\footnotesize $e_6$};
    \draw [thick] (c1)--(v2) node[fill=white,inner sep=1pt,pos = 0.2] {\footnotesize $e_7$};
    \draw [thick] (c1)--(v4) node[fill=white,inner sep=1pt,pos = 0.16] {\footnotesize $e_8$};
    \draw [thick] (c2)--(v0) node[fill=white,inner sep=1pt,pos = 0.7] {\footnotesize $e_3$};
    \draw [thick] (c2)--(v1) node[fill=white,inner sep=1pt,pos = 0.2] {\footnotesize $e_9$};
    \draw [thick] (c2)--(v3) node[fill=white,inner sep=1pt,pos = 0.2] {\footnotesize $e_{10}$};
    \draw [thick] (c2)--(v4) node[fill=white,inner sep=1pt,pos = 0.4] {\footnotesize $e_{11}$};
    \node (b) [below=14*\y mm of c1] {\footnotesize(b)};
    \end{scope}
    \node (Y1) [left=1mm of cha0] {};
    \node (Y0) at(Y1|-b) {};
    \draw [dotted] (Y1)--(Y0);

    \end{tikzpicture}
    \caption{Graph constructions for proof of Proposition~\ref{Prop:thU<thB}, with $\kappa=5,\gamma_l=3,\nu=4$: (a) corresponds to ${\mathbf{B}}_B$ and (b) corresponds to $\overline{\mathbf{B}}_B$. $A_0$ and $A_1$ are auxiliary VNs. The edge matching is illustrated via edge labels $\{e_i\}_{i=1}^{11}$.}
    \label{Fig:Th_Proposition}
\end{figure}

Given a channel parameter $ \sigma $, let $ y_\ell(\sigma,e) $ and $ w_\ell(\sigma,e) $ be the VN$\rightarrow$CN and CN$\rightarrow$VN EXIT values, respectively, over some edge $ e $ in the protograph ${\mathbf{B}}_B$. 
From the monotonicity of \eqref{Eq:J VN} and \eqref{Eq:J CN} in their arguments and in node degrees, it can be shown by mathematical induction that for any $\sigma$ and every edge $ e $ 
\begin{align}
\label{Eq:y>x}
y_\ell(\sigma,e) \geq x_\ell(\sigma),\quad w_\ell(\sigma,e) \geq u_\ell(\sigma),\quad \forall \ell\geq 0\,.
\end{align}

If we mark $ \sigma^*(d_v,d_c) $ as the asymptotic threshold of a regular $ (d_v,d_c) $ protograph, then \eqref{Eq:y>x} implies that if the channel parameter satisfies $ \sigma<\sigma^*(\gamma_l-1,\kappa-a) $ then the EXIT algorithm over ${\mathbf{B}}_B$ will converge to $ 1 $, thus
\begin{align}\label{Eq:H_B>}
 \sigma^*({\mathbf{B}}_B) \geq \sigma^*(\gamma_l-1,\kappa-a)\,.
\end{align}
From the sub-matrix lemma in \cite[Lemma~1]{RamCass18b} we have \begin{align}\label{Eq:H_U<}
\sigma^*({\mathbf{B}}_U)\leq \sigma^*(\gamma_l-1,\nu)\,.
\end{align}
Since $ \kappa-a\leq \nu $, combining \eqref{Eq:H_B>}--\eqref{Eq:H_U<} with Fact~\ref{Fact: Threshold monotonicity}, which holds since $\kappa-a = \kappa-\lfloor\nu/\gamma_l\rfloor\leq \nu$, completes the proof.\qed

\vspace{1cm}
\textbf{Proof of Proposition~\ref{Prop:C6U<C6B}.} Consider a binary matrix $\mathbf{B}$ with $\gamma_l=3$ rows. The number of cycles-$6$ can be expressed in terms of the overlap parameters of proto-matrix $\mathbf{B}$ as described in (\ref{eq:cyclesA}) and (\ref{equ_A}) as follows:
\begin{equation*}
    F(\mathbf{B})=A(t_{\{1,2,3\}},t_{\{1,2\}},t_{\{1,3\}},t_{\{2,3\}})\,.
\end{equation*}
According to our constructions, no two zeros (out of the $\nu$ zeros) are located in the same column and thus
$ t_{\{1,2,3\}} = \kappa-\nu=\kappa-3a-b\geq 1$.
In the balanced construction, we have $t_{\{1,2\}}=\kappa-2a-b$, $t_{\{1,3\}}=\kappa-2a-(b>0)$, and $t_{\{2,3\}}=\kappa-2a-(b>1)$, where $(\mathrm{cond})$ is $1$ if $\mathrm{cond}$ is true and $0$ otherwise. Thus,
\begin{equation}\label{Eq:F_Balgamma3}
\begin{split}
    F(\mathbf{B}_B)=&(\kappa-\nu)(\kappa-\nu-1)(\kappa-2a-(b>1)-2)\\
    +&(\kappa-\nu)(a+(b>1))(\kappa-2a-(b>1)-1)\\
    +&a(\kappa-\nu)(\kappa-2a-(b>1)-1)a(\kappa-\nu)(\kappa-2a-(b>1)).
\end{split}
\end{equation}
In the unbalanced construction, we have $t_{\{1,2\}}=t_{\{1,3\}}=\kappa-\nu$ and  $t_{\{2,3\}}=\kappa$. Thus,
\begin{align}\label{Eq:F_Unbalgamma3}
    F(\mathbf{B}_U) =(\kappa-\nu)(\kappa-\nu-1)(\kappa-2)\,.
\end{align}
Comparing \eqref{Eq:F_Balgamma3} and \eqref{Eq:F_Unbalgamma3} completes the proof.\qed

\vspace{1cm}
\textbf{Proof of Proposition~\ref{pro_FB_FU_gamma4}.} Consider a binary matrix $\mathbf{B}$ with $\gamma_l=4$ rows. The number of cycles-$6$ in $\mathbf{B}$ is
\begin{equation*}
\begin{split}
F(\mathbf{B})
=&A(t_{\{1,2,3\}},t_{\{1,2\}},t_{\{1,3\}},t_{\{2,3\}})+A(t_{\{1,2,4\}},t_{\{1,2\}},t_{\{1,4\}},t_{\{2,4\}})\\
+&A(t_{\{1,3,4\}},t_{\{1,3\}},t_{\{1,4\}},t_{\{3,4\}})+A(t_{\{2,3,4\}},t_{\{2,3\}},t_{\{2,4\}},t_{\{3,4\}}).
\end{split}
\end{equation*}

Again, zeros are never located in the same column according to our constructions. In the balanced construction, we have $t_{\{1,2\}}=t_{\{1,3\}}=t_{\{1,4\}}=t_{\{2,3\}}=t_{\{2,4\}}=t_{\{3,4\}}=\kappa-\nu/2 $, and $t_{\{1,2,3\}}=t_{\{1,2,4\}}=t_{\{1,3,4\}}=t_{\{2,3,4\}}=\kappa-3\nu/4$, thus
\begin{align}
\label{Eq:F_Balgamma4}
\begin{split}
    F(\mathbf{B}_B)
    =&4(\kappa-3\nu/4)(\kappa-3\nu/4-1)(\kappa-\nu/2-2)+4(\kappa-3\nu/4)(\nu/4)(\kappa-\nu/2-1)\\
    +&4(\nu/4)(\kappa-3\nu/4)(\kappa-\nu/2-1)+4(\nu/4)(\nu/4)(\kappa-\nu/2).
\end{split}
\end{align}
In the unbalanced construction, we have $t_{\{1,2,3\}}=t_{\{1,2,4\}}=t_{\{1,3,4\}}=\kappa-\nu$, $t_{\{2,3,4\}}=\kappa$, $t_{\{1,2\}}=t_{\{1,3\}}=t_{\{1,4\}}=\kappa-\nu$, and  $t_{\{2,3\}}=t_{\{2,4\}}=t_{\{3,4\}}=\kappa$, thus
\begin{align}
\label{Eq:F_Unbalgamma4}
\begin{split}
    F(\mathbf{B}_U)
    &\!=\!3(\kappa\!-\!\nu)(\kappa\!-\!\nu\!-\!1)(\kappa\!-\!2)\!+\!\kappa(\kappa\!-\!1)(\kappa\!-\!2).
\end{split}
\end{align}
In view of (\ref{Eq:F_Balgamma4}) and (\ref{Eq:F_Unbalgamma4}),  $F(\mathbf{B}_B)-F(\mathbf{B}_U)=\nu^2(3/2-\nu)<0
$ since $\nu \geq \gamma_l =4$.\qed

\end{document}